\definecolor{hypercolor}{rgb}{0,0.2,0.7}
\newif\iffancyfont%
  \renewcommand{\dotsb}{{\mathinner{\cdotp\cdotp\cdotp}}}
  \DeclareRobustCommand{\defn}{\mathrel{{\vdotdot}{\equal}}}
  \let\mathbb\vmathbb
  \newcommand{\defn}{\coloneqq}
\setlist[1]{labelindent=\parindent}
\setlist[description]{font=\sffamily\bfseries,align=right,labelsep=1em}
\numberwithin{equation}{section}
\setlist[enumerate,1]{label=(\roman*),font=\normalfont}
\newcounter{and}
\newdimen{\instindent}
\newcommand{\institute}[1]{\newcommand{\@institute}{#1}}
\newcommand{\inst}[1]{\unskip\smash{$^{#1}$}\setcounter{and}{1}\ignorespaces}
\newcommand{\email}[1]{\href{mailto:#1}{#1}}
\renewcommand{\maketitle}{
  {
    \raggedright%
    \LARGE%
    \noindent%
    \bfseries%
    \sffamily%
    \@title%
    \par
  }

  \vspace{1.5\baselineskip}

  {
    \raggedright%
    \renewcommand{\and}{\unskip, \ignorespaces}%
    \noindent\ignorespaces\@author\par
  }

  \vspace{0.5\baselineskip}

  {
    \small%
    \parindent=0pt%
    \parskip=0pt%
    \setcounter{and}{1}%
    \renewcommand{\and}{%
      \par\stepcounter{and}%
      \hangindent\instindent%
      \noindent%
      \hbox to \instindent{\hss\smash{$^{\theand}$\enspace}}\ignorespaces%
    }%
    \setbox0=\vbox{\@institute}%
    \ifnum\value{and}>9\relax\setbox0=\hbox{$^{88}$\enspace}%
    \else\setbox0=\hbox{$^{8}$\enspace}\fi%
    \instindent=\wd0\relax%
    \ifnum\value{and}=1\relax%
    \else%
      \setcounter{and}{1}%
      \hangindent\instindent%
      \noindent%
      \hbox to \instindent{\hss\smash{$^{\theand}$}\enspace}\ignorespaces%
    \fi%
    \ignorespaces%
    \@institute\par
  }
}
\renewenvironment{abstract}{
  \addvspace{1.5\baselineskip}%
  \topsep=0pt\partopsep=0pt%
  \trivlist\item[\hspace{\labelsep}\bfseries\sffamily Abstract.]
}{}
\newenvironment{acknowledgments}{
  \addvspace{1.5\baselineskip}%
  \topsep=0pt\partopsep=0pt%
  \trivlist\item[\hspace{\labelsep}\bfseries\sffamily Acknowledgments.]
}{}
\newcommand{\ie}{\textit{i.e.}}
\newcommand{\eg}{\textit{e.g.}}
\newcommand{\viz}{\textit{viz.}}
\newcommand{\cf}{\textit{cf.}}
\newcommand{\dif}{\mathrm{d}}
\newcommand{\e}{\mathrm{e}}
\newcommand{\im}{\mathrm{i}}
\newcommand{\field}[1][K]{{\mathds{#1}}}
\newcommand{\RR}{{\field[R]}}
\newcommand{\CC}{{\field[C]}}
\renewcommand{\Re}{\operatorname{Re}}
\renewcommand{\Im}{\operatorname{Im}}
\DeclareMathOperator{\supp}{supp}
\DeclareMathOperator{\spec}{sp}
\DeclareMathOperator{\reso}{rs}
\DeclareMathOperator{\dom}{dom}
\DeclareMathOperator{\Dom}{Dom}
\DeclareMathOperator{\Ran}{Ran}
\DeclareMathOperator{\Ker}{Ker}
\DeclareMathOperator*{\slim}{s-lim}
\DeclareMathOperator*{\wlim}{w-lim}
\DeclareMathOperator{\sgn}{sgn}
\newcommand{\comp}{\mathrm{c}}
\newcommand{\loc}{\mathrm{loc}}
\newcommand{\bnd}{\mathrm{b}}
\newcommand{\conj}[1]{\overline{#1}}
\DeclarePairedDelimiter{\abs}{\lvert}{\rvert}
\DeclarePairedDelimiter{\norm}{\lVert}{\rVert}
\DeclarePairedDelimiter{\jnorm}{\langle}{\rangle}
\DeclarePairedDelimiter{\floor}{\lfloor}{\rfloor}
\newcommand{\one}{\mathds{1}}
\newcommand{\mx}{\mathrm{mx}}
\newcommand{\mn}{\mathrm{mn}}
\newcommand{\Feyn}{\mathrm{F}}
\newcommand{\aFeyn}{{\overline{\mathrm{F}}}}
\newcommand{\FeynFeyn}{{\Feyn/\mkern1mu\aFeyn}}
\newcommand{\retadv}{{\vee\mkern-2mu/\mkern-2mu\wedge}}
\newcommand{\PJ}{\mathrm{PJ}}
\newcommand{\en}{\mathrm{en}}
\newcommand{\dyn}{\mathrm{dyn}}
\newcommand{\Hen}[1][]{\mathcal{H}_{\en\ifx\\#1\\\else,#1\fi}}
\newcommand{\Hdyn}[1][]{\mathcal{H}_{\dyn\ifx\\#1\\\else,#1\fi}}
\newcommand{\Hne}[1][]{\mathcal{H}^{\mathrlap{*}}{}_{\en\ifx\\#1\\\else,#1\fi}}
\newcommand{\KG}{K}
\newcommand{\twomat}[4]{\begin{pmatrix} #1 & #2 \\ #3 & #4 \end{pmatrix}}
\newcommand{\aotimes}{{\mathop{\otimes}\limits^{
  \vbox to .15ex {\kern-2\ex@\hbox{\tiny alg}\vss}}}}
\newcommand{\hyp}{-\penalty0\hskip0pt\relax}
\newcommand{\manlabel}[2]{#2\def\@currentlabel{#2}\label{#1}}
\newtheoremstyle{nonumberplainnoparens}%
  {\item[\theorem@headerfont\hskip\labelsep ##1\theorem@separator]}%
  {\item[\theorem@headerfont\hskip\labelsep ##1 ##3\theorem@separator]}
\theoremstyle{plain}
\newtheorem{theorem}{Theorem}[section]
\newtheorem{proposition}[theorem]{Proposition}
\newtheorem{lemma}[theorem]{Lemma}
\newtheorem{corollary}[theorem]{Corollary}
\newtheorem{definition}[theorem]{Definition}
\newtheorem{remark}[theorem]{Remark}
\newtheorem{assumption}{Assumption}
\theoremstyle{nonumberplainnoparens}
\newtheorem{proof}{Proof}
\title{An Evolution Equation Approach to the Klein--Gordon Operator on Curved Spacetime}
\author{
  Jan Dereziński\inst{1}
  \and
  Daniel Siemssen\inst{1,2}
}
\institute{
  Department of Mathematical Methods in Physics, Faculty of Physics, University of Warsaw, Pasteura 5, 02-093, Warszawa, Poland.
  E-mail:~\email{jan.derezinski@fuw.edu.pl}.
  \and
  Department of Mathematics and Informatics, University of Wuppertal, Gaußstraße 20, 42119 Wuppertal, Germany.
  E-mail:~\email{siemssen@uni-wuppertal.de}
}
\begin{document}

\maketitle

\begin{abstract}
  We develop a theory of the Klein--Gordon equation on curved spacetimes.
  Our main tool is the method of (non-autonomous) evolution equations on Hilbert spaces.
  This approach allows us to treat low regularity of the metric, of the electromagnetic potential and of the scalar potential.
  Our main goal is a construction of various kinds of propagators needed in quantum field theory.
\end{abstract}

\medskip\noindent
2010 Mathematics Subject Classification: 35L05, 47D06, 58J45, 81Q10, 81T20.


\section{Introduction}

We consider the \emph{Klein--Gordon operator on a Lorentzian manifold $(M, g)$ minimally coupled to an electromagnetic potential $A$ and with a scalar potential $Y$}.
In local coordinates it can be written as
\begin{equation}\label{eq:klein-gordon}
  \KG \defn \Box_A + Y = \abs{g}^{-\frac12} (D_\mu - A_\mu) \abs{g}^\frac12 g^{\mu\nu} (D_\nu - A_\nu) + Y,
\end{equation}
where $\abs{g} = \abs{\det [g_{\mu\nu}]}$ and $D_\mu = -\im\partial_\mu$.
As in our recent work~\cite{derezinski-siemssen}, we are interested in inverses and bisolutions of the Klein--Gordon operator $\KG$.

Heuristically, they are defined as follows:
\begin{itemize}
  \item an operator $G$ is a \emph{bisolution} of $\KG$ if it satisfies
    \begin{equation*}
      \KG G = 0
      \quad\text{and}\quad
      G \KG = 0.
    \end{equation*}
  \item an operator $G$ is an \emph{inverse} of $\KG$ if it satisfies
    \begin{equation*}
      \KG G = \one
      \quad\text{and}\quad
      G \KG = \one.
    \end{equation*}
\end{itemize}
To make these statements rigorous, one needs to specify the spaces between which these operators act, making sure that the composition of $\KG$ and $G$  is well-defined.
Often, $G$ can be understood as an operator from $C^\infty_\comp(M)$ to $C^\infty(M)$.

The Klein--Gordon operator has several distinguished inverses and bisolutions.
They are known by many names, \eg, ``propagator'' or ``two-point function''.
Inverses are often also called ``Green's functions''.

The most well-known propagators are probably the \emph{forward (retarded) propagator} $G^\vee$ and the \emph{backward (advanced) propagator} $G^\wedge$.
Their difference $G^\PJ \defn G^\vee - G^\wedge$ is sometimes called the \emph{Pauli--Jordan propagator}, which is the name we use.
In the literature one can also find other names, such as ``commutator function'' or ``causal propagator''.%
\footnote{%
  We try to use as much as possible the terminology from classic textbooks on quantum field theory.
  For instance, ``Pauli--Jordan function'' is the name used for $G^\PJ$ already
  in Bogoliubov--Shirkov \cite{BS}.
  The same authors call $G^\Feyn$ the ``causal Green's function'', since
  the choice of $G^\Feyn$ for the evaluation of Feynman diagrams expresses  causality in quantum field theory.
  Therefore, using the name ``causal propagator'' for $G^\PJ$ clashes with the traditional terminology and, we believe, should be discouraged.
}
These three propagators are important in the Cauchy problem of the classical theory.
Therefore, we will call them jointly \emph{classical propagators}.
It is well-known that on globally hyperbolic spacetimes the classical propagators exist and are unique.

In quantum field theory, one needs also other propagators: two inverses, the \emph{Feynman propagator}~$G^\Feyn$ and the \emph{anti-Feynman propagator}~$G^\aFeyn$, as well as the \emph{positive} and \emph{negative frequency bisolutions}~$G^{(\pm)}$.
We will call them jointly \emph{non-classical propagators}.
A positive frequency bisolution yields the two-point function of a vacuum state -- a pure quasi-free state whose Gelfand--Naimark--Segal (GNS) representation yields a Hilbert space for the quantum field theory.
The integral kernel of the Feynman propagator coincides with the expectation value of time-ordered products of quantum fields.
It is used to evaluate Feynman diagrams.

The analysis of the Klein-Gordon equation is especially simple if the spacetime is stationary and the Hamiltonian is positive.
On the mathematical side, if in addition the Hamiltonian is bounded away from zero (the ``positive mass case''), we have a natural Hilbert space structure for the Cauchy data.
The most obvious choice is the so-called \emph{energy Hilbert space}.
It is also natural to consider a whole scale of Hilbert spaces, which includes the energy space.
The generator of the dynamics is self-adjoint on all of these spaces.
Thus the functional analytic setting for stationary spacetimes in the ``positive mass case'' is rather clean and simple.
If we assume that the Hamiltonian is only positive, without a positive lower bound, (the ``zero mass case''), then the functional-analytic setup becomes slightly more technically involved, but the general picture remains the same.

On the physical side, on a stationary spacetime with a positive Hamiltonian, it is clear how to define the non-classical propagators.
The positive and negative frequency bisolutions, as well as the Feynman and anti-Feynman propagators, are constructed from the spectral projections of the generator of the dynamics.
These constructions, at least implicitly, can be found in various works devoted to quantum field theory on curved spacetimes.
In a systematic way the static case has been worked out recently in~\cite{derezinski-siemssen}, see also Chap.~18 of~\cite{derezinski-gerard}.
\cite{derezinski-siemssen} assumed in addition the ``positive mass condition''.
The results of~\cite{derezinski-siemssen} can be easily generalized to stationary spacetimes (using \eg\ the stationary special case of Sects.~\ref{sub:1+3_splitting} and~\ref{sub:KG_operator} as a starting point).

The positivity of the Hamiltonian plays an important role in the construction of non-classical propagators.
This is related to the fact that non-positive Hamiltonians lead to problems in quantum field theory, which are often collectively called the \emph{Klein paradox}.
The original paper by Klein involved fermions and the Dirac equation with a large step potential causing spontaneous \emph{pair creation}.
One can easily resolve the fermionic Klein paradox in the second quantized theory.
Splitting the Hilbert space into the particle and antiparticle subspaces and applying second quantization makes the quantum Hamiltonian positive definite.
The corresponding problem for bosons is much more serious.
If the classical Hamiltonian is not positive, it will not become positive by quantization.
Besides, in this case there is no positive scalar product preserved by the evolution, as is the case for Dirac fermions.
This typically leads to the so-called \emph{superradiance}.
In mathematical terms it means that the scattering operator has a norm greater than one, or it does not exist at all because the norm of the evolution grows all the time.

This paper is devoted to the study of the Klein--Gordon equation on rather general (possibly, non-stationary) spacetimes.
We construct both the classical propagators and certain families of non-classical propagators.
Let us first describe the basic steps of our construction of the classical propagators.
\begin{enumerate}[label=\arabic*.]
  \item
    We assume that there is a manifold $\Sigma$ such that the spacetime $M$ is diffeomorphic to $\RR \times \Sigma$.
    This diffeomorphism provides a global time function $t$ whose level sets $\Sigma_t$ are assumed to be spacelike.
    It also defines a flow whose generator $\partial_t$ is assumed to be timelike.
  \item
    We rewrite the Klein-Gordon equation as a (non-autonomous) 1st order equation for the Cauchy data on $\Sigma_t$.
    Thus the generator of the evolution can be written as a $2\times2$ matrix.
  \item
    We make various assumptions on the metric, electromagnetic and scalar potentials.
    The assumptions on their regularity are rather weak, however, they are global in spacetime.
    We assume that the positive mass condition holds for all times, that is, all instantaneous Hamiltonians have a strictly positive lower bound.
  \item
    We apply functional analytic methods from the theory of \emph{non-autonomous evolution equations}, as developed by Kato in~\cite{kato:hyperbolic}.
    Note that, unlike in~\cite{derezinski-siemssen}, in the non-stationary case we do not have a unique distinguished energy space.
    Instead, we have a whole time-dependent family of instantaneous energy Hilbert spaces describing the Cauchy data at each time.
    Under the assumptions we impose, these spaces can be identified with one another.
    They have a variable scalar product, but a common topology -- thus the Cauchy data at each time belong to a single \emph{Hilbertizable space}.
  \item
    The Pauli--Jordan propagator essentially coincides with one of the matrix elements of the evolution operator.
    One can then write down the forward and backward propagators by inserting the Heaviside function in the appropriate places.
    Thus if one uses the method of evolution equations, the Pauli--Jordan propagator becomes the central object, whereas in typical approaches found in the literature (\eg~\cite{bar}) the forward and backward propagators are obtained first and then used to define the Pauli--Jordan propagator.
    We find this (trivial) observation curious.
\end{enumerate}

The assumptions of our paper, in particular their global character and the positive mass assumptions, are adapted to the needs of non-classical propagators, which are our main interest.
However, if one is interested only in classical propagators, some of these assumptions can be be relaxed.

When the Hamiltonian is merely bounded from below, we can reduce the problem to the positive mass case by a perturbation argument.
Then one can construct the evolution, and hence also the classical propagators.
We remark about this fact at the end of Sect.~\ref{sec:evolution}.

Another point that can be relaxed are the global assumptions.
We know that the propagation of solutions to the Klein-Gordon equation has a finite speed -- this can be proven independently under weak assumption on the regularity, see e.g.\ Appx.~\ref{appx:finite-speed}.
Therefore, to construct the evolution, it is sufficient to have local information about our system.
We do not discuss this point further in our paper.

As already stated above, our main interest are the non-classical propagators.
Unfortunately, in the non-stationary case it is not obvious how to define them.
The most popular view on this subject says that instead of a single positive frequency bisolution one should consider a whole class of bisolutions locally similar to the Minkowski two-point function, known as \emph{Hadamard states}.
There exists a considerable literature about them; in particular we would like to mention~\cite{radzikowski,kay-wald}.
Properties of Hadamard states play a central role in most formulations of perturbation theory and renormalization on curved spacetimes, see \eg~\cite{hollands-wald1,hollands-wald2}.
Moreover, the expectation value of time-ordered fields in every Hadamard state is the integral kernel of an inverse of~$K$ and can be viewed as a possible generalization of the usual Feynman propagator to the generic case.

One of possibilities is to use spectral projections of the generator of the evolution at a fixed instance of time, as we describe in Sect.~\ref{sec:instantaneous}.
This allows us to define \emph{instantaneous positive and negative frequency bisolutions}, which yield the so-called \emph{instantaneous vacua}.
One also has the corresponding \emph{instantaneous Feynman inverses}.

One can criticize these propagators on physical grounds.
Not only do they depend on an arbitrary and unphysical choice of a preferred time, but it is a folklore knowledge that they are generally not Hadamard states.
In a forthcoming article~\cite{derezinski-siemssen:hadamard} we will show, using methods from our formalism, that an instantaneous positive frequency bisolution yields a Hadamard state if the Klein--Gordon operator~$\KG$ is infinitesimally stationary at the Cauchy surface where the positive/negative frequency splitting was performed.

Spacetimes that become asymptotically stationary in the past and the future form a class that in our opinion is especially natural from the point of view of quantum field theory and scattering theory.
For such spacetimes one can define positive/negative frequency bisolutions corresponding to the asymptotic past and future, see Sect.~\ref{sec:asymptotic}.
We can call them \emph{in-} and \emph{out-positive/negative frequency bisolutions}.
One can argue that the corresponding \emph{in-vacuum} yields the representation of incoming states (prepared in the experiment) and the corresponding \emph{out-vacuum} gives the representation of final observables (measured in the experiment).
Therefore, the in- and out states are not only distinguished, they also have a clear and important physical meaning.
If the spacetime becomes stationary sufficiently fast, it can be shown that the states thus defined are Hadamard~\cite{gerard-wrochna:inout}, see also~\cite{derezinski-siemssen:hadamard}.

As we described above, and is well-known, spacetimes with asymptotically stationary past and future posses two pairs of distinguished and physically well-motivated propagators: the in- and out- positive and negative frequency bisolution.
It is perhaps less known that a large class of such spacetimes possesses another pair of natural and physically motivated propagators: the so-called \emph{canonical Feynman} and \emph{anti-Feynman propagator (inverse)}.
The Feynman propagator appears naturally when we evaluate Feynman diagrams.
A study of these propagators will be presented in our following paper~\cite{derezinski-siemssen:feynman}, where the formalism and results of the present paper will play an important role.

Let us mention that the canonical Feynman and anti-Feynman propagator are   related to the intriguing and poorly understood question about the self-adjointness of the Klein--Gordon operator.
It is easy to see that the Klein-Gordon operator is Hermitian, however, the existence of a distinguished self-adjoint extension seems to be difficult to prove and is known only in special cases: in the static case \cite{derezinski-siemssen} and (since very recently) for a class of asymptotically Minkowskian spaces \cite{vasy:selfadjoint}.
Note that heuristically the canonical Feynman (resp. anti-Feynman) propagator is the boundary value at zero from above (resp. below) of the resolvent of the Klein-Gordon operator.
One could also argue that the adjective \emph{canonical} is not needed for both propagators, that they should simply be called \emph{the} Feynman and anti-Feynman propagator.

Let us compare our work with the literature.
The construction of classical propagators is described in numerous sources.
Typically, one shows first the well-posedness of the Cauchy problem.
Then the existence of the classical propagators and their properties easily follow, see \eg~\cite{kay,dimock-kay}, and also the more recent works~\cite{drago-gerard,gerard-wrochna:hadamard,gerard-oulghazi-wrochna}.
Standard methods include the Hadamard parametrix method~\cite{bar,friedlander} and energy estimates obtained via the divergence theorem.
Another popular method relies on the factorization of the Klein-Gordon operator into the product of 1st order scalar operators (see \eg\ the treatment of H\"ormander~\cite{hormander}, which also covers $n$th order hyperbolic equations).
A brief history of the Cauchy problem for hyperbolic equations with references to various approaches can be found in Notes to Chap.~XXIII of~\cite{hormander}.

In our opinion, the method of evolution equations used in this paper provides a natural and powerful approach to analyze the Klein--Gordon equation on curved spacetimes, especially concerning questions relevant to quantum field theory.
Therefore, we were greatly surprised that it is difficult to find a treatment of this problem similar to ours in the existing literatures.
We are only aware of one more publication where the methods of evolution equations have been applied to the problem at hand in the non-stationary case:
In~\cite{furlani}, Furlani constructs the evolution under quite restrictive assumptions, namely, assuming that Cauchy surfaces are compact and have a decreasing volume along a finite time-interval.
The treatment of some papers, such as by Dimock, Kay, and Gerard--Wrochna, may also resemble our method~\cite{dimock-kay,kay,gerard-oulghazi-wrochna,gerard-wrochna:hadamard}.
However, in almost all papers that we know, the existence of the evolution is taken for granted, is given by the local theory, and is not constructed within the formalism of evolution equations on some Banach spaces.

The literature devoted to classical propagators on curved spacetimes usually does not use a global functional analytic setting.
As we discussed above, from the point of view of classical propagators, the method of our paper seems to impose unnecessary limitations, because of the global assumptions on the spacetime.
However, to define and study non-classical propagators, some kind of global assumptions are usually indispensable.

Most authors do not consider low regularity situations (for an exception we refer to~\cite{sanchez-vickers})
For example, the propagators are typically understood from~$C^\infty_\comp(M)$ to~$C^\infty(M)$.
As far as we know, the constructions found in the literature require more stringent regularity assumptions than ours.

Throughout our paper we impose rather weak assumptions on the regularity of various objects (the metric, electromagnetic potential and the scalar potential).
Nevertheless, we did not write this work with any particular non-regular examples in mind, even though low regularity is present in some interesting physical applications (\eg, boundaries of astrophysical objects, shock waves) and singularities appear generically in solutions of the Einstein equation.
Instead, the main reason for the chosen approach is our conviction that weak assumptions play an important theoretical role, because they impose a certain discipline on a mathematical theory, forcing us to find better arguments and a more natural setting for the problem.

We think that our approach is rather natural and direct if one wants to treat the most simple examples of spacetimes (from the point of view of quantum field theory) such as local perturbations of Minkowski spacetime and cosmological spacetimes.
However, it is also flexible enough to treat some less obvious examples, such as certain non-globally hyperbolic spacetimes, including spacetimes with boundaries, provided we impose appropriate boundary conditions.
This includes for example compactifications of anti-de Sitter spacetime with appropriate conditions on its timelike boundary (\cf~\cite{dappiaggi-ferreira-marta,dappiaggi-drago-ferreira} for a recent discussion of boundary conditions on anti-de Sitter spacetime and spacetimes with a timelike boundary).

Finally, let us remark that Kato's theory of non-autonomous evolution equations has also been successfully applied in the context of quantum field theory for the Dirac equation on curved spacetimes, see \eg~\cite{haefner,nicolas}.
The Dirac equation is simpler in this respect than the Klein--Gordon equation.
For the Dirac equation there exists a natural Hilbert space.
For the (non-stationary) Klein--Gordon equation no such choice exists: one is forced to work with a family of Hilbertizable spaces.
Studying the evolution for the Klein--Gordon equation in time-dependent families of Hilbert spaces has also been fruitful in the context of spherical gravitational collapse (\ie, in static Schwarzschild spacetime with time-dependent boundary conditions), see \cite{bachelot} and references therein.

\subsection{Notation and conventions}

Throughout this paper we adopt essentially the same notations and conventions as in~\cite{derezinski-siemssen} but for the convenience of the reader we repeat the relevant conventions.
We also introduce some new notation.

Suppose that $T$ is an operator on a Banach space~$\mathcal{X}$.
We denote by $\Dom T$ its domain and by $\Ran T$ its range.
For its spectrum we write $\spec T$ and for the resolvent set $\reso T$.

Suppose that $T$ is a operator on a Hilbert space~$\mathcal{H}$ with inner product $(\,\cdot\;|\;\cdot\,)$.
If $T$ is positive, \ie, $(u \,|\, T u) \geq 0$, we write $T \geq 0$.
If also $\Ker T = \{0\}$, then we write $T > 0$.

A useful function is the so-called `Japanese bracket', defined as $\jnorm{T} \defn (1 + \abs{T}^2)^{1/2}$.

A topological vector space $\mathcal{X}$ is called \emph{Hilbertizable} if there exists a scalar product on $\mathcal{X}$ that determines its topology and makes it into a Hilbert space.
Clearly, two scalar products determine the topology of $\mathcal{X}$ iff they are equivalent.

The $p$-times continuously differentiable $\mathcal{X}$-valued functions on a manifold~$M$ are denoted $C^p(M; \mathcal{X})$; if $\mathcal{X} = \CC$, we simply write $C^p(M)$.
Sets of compactly supported or bounded functions are indicated by a subscript `c' or `b'.

$AC(\RR)$ denotes the set of absolutely continuous functions, \ie, functions whose distributional derivative belongs to $L_\loc^1(\RR)$.
$AC^1(\RR)$ denotes the set of functions whose distributional derivative belongs to $AC(\RR)$.

When calculating integrals, we denote by $\int'$ the `Cauchy principal value' at infinity, \eg,
\begin{equation*}
  \int'_{\im\RR} f(t)\, \dif t = \lim_{R \to \infty} \int_{-\im R}^{\im R} f(t)\, \dif t.
\end{equation*}
Observe that we pass to infinity symmetrically in the lower and upper integration limits.

Suppose we fix a positive density $\gamma$ on $M$.
The space $L^2(M,\gamma)$ of square-integrable functions on $M$ is then defined as the completion of~$C^\infty_\comp(M)$ with respect to the scalar product
\begin{equation*}
  (u \,|\, v)_\gamma \defn \int_M \conj{u}\, v\, \gamma,
  \quad
  u, v \in C^\infty_\comp(M).
\end{equation*}

If $g$ is the metric tensor $g$ on $M$ (of any signature), then we set $|g| \defn |\det [g_{\mu\nu}]|$.
$M$ is then equipped with a canonical density $|g|^{\frac12}$.
Sometimes it is however convenient to fix a density $\gamma$ independent of the metric tensor.

Often it is convenient to use the formalism of (complexified) half-densities on $M$.
If $\gamma$ is a positive density on $M$, then $\gamma^{\frac12}$ is a half-density.
The canonical example for a half-density on a pseudo-Riemannian manifold is $\abs{g}^\frac14$.
Since the integral over a density on a manifold is well-defined, half-densities come equipped with a natural $L^2$-structure
\begin{equation*}
  (\tilde{u} \,|\, \tilde{v}) = \int_M \conj{\tilde u}\, \tilde{v},
  \quad
  \tilde{u}, \tilde{v} \in C^\infty_\comp(\Omega^\frac12 M)
\end{equation*}
We denote by $L^2(\Omega^\frac12 M)$ the completion of $C^\infty_\comp(\Omega^\frac12 M)$ with respect to the corresponding norm.
Note that if we fix an everywhere positive density $\gamma$, then
\begin{equation*}
  L^2(M,\gamma) \ni u \mapsto \tilde{u} \defn u \gamma^{\frac12} \in L^2(\Omega^\frac12 M)
\end{equation*}
is the natural unitary identification of the $L^2$-space in the scalar formalism and in the half-density formalism.

The operator $D=-\im\partial$ acts naturally on scalars, and $D^\gamma=\gamma^{\frac12}D\gamma^{-\frac12}$ acts naturally on half-densities.

In our paper we generally prefer to use the half-density formalism rather than the scalar formalism.
The Klein--Gordon operator $\KG$ is presented in~\eqref{eq:klein-gordon} in the scalar formalism.
Transformed to the half-density formalism it is
\begin{equation}\label{eq:klein-gordon-half}
  \KG_{\frac12} \defn \abs{g}^\frac14 \KG \abs{g}^{-\frac14} = \abs{g}^{-\frac14} (D_\mu - A_\mu) \abs{g}^\frac12 g^{\mu\nu} (D_\nu - A_\nu)\abs{g}^{-\frac14} + Y.
\end{equation}
In what follows we drop the subscript $\frac12$ from $\KG_\frac12$ and by $\KG$ we will mean~\eqref{eq:klein-gordon-half}.

\section{Assumptions and setting}
\label{sec:assumptions}

\subsection{1+3 splitting}
\label{sub:1+3_splitting}

We consider smooth manifolds $M$ and $\Sigma$ such that there exists a (fixed) diffeomorphism $\RR \times \Sigma \to M$.
This means that we have a distinguished time function~$t$ on~$M$, and the leaves $\Sigma_t = \{t\} \times \Sigma$ provide a foliation of~$M$ with a family of diffeomorphisms $\epsilon_t : \Sigma \to \Sigma_t \subset M$.
We define the time vector field
\begin{equation*}
  \partial_t \defn \frac{\dif}{\dif t} \epsilon_t.
\end{equation*}
Note that $\dif t \cdot \partial_t=1$.

We assume that $M$ is equipped with a continuous Lorentzian metric $g$, \ie, $(M,g)$ is a \emph{spacetime}.
The restriction of~$g$ to the tangent space of~$\Sigma_t$ defines a time-dependent family of metrics on~$\Sigma$, denoted $g_\Sigma(t) \defn \epsilon_t^* g$.
We make the assumption that all $g_\Sigma(t)$ are Riemannian, or, equivalently, that the covector $\dif t$ is everywhere timelike.
This assumption allows us to define the \emph{lapse function}~$\alpha$:
\begin{equation*}
  \frac{1}{\alpha^2} \defn -g^{-1}(\dif t, \dif t) > 0.
\end{equation*}
Note that at this moment we do not assume that the vector $\partial_t$ is everywhere timelike, which is equivalent to
\begin{equation}\label{eq:timelike}
  g_\Sigma(\beta,\beta) < \alpha^2.
\end{equation}
This assumption will be forced on us later on by Assumption~\ref{asm:L-WW}.
The part of $\partial_t$ orthogonal to the leaves of the foliation is the \emph{shift vector}
\begin{equation*}
  \beta \defn \partial_t + \alpha^2 g^{-1}(\dif t, \,\cdot\,).
\end{equation*}

The inverse metric can now be written as
\begin{equation}\label{eq:metric}
  g^{-1} = -\frac{1}{\alpha^2} (\partial_t - \beta) \otimes (\partial_t - \beta) + g_\Sigma^{-1}.
\end{equation}
In coordinates, we have
\begin{align*}
  g_{\mu\nu} \dif x^\mu \dif x^\nu &= -\alpha^2 \dif t^2 + g_{\Sigma,ij} (\dif x^i + \beta^i \dif t) (\dif x^j + \beta^j \dif t), \\
  g^{\mu\nu} \partial_\mu \partial_\nu &= -\frac{1}{\alpha^2} (\partial_t - \beta^i \partial_i)^2 + g_\Sigma^{ij} \partial_i \partial_j.
\end{align*}

The generic notation for a point of~$M$ will be $(t,\vec x)$.
We often suppress the spatial dependence of objects defined on~$M$, \eg, we identify $f(t) = f(t, \cdot\,)$ for some function~$f$ on~$M$.
Sometimes we also suppress the time-dependence, but it should be kept in mind that the central quantities considered here, the metric~$g$, the electromagnetic potential~$A$ and the scalar potential~$Y$, generically are time-dependent.
Sometimes we denote derivatives with respect to~$t$ (\ie, the action of the vector field $\partial_t$) by a dot.

\subsection{Klein--Gordon operator}
\label{sub:KG_operator}

The main object of our paper is the Klein--Gordon operator~\eqref{eq:klein-gordon-half}.
Instead of the operator $\KG$ on $L^2(M)$, it is more convenient to work with the operator
\begin{equation*}
  \tilde\KG \defn \alpha \KG \alpha.
\end{equation*}

With the inverse metric expressed as~\eqref{eq:metric}, it can be written as
\begin{align*}
  \tilde\KG
  &= -\gamma^{-\frac12} (D_t - D_i \beta^i + V) \gamma (D_t - \beta^j D_j + V) \gamma^{-\frac12} \\&\quad + \gamma^{-\frac12} (D_i - A_i) \alpha^2 \gamma g_\Sigma^{ij} (D_j - A_j) \gamma^{-\frac12} + \alpha^2 Y \\
  &= -(D_t + W^*) (D_t + W) + L,
\end{align*}
where we introduced
\begin{align*}
  \gamma &\defn \alpha^{-2} \abs{g}^\frac12 = \alpha^{-1} \abs{g_\Sigma}^\frac12, \\
  V &\defn -A_0 + A_i \beta^i, \\
  W &\defn \beta^i D_i + V - \frac12 \gamma^{-1} (D_t \gamma - \beta^i D_i \gamma), \\
  L &\defn D_i^{A,\gamma\, *} \tilde{g}_\Sigma^{ij} D_j^{A,\gamma} + \tilde{Y}
\end{align*}
and we use the shorthands
\begin{align*}
  \tilde{g}_\Sigma^{ij}(t) &\defn \alpha(t)^2 g_\Sigma^{ij}(t), \\
  \tilde{Y}(t) &\defn \alpha(t)^2 Y(t), \\
  D^{A,\gamma}(t) &\defn \gamma(t)^\frac12 \bigl(D - A(t)\bigr) \gamma(t)^{-\frac12}.
\end{align*}
Clearly, propagators for $\tilde\KG$ induce corresponding propagators for $\KG$.

\subsection{First-order formalism}

For each $t \in \RR$, we (formally) define
\begin{equation*}
  B(t) \defn \begin{pmatrix} W(t) & \one \\ L(t) & W(t)^* \end{pmatrix}.
\end{equation*}
Setting $u_1(t) = u(t)$ and $u_2(t) = -(D_t + W(t)) u(t)$, we find that
\begin{equation*}
  \bigl( \partial_t + \im B(t) \bigr) \begin{pmatrix} u_1(t) \\ u_2(t) \end{pmatrix} = 0
\end{equation*}
if and only if $u$ is a (weak) solution of the Klein--Gordon equation $\tilde\KG u = 0$.
Therefore we occasionally call $\partial_t + \im B(t)$ the \emph{first-order Klein--Gordon operator}.
The half-densities $u_1(t)$ and $u_2(t)$ may be called the \emph{Cauchy data} for $u$ at time~$t$.

\subsection{Assumptions local in time}
\label{sub:asm-local}

\begin{assumption}\label{asm:L}
  We suppose that the following assumptions hold:
  \begin{enumerate}[label=1.\alph*.,ref=1.\alph*]
    \item For all $t \in \RR$, $L(t)$ extends to a positive invertible self-adjoint operator on $L^2(\Omega^\frac12\Sigma)$ (denoted by the same symbol). \label{asm:L-selfadj}
    \item There exists $a \in C(\RR)$ such that $a(t)<1$ and $\norm[\big]{W(t) L(t)^{-\frac12}} \leq a(t)$. \label{asm:L-WW}
    \item \label{asm:LW_growth_loc}
      There exists a positive $C \in L^1_\loc(\RR)$ such that for all $\abs{t-s} \leq 1$
      \begin{equation}\label{eq:LW_growth_loc}
        \norm[\big]{L(t)^{-\frac12} \bigl( L(t) - L(s) \bigr) L(t)^{-\frac12}} + 2 \norm[\big]{\bigl( W(t) - W(s) \bigr) L(t)^{-\frac12}} \leq \abs*{\int_s^t C(r)\, \dif r},
      \end{equation}
      where we place the absolute value on the right-hand side to account for the arbitrary order of~$t$ and~$s$.
    \item \label{asm:beta_cont}
      $t \mapsto \alpha(t)^{\pm1}$ are norm-continuous on $L(s)^{-\frac12} L^2(\Omega^\frac12\Sigma)$ for any $s \in \RR$, and $t \mapsto \dot\alpha(t)$ is norm-continuous on~$L^2(\Omega^\frac12\Sigma)$.
  \end{enumerate}
\end{assumption}

A few remarks about these assumptions are in order:

First, Assumption~\ref{asm:L-selfadj} can always be realized if $\gamma(t)^{-1}\partial_i\gamma(t)$, $A_i(t) \in L^2_\loc(\Sigma)$, $\tilde{g}_{ij}(t)\in L_\loc^\infty(\Sigma)$ and $\tilde{Y}(t) \in L^1_\loc(\Sigma)$ such that $\tilde{Y}(t)$ is bounded from below by a positive constant.
In that case $L(t)$ can be understood as the form
\begin{equation*}
  (u \,|\, L(t)\, v) = \int_\Sigma \Bigl( \bigl(\conj{D^{A,\gamma}_i(t)\, u}\bigr) \tilde{g}_\Sigma^{ij}(t) \bigl(D^{A,\gamma}_j(t)\, v\bigr) + \conj{u}\,\tilde{Y}(t)\, v \Bigr),
\end{equation*}
 on its (natural) maximal form domain $\Dom L(t)^\frac12 \supset C^\infty_\comp(\Omega^\frac12\Sigma)$ (but it is not generally clear if $C^\infty_\comp(\Omega^\frac12\Sigma)$ is a form core).
This form then defines a self-adjoint operator in the usual way.
The details of this construction are given in Appx.~\ref{appx:laplace}; its main aspects can be found in Thm.~VI.2.6 of~\cite{kato}.

Next, Assumption~\ref{asm:L-WW} means that $\norm{W(t)L(t)^{-\frac12}}<1$.
Thus the electrostatic potential $V(t)$ together with the variation of the metric expressed by $\gamma(t)^{-1} \dot\gamma(t)$ and the shift vector~$\beta$ cannot be too big compared to~$L(t)$.
This has to be true already on the level of the principal symbols of $W$ and $L$.
Therefore, for each $x = (t,\vec{x}) \in M$ and $p \in T_{\vec x}^*\Sigma_t$, we need to have
\begin{equation*}
  \abs[\big]{\beta^k(x) p_k \bigl(\tilde{g}_\Sigma^{ij}(x) p_i p_j\bigr)^{-\frac12}} < 1.
\end{equation*}
This is equivalent to
\begin{equation}\label{eq:timelike1}
  \tilde g_{\Sigma,ij} \beta^i \beta^j < 1,
\end{equation}
where $\tilde g_{\Sigma,ij} = \alpha^{-2} g_{\Sigma,ij}$ is the inverse of $\tilde g_\Sigma^{ij}$, and consequently \eqref{eq:timelike1} is equivalent to~\eqref{eq:timelike}.
Thus Assumption~\ref{asm:L-WW} implies that $\partial_t$ is timelike.
This excludes \eg\ the ergosphere region of Kerr spacetime in stationary coordinates -- in such a case one needs to switch to the non-stationary co-rotating coordinates.

Together, Assumptions \ref{asm:L-selfadj} and~\ref{asm:L-WW} guarantee that the Hamiltonian is positive and has a positive lower bound (the ``positive mass assumption'').
The positivity of the Hamiltonian and its positive lower bound has two aspects.
First, it is essentially necessary if we want to construct non-classical propagators.
Second, this assumption helps us to introduce a natural family of Hilbertizable spaces, which are used in the analysis of the evolution.
(A similar analysis would be possible with a positive Hamiltonian, but without a positive lower bound, however there would be some additional technical problems).

Nevertheless, as far as the derivation of the evolution and the classical propagators is concerned, Assumption~\ref{asm:L-selfadj} can be relaxed.
In fact, for the existence of the evolution it is sufficient that there exists a constant $b > 0$ such that these assumptions are satisfied by $L(t)+b$, see also Cor.~\ref{cor:bounded-below}.
In this case in general we do not have a positive Hamiltonian and our analysis of non-classical propagators does not apply.

Among other things, Assumption~\ref{asm:LW_growth_loc} guarantees that for any $t,s$ there exists $c(t,s)>0$ such that
\begin{equation}\label{eq:hilbert}
  L(t) \leq c(t,s) L(s).
\end{equation}
Therefore, for $\delta \in [-1,1]$ we can define the Hilbertizable spaces
\begin{equation*}
  \mathcal{K}^\delta \defn L(t)^{-\delta/2} L^2(\Omega^\frac12\Sigma),
\end{equation*}
where the Hilbertian structures on the right-hand side are equivalent for different $t$ because of~\eqref{eq:hilbert}.

Finally, Assumption~\ref{asm:beta_cont} implies the norm-continuity of $t \mapsto \alpha(t)^{\pm1}$ on $\mathcal{K}^\delta$ for $\delta \in [-1,1]$.
Indeed, by this assumption, $t \mapsto \alpha(t)^{\pm1}$ are norm-continuous on $\mathcal{K}^{\frac12}$, hence  by duality also on $\mathcal{K}^{-\frac12}$, and then we can interpolate using, \eg, the Heinz--Kato inequality (Thm.~\ref{thm:heinz-kato}).

While it should be obvious how Assumption~\ref{asm:L-selfadj}, \ref{asm:L-WW} and~\ref{asm:beta_cont} can be realized in an example, Assumption~\ref{asm:LW_growth_loc} is slightly less obvious.
Therefore in Appx.~\ref{appx:assumptions} we briefly explain how Assumption~\ref{asm:LW_growth_loc} can follow from more concrete assumptions on the metric, the vector potential and the scalar potential.

\subsection{Assumptions global in time}
\label{sub:asm-global}

While we always require that Assumption \ref{asm:L} holds, the following additional assumptions are only imposed when we derive asymptotic properties of propagators.

\begin{assumption}\label{asm:asymptotic}
  \mbox{}
  \begin{enumerate}[label=2.\alph*.,ref=2.\alph*]
    \item $L(t)$ is uniformly bounded away from zero.
    \item There exists $a < 1$ such that $\norm[\big]{W(t) L(t)^{-\frac12}} \leq a$ for all $t$.
    \item \label{asm:LW_growth}
      There exists a positive $C \in L^1(\RR)$ such that for all $t,s \in \RR$
      \begin{equation*}
        \norm[\big]{L(t)^{-\frac12} \bigl( L(t) - L(s) \bigr) L(t)^{-\frac12}} + 2 \norm[\big]{\bigl( W(t) - W(s) \bigr) L(t)^{-\frac12}} \leq \abs*{\int_s^t C(r)\, \dif r},
      \end{equation*}
      where we place the absolute value on the right-hand side to account for the arbitrary order of~$t$ and~$s$.
    \item \label{asm:beta_bnd}
      $t \mapsto \alpha(t)^{\pm1}$ are uniformly bounded on~$\mathcal{K}^1$ and $t \mapsto \dot\alpha$ is uniformly bounded on~$\mathcal{K}^0$.
  \end{enumerate}
\end{assumption}

Note that, by the same argument as for Assumption~\ref{asm:beta_cont}, one can show that Assumption~\ref{asm:beta_bnd} implies the uniform boundedness of $t \mapsto \alpha(t)^{\pm1}$ on $\mathcal{K}^\delta$ for $\delta \in [-1,1]$.

\section{The energy space and the dynamical space}
\label{sec:spaces}

We will occasionally use the Hilbert space
\begin{equation*}
  \mathcal{H} \defn L^2(\Omega^\frac12\Sigma) \oplus L^2(\Omega^\frac12\Sigma) = \mathcal{K}^0 \oplus \mathcal{K}^0
\end{equation*}
with the canonical inner product also denoted by $(\,\cdot\;|\;\cdot\,)$ and the corresponding norm $\norm{\,\cdot\,}$.

The Hilbert space $\mathcal{H}$ plays only an auxiliary role in our work.
More important are the Hilbertizable spaces $\mathcal{H}_\lambda$, $\lambda \in [-1,1]$, defined as
\begin{equation}\label{eq:H_alpha}
  \mathcal{H}_\lambda \defn \mathcal{K}^{(\lambda+1)/2} \oplus \mathcal{K}^{(\lambda-1)/2}.
\end{equation}
Note that for any $t$
\begin{equation}\label{eq:scale}
  \mathcal{H}_\lambda = \bigl( L(t) \oplus L(t) \bigr)^{-\lambda/4} \mathcal{H}_0,
  \quad
  \lambda \in [-1,1].
\end{equation}
We will treat the space $\mathcal{H}_0$ as the central element of the family~\eqref{eq:scale}, identifying $\mathcal{H}_0$ with $\mathcal{H}^{\mathrlap{*}}{}_0$, the \emph{antidual} of $\mathcal{H}_0$ (the space of bounded antilinear functionals on $\mathcal{H}_0$).
Then we have a natural identification of $\mathcal{H}_{-\lambda}$ with $\mathcal{H}^{\mathrlap{*}}{}_{\lambda}$.

The central role in this work is played by the \emph{energy space}, the \emph{dynamical space} and the antidual of the energy space:
\begin{subequations}\label{eq:free-spaces}\begin{align}
  \Hen &\defn \mathcal{H}_1 = \bigl( L(t)^{-\frac12} \oplus \one \bigr) \mathcal{H} = H_0(t)^{-\frac12} \mathcal{H}, \\
  \Hdyn &\defn \mathcal{H}_0 = \bigl( L(t)^{-\frac14} \oplus L(t)^{\frac14} \bigr) \mathcal{H}, \\
  \Hne &\defn \mathcal{H}_{-1} = \bigl( \one \oplus L(t)^\frac12 \bigr) \mathcal{H} = \bigl( Q H_0(t) Q \bigr)^\frac12 \mathcal{H},
\end{align}\end{subequations}
where we set
\begin{equation*}
  H_0(t) \defn L(t) \oplus \one = \twomat{L(t)}{0}{0}{\one},
\end{equation*}
and we also used the \emph{charge form}
\begin{equation*}
  (u \,|\, Q v) \defn (u_1 \,|\, v_2) + (u_2 \,|\, v_1),
  \quad
  Q \defn \begin{pmatrix} 0 & \one \\ \one & 0 \end{pmatrix}.
\end{equation*}
It is evident that the charge form is bounded on~$\mathcal{H}$.
More importantly, it is also bounded on~$\Hdyn$ (but, \eg, not on~$\Hen$).

Note that
\begin{equation*}
  \Im\, (u \,|\, Q v) = \frac1{2\im} \bigl( (u \,|\, Q v) - (v \,|\, Q u) \bigr)
\end{equation*}
is a symplectic form on~$\Hdyn$.
Therefore, the formalism based on the charge form is equivalent to the symplectic formalism, commonly used in the literature.

\section{Instantaneous energy spaces and instantaneous dynamical spaces}
\label{sec:instantaneous_spaces}

An important role in our paper is played by the instantaneous Hamiltonian, defined formally for each $t$ as
\begin{equation*}
  H(t) = Q B(t) = B(t)^* Q.
\end{equation*}
One can rigorously define $H(t)$ as a form bounded perturbation of $H_0(t)$:
\begin{proposition}
  The operator
  \begin{equation*}
    H(t) \defn \begin{pmatrix} L(t) & W(t)^* \\ W(t) & \one \end{pmatrix}
  \end{equation*}
  is self-adjoint on $\mathcal{H}$ with the form domain $\Hen$.
  We have
  \begin{equation}\label{eq:H-H0-ineq}
    \bigl( 1-a(t) \bigr) H_0(t) \leq H(t) \leq \bigl( 1+a(t) \bigr) H_0(t),
  \end{equation}
  where $0 \leq a(t) < 1$ was introduced in Assumption~\ref{asm:L-WW}.
\end{proposition}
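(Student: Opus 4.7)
The plan is to realize $H(t)$ as the self-adjoint operator associated to a closed symmetric quadratic form obtained by a KLMN-type perturbation of the diagonal form of $H_0(t)$, using Assumption~\ref{asm:L-WW} to control the off-diagonal entries.

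First I would introduce the sesquilinear form
\begin{equation*}
  h(t)[u,v] \defn (L(t)^{\frac12} u_1 \,|\, L(t)^{\frac12} v_1) + (u_2 \,|\, v_2) + (u_2 \,|\, W(t) v_1) + (W(t) u_1 \,|\, v_2)
\end{equation*}
with form domain $\Hen = \mathcal{K}^1 \oplus \mathcal{K}^0 = \Dom L(t)^{\frac12} \oplus L^2(\Omega^\frac12\Sigma)$, where the first two terms represent the quadratic form $h_0(t)$ of the positive self-adjoint operator $H_0(t) = L(t) \oplus \one$. The form $h_0(t)$ is closed on $\Hen$ by definition of the form domain of a positive self-adjoint operator. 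Symmetry of $h(t)$ is then manifest, since complex-conjugating the cross terms swaps $u \leftrightarrow v$.

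Next I would estimate the off-diagonal part $w(t)[u,v] \defn (u_2 \,|\, W(t) v_1) + (W(t) u_1 \,|\, v_2)$. For $u \in \Hen$, writing $W(t) u_1 = \bigl(W(t) L(t)^{-\frac12}\bigr) L(t)^{\frac12} u_1$ and invoking Assumption~\ref{asm:L-WW},
\begin{equation*}
  \abs{w(t)[u,u]} = 2\abs{\Re\,(u_2 \,|\, W(t) u_1)} \leq 2 a(t)\, \norm{u_2}\, \norm{L(t)^{\frac12} u_1} \leq a(t)\, h_0(t)[u,u],
\end{equation*}
by AM--GM. Hence
\begin{equation*}
  \bigl( 1 - a(t) \bigr) h_0(t)[u,u] \leq h(t)[u,u] \leq \bigl( 1 + a(t) \bigr) h_0(t)[u,u],
\end{equation*}
which is exactly the form version of~\eqref{eq:H-H0-ineq}. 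In particular $w(t)$ is relatively form-bounded with respect to $h_0(t)$ with relative bound $a(t) < 1$, so by the KLMN theorem $h(t)$ is closed, symmetric, semibounded on~$\Hen$, and represents a unique self-adjoint operator on $\mathcal{H}$ whose form domain is $\Hen$.

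Finally I would identify this self-adjoint operator with the matrix expression in the statement: the off-diagonal form $(u_2 \,|\, W(t) v_1) + (W(t) u_1 \,|\, v_2)$ is, by definition of the adjoint on $L^2(\Omega^\frac12\Sigma)$, the form of $\begin{pmatrix} 0 & W(t)^* \\ W(t) & 0 \end{pmatrix}$, so on sufficiently regular vectors the represented operator acts as the matrix $H(t)$ displayed in the proposition. The main (and only real) issue is simply to check that Assumption~\ref{asm:L-WW} gives the relative \emph{form} bound on the whole of $\Hen$ rather than just on a dense subspace, which is handled by writing $W(t)$ as $\bigl(W(t) L(t)^{-\frac12}\bigr) L(t)^{\frac12}$ on $\Dom L(t)^{\frac12}$; once this is in place the KLMN theorem supplies everything else.
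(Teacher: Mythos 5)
Your proposal is correct and follows essentially the same route as the paper: the paper likewise treats $H(t)$ as a form-bounded perturbation of $H_0(t)$ and verifies~\eqref{eq:H-H0-ineq} by the Cauchy--Schwarz inequality together with Assumption~\ref{asm:L-WW} (in fact it only writes out the upper bound, leaving the rest implicit). Your version is somewhat more complete, since you bound the off-diagonal form by $a(t)\,h_0(t)[u,u]$ directly, which yields both inequalities at once and makes the appeal to the KLMN theorem for self-adjointness explicit.
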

\begin{proof}
  We show only the right-hand side of the inequality~\eqref{eq:H-H0-ineq}.
  Set $u = \left(\!\begin{smallmatrix} u_1 \\ u_2 \end{smallmatrix}\!\right)$.
  Using the Cauchy--Schwarz inequality and Assumption~\ref{asm:L-WW}, we find
  \begin{align*}
    (u \,|\, H(t)\, u)
    &\leq \norm{L(t)^\frac12 u_1}^2 + \norm{u_2}^2 + 2 \norm{W(t)\, u_1}\norm{u_2} \\
    &\leq \norm{L(t)^\frac12 u_1}^2 + \norm{u_2}^2 + 2 a(t) \norm{L(t)^\frac12 u_1} \norm{u_2} \\
    &\leq \bigl(1+a(t)\bigr) \bigl( \norm{L(t)^\frac12 u_1}^2 + \norm{u_2}^2 \bigr) \\
    &= \bigl(1+a(t)\bigr) (u \,|\, H_0(t)\, u).
  \end{align*}
\end{proof}

We define for each time $t \in \RR$ the \emph{(instantaneous) energy scalar products} given by
\begin{equation*}
  (u \,|\, v)_{\en,t} \defn (u \,|\, H(t) v)
 \end{equation*}
on~$\Hen$.
By~\eqref{eq:H-H0-ineq} the scalar product $(\,\cdot\;|\;\cdot\,)_{\en,t}$ is compatible with the topology of $\Hen$.
We call the resulting Hilbert space the \emph{instantaneous energy space at $t$} and denote it by $\Hen[t]$.

Similarly, we can also define the operator $Q H(t)^{-1} Q$.
We find that its form domain is~$\Hne$.
Indeed,
\begin{equation}\label{eq:QH-QH0-ineq}
  \bigl( 1+a(t) \bigr)^{-1} Q H_0(t)^{-1} Q \leq Q H(t)^{-1} Q \leq \bigl( 1-a(t) \bigr)^{-1} Q H_0(t)^{-1} Q.
\end{equation}
Then we define for each $t$ the scalar product
\begin{equation*}
  (u \,|\, v)_{\en^*,t} \defn (u \,|\,Q H(t)^{-1} Q v)
\end{equation*}
and note that it is compatible with the topology of $\Hne$; we denote the resulting Hilbert space by~$\Hne[t]$.

The central operator in this work is $B(t)$.
In the next section we construct the evolution generated by~$B(t)$, solving the first-order Klein--Gordon equation.

\begin{proposition}\label{prop:B}
  Considered as an operator on~$\Hne[t]$ with domain~$\Hen$,
  \begin{equation*}
    B(t) \defn \twomat{W(t)}{\one}{L(t)}{W(t)^*}
  \end{equation*}
  is self-adjoint and $0$ is in its resolvent set.
\end{proposition}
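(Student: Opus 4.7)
The natural strategy is to exhibit the bounded inverse of $B(t)$ on $\Hne[t]$, reducing the proposition to the self-adjointness of a bounded operator. From the preceding proposition we have $H(t) = Q B(t)$ on $\Hen$; combined with $Q^2 = \one$ this identifies the candidate inverse as $B(t)^{-1} = H(t)^{-1} Q$.

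First, I would verify bijectivity of $B(t) : \Hen \to \Hne$. The preceding proposition says that $H(t)$ is self-adjoint on $\mathcal{H}$ with form domain $\Hen$ and $H(t) \geq (1-a(t)) H_0(t)$. Since $L(t)$ is positive and invertible by Assumption~\ref{asm:L-selfadj}, its spectrum is bounded away from zero, so $H_0(t) = L(t) \oplus \one$ has a positive lower bound and hence so does $H(t)$. By the standard correspondence between a closed, bounded-below, positive sesquilinear form and its unique self-adjoint realization (equivalently, Lax--Milgram), $H(t)$ extends to a Banach-space isomorphism $\Hen \to \Hen^*$. Using the identification $\Hen^* \simeq \Hne$ inherent in the $\mathcal{K}^\delta$-scale, and composing with the component-swap $Q$, one obtains that $B(t) = Q H(t) : \Hen \to \Hne$ is a bounded bijection, so $H(t)^{-1} Q$ is bounded on $\Hne[t]$ with dense range $\Hen$.

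Second, I would show that $B(t)^{-1}$ is symmetric on $\Hne[t]$. Using $Q^* = Q$, $Q^2 = \one$, and the self-adjointness of $H(t)^{-1}$ after form-extension to the Gelfand triple $\Hen \subset \mathcal{H} \subset \Hne$, a direct computation gives
\begin{equation*}
  (B(t)^{-1} u \,|\, v)_{\en^*, t}
  = (H(t)^{-1} Q u \,|\, Q H(t)^{-1} Q v)
  = (u \,|\, Q H(t)^{-1} Q H(t)^{-1} Q v)
  = (u \,|\, B(t)^{-1} v)_{\en^*, t}
\end{equation*}
for all $u, v \in \Hne$. A bounded symmetric operator on a Hilbert space is self-adjoint; being additionally injective with dense range, the inverse of $B(t)^{-1}$---which is $B(t)$ with domain $\Hen$---is a closed densely defined self-adjoint operator on $\Hne[t]$, and the boundedness of $B(t)^{-1}$ is exactly the statement $0 \in \reso B(t)$.

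The principal obstacle is bookkeeping rather than any deep analytic step: one must interpret each occurrence of $Q$, $H(t)$, and $H(t)^{-1}$ in the appropriate slot of the Gelfand triple $\Hen \subset \mathcal{H} \subset \Hne$ so that the formal manipulations in the symmetry computation---in particular the identities $Q^* = Q$, $H(t)^* = H(t)$, and $(H(t)^{-1})^* = H(t)^{-1}$---are genuine statements about form-extended self-adjoint operators, and so that the chain $Q H(t)^{-1} Q H(t)^{-1} Q$ applied to $v \in \Hne$ lands consistently in a space paired with $u \in \Hne$ by the (extended) $L^2$ inner product.
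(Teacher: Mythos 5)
Your proposal is correct, and its Hermiticity step is in substance identical to the paper's: the computation $(B^{-1}u\,|\,v)_{\en^*,t}=(u\,|\,QH^{-1}QH^{-1}Qv)=(u\,|\,B^{-1}v)_{\en^*,t}$ is exactly the paper's operator identity $(QHQ)^{-1}B^{-1}=B^{*-1}(QHQ)^{-1}$ unpacked at the level of the $\Hne[t]$ inner product. Where you genuinely diverge is in proving $0\in\reso B(t)$. The paper does this concretely: it factorizes $B$ into triangular factors times $\left(\begin{smallmatrix}0&\one\\ L-W^*W&0\end{smallmatrix}\right)$ and checks that $\one-L^{-\frac12}W^*WL^{-\frac12}$ is invertible because $\norm{WL^{-\frac12}}<1$ (Assumption~\ref{asm:L-WW}), which yields an explicit bounded formula for $B^{-1}:\Hne\to\Hen$. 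You instead invoke Lax--Milgram for the form $H(t)$, whose coercivity on $\Hen$ is the left inequality of~\eqref{eq:H-H0-ineq}; since that inequality is itself derived from Assumption~\ref{asm:L-WW}, the two routes rest on the same hypothesis, but yours is shorter and reuses the already-established form bounds, while the paper's buys an explicit expression for the inverse. Two small points of care: the preceding proposition in the paper only writes out the proof of the upper bound in~\eqref{eq:H-H0-ineq} (the lower bound, which your coercivity argument needs, is stated but left to the reader, so citing it is legitimate but you are leaning on an omitted half of that proof); and the duality bookkeeping you flag at the end is real --- the $\mathcal{H}$-antidual of $\Hen=\mathcal{K}^1\oplus\mathcal{K}^0$ is $\mathcal{K}^{-1}\oplus\mathcal{K}^0=Q\Hne$ rather than $\Hne$ itself, so the swap $Q$ is doing actual work in identifying $B=QH:\Hen\to\Hne$, not merely cosmetic relabeling.
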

\begin{proof}
  For notational simplicity, we drop the time-dependence of $B(t)$ and the other objects.

  First note that, by definition, $H_0(t)^{-\frac12} = (L(t)^{-\frac12} \oplus \one)$ maps $\mathcal{H}$ to $\Hen$, and $(Q H_0(t) Q)^{-\frac12} = (\one \oplus L(t)^{-\frac12})$ maps $\Hne$ to $\mathcal{H}$.
  Now, to check that $B(t)$ is well-defined, we calculate
  \begin{equation*}
    \twomat{\one}{0}{0}{L^{-\frac12}}
    \twomat{W}{\one}{L}{W^*}
    \twomat{L^{-\frac12}}{0}{0}{\one}
    =
    \twomat{W L^{-\frac12}}{\one}{\one}{L^{-\frac12} W^*},
  \end{equation*}
  which is bounded by Assumption~\ref{asm:L-WW}.

  Next, we show that $0 \in \reso B$, and consequently also that $B$ is closed.
  We rewrite $B$ as
  \begin{equation*}
    B
    =
    \twomat{\one}{0}{W^*}{\one}
    \twomat{0}{\one}{L-W^*W}{0}
    \twomat{\one}{0}{W}{\one}
  \end{equation*}
  and check that $B^{-1}$ is bounded from $\Hne$ to $\Hen$:
  \begin{equation*}
    \twomat{L^\frac12}{0}{0}{\one}
    B^{-1}
    \twomat{\one}{0}{0}{L^\frac12}
    =
    \twomat{\one}{0}{-L^{-\frac12} W^*}{\one}
    \twomat{0}{\bigl( \one - L^{-\frac12} W^*W L^{-\frac12} \bigr)^{-1}}{\one}{0}
    \twomat{\one}{0}{-W L^{-\frac12}}{\one},
  \end{equation*}
  where the first and last factor on the right-hand side are bounded by Assumption~\ref{asm:L-WW}, and
  \begin{equation*}
    \one - L^{-\frac12} W^*W L^{-\frac12}
  \end{equation*}
  is invertible because $\norm{L^{-\frac12} W^*W L^{-\frac12}} < 1$, also by Assumption~\ref{asm:L-WW}.

  Finally, we check that $B$ is Hermitian on $\Hne$.
  We calculate
  \begin{equation*}
    (Q H Q)^{-1} B^{-1}
    = (B Q H Q)^{-1}
    = (Q H Q H Q)^{-1}
    = (Q H Q B^*)^{-1}
    = B^{*-1} (Q H Q)^{-1}.
  \end{equation*}
\end{proof}

We can now define for each time $t \in \RR$ a whole scale of Hilbert spaces
\begin{equation*}
  \mathcal{H}_{\lambda,t} \defn \abs{B(t)}^{-(1+\lambda)/2} \Hne[t],
  \quad
  \lambda \in \RR,
\end{equation*}
with scalar products
\begin{equation*}
  (u\,|\,v)_{\lambda,t} \defn \bigl(u\,\big|\,\abs{B(t)}^{1+\lambda} v\bigr)_{\en^*,t},
  \quad
  u, v \in \mathcal{H}_{\lambda,t}.
\end{equation*}
Above we performed the polar decomposition with respect to the Hilbert space~$\Hne[t]$, where we have
\begin{equation*}
  \abs{B(t)} = \sqrt{B(t)^2} = \sqrt{Q H(t) Q H(t)}.
\end{equation*}

It follows from its definition, that $B(t)$ extends/restricts to a self-adjoint operator on each of the spaces~$\mathcal{H}_{\lambda,t}$.
When $B(t)$ is interpreted as an operator on~$\mathcal{H}_{\lambda,t}$, its domain is $\mathcal{H}_{\lambda+2,t}$.

Clearly the scales $\mathcal{H}_{\lambda,t}$ contain $\Hne[t] = \mathcal{H}_{-1,t}$.
They also contain the (instantaneous) energy spaces $\Hen[t] = \mathcal{H}_{1,t}$, because a short calculation shows $H(t) = Q H(t)^{-1} Q \abs{B(t)}^2$.
Furthermore, we define the \emph{(instantaneous) dynamical spaces}
\begin{equation*}
  \Hdyn[t] \defn \mathcal{H}_{0,t},
\end{equation*}
which are treated as the central spaces in these scales.
Note that $\Hdyn[t]$ is the form domain of~$B(t)$.
We identify $\mathcal{H}^{\mathrlap{*}}{}_{0,t}$ with $\mathcal{H}_{0,t}$, and hence $\mathcal{H}^{\mathrlap{*}}{}_{\lambda,t}$ is identified with $\mathcal{H}_{-\lambda,t}$.
Thus we obtain the rigged Hilbert space setting
\begin{equation*}
  \Hen[t] \subset \Hdyn[t] \subset \Hne[t].
\end{equation*}

\begin{proposition}
  In the sense of Hilbertizable spaces, we have
  \begin{equation}\label{eq:interpo}
    \mathcal{H}_{\lambda,t} = \mathcal{H}_\lambda,
    \quad
    \lambda \in [-1,1],
  \end{equation}
  thus justifying our notation.
  In particular,
  \begin{equation*}
    \Hen[t] = \Hen,
    \quad
    \Hdyn[t] = \Hdyn,
    \quad
    \Hne[t] = \Hne.
  \end{equation*}
\end{proposition}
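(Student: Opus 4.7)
The plan is to verify the identification of Hilbertizable structures at the two endpoints $\lambda=\pm 1$ directly, and then propagate to $\lambda\in(-1,1)$ by complex interpolation (or equivalently by the Heinz--Kato inequality, Thm.~\ref{thm:heinz-kato}).

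For $\lambda=-1$ the claim is immediate from the construction: $\mathcal{H}_{-1,t}=\Hne[t]$ carries the scalar product $(\,\cdot\,|\,QH(t)^{-1}Q\,\cdot\,)$, and the sandwich estimate~\eqref{eq:QH-QH0-ineq} gives equivalence with the scalar product $(\,\cdot\,|\,QH_0(t)^{-1}Q\,\cdot\,)$ defining $\mathcal{H}_{-1}=\Hne$. For $\lambda=+1$, by definition $\mathcal{H}_{1,t}=\abs{B(t)}^{-1}\Hne[t]$ with norm $\norm{\abs{B(t)}u}_{\en^*,t}$; using the identity $H(t)=QH(t)^{-1}Q\abs{B(t)}^2$ (a consequence of $B=QH$, hence $B^2=QHQH$) I would compute
\begin{equation*}
  \norm{\abs{B(t)}u}_{\en^*,t}^2 = (u \,|\, QH(t)^{-1}Q\abs{B(t)}^2 u) = (u \,|\, H(t)u),
\end{equation*}
showing that $\mathcal{H}_{1,t}$ coincides isometrically with $\Hen[t]$, whose norm is equivalent to that of $\mathcal{H}_1=\Hen$ by~\eqref{eq:H-H0-ineq}.

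For intermediate $\lambda\in(-1,1)$, the key observation is that both $\{\mathcal{H}_{\lambda,t}\}$ and $\{\mathcal{H}_\lambda\}$ are complex interpolation scales between their respective $\pm 1$-endpoints. Indeed, the spectral theorem applied to the positive self-adjoint operator $\abs{B(t)}$ on $\Hne[t]$ yields $[\mathcal{H}_{-1,t},\mathcal{H}_{1,t}]_\theta=\abs{B(t)}^{-\theta}\Hne[t]=\mathcal{H}_{2\theta-1,t}$, while the identity $\mathcal{H}_\lambda=(L(t)\oplus L(t))^{-\lambda/4}\mathcal{H}_0$ from~\eqref{eq:scale} together with the spectral theorem for $L(t)\oplus L(t)$ gives $[\mathcal{H}_{-1},\mathcal{H}_1]_\theta=\mathcal{H}_{2\theta-1}$. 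Since the complex interpolation functor sends Hilbertizable equivalences at the endpoints to Hilbertizable equivalences at intermediate points, the endpoint identifications $\mathcal{H}_{\pm 1,t}=\mathcal{H}_{\pm 1}$ propagate to all $\lambda\in[-1,1]$.

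The only real subtlety is to justify this interpolation step cleanly at the level of Hilbertizable (rather than Hilbert) spaces; a safe alternative is to apply Heinz--Kato directly to the identity map viewed between the two scales, using the endpoint bounds. Beyond this, no new estimates are needed past those already embodied in Assumption~\ref{asm:L} and the standing norm equivalences for $\Hen[t]$ and $\Hne[t]$.
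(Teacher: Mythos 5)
Your proposal is correct and follows essentially the same route as the paper: the endpoint identifications come from the sandwich inequalities~\eqref{eq:H-H0-ineq} and~\eqref{eq:QH-QH0-ineq} (together with $H(t)=QH(t)^{-1}Q\abs{B(t)}^2$ for $\lambda=1$), and the intermediate cases follow by interpolation. The paper implements the interpolation step exactly via your ``safe alternative'': it compares $\abs{B(t)}$ with $\bigl(L(t)\oplus L(t)\bigr)^{1/2}$ as bounded invertible operators from $\Hen$ to $\Hne$ and applies the Heinz--Kato inequality to their fractional powers, which avoids having to set up the complex interpolation functor for Hilbertizable couples.
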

\begin{proof}
  It follows from~\eqref{eq:H-H0-ineq} and~\eqref{eq:QH-QH0-ineq} that $\Hen[t] = \Hen$ and $\Hne[t] = \Hne$.
  Since both $L(t)^\frac12 \oplus L(t)^\frac12$ and~$\abs{B}$ can be understood as invertible bounded operators from~$\Hen$ to~$\Hne$, there exists $c > 1$ such that
  \begin{equation*}
    c^{-1} \norm[\big]{\bigl( L(t) \oplus L(t) \bigr)^\frac12 u}_{\en^*} \leq \norm[\big]{\abs{B(t)} u}_{\en^*} \leq c \norm[\big]{\bigl( L(t) \oplus L(t) \bigr)^\frac12 u}_{\en^*}
  \end{equation*}
  By interpolation (\eg, using the Heinz--Kato inequality, Thm.~\ref{thm:heinz-kato}),
  \begin{equation*}
    c^{-\delta} \norm[\big]{\bigl( L(t) \oplus L(t) \bigr)^{\delta/2} u}_{\en^*} \leq \norm[\big]{\abs{B(t)}^\delta u}_{\en^*} \leq c^\delta \norm[\big]{\bigl( L(t) \oplus L(t) \bigr)^{\delta/2} u}_{\en^*}
  \end{equation*}
  for $\delta \in [0,1]$.
  It follows that the norms for $\mathcal{H}_\lambda$ and $\mathcal{H}_{\lambda,t}$ with $\lambda \in [-1,1]$ are equivalent and thus~\eqref{eq:interpo} follows.
\end{proof}

Note that for $\abs{\lambda}>1$ the spaces $\mathcal{H}_{\lambda,t}$ may depend on $t$ and do not have to coincide with $\mathcal{H}_\lambda$.

\section{Evolution}
\label{sec:evolution}

In the last section we laid the foundations for an application of the theory of non-autonomous evolution equations to the situation at hand, \ie, the first-order Klein--Gordon equation
\begin{equation*}
  \partial_t u(t) + \im B(t) u(t) = 0.
\end{equation*}
Autonomous evolution equations (\viz, with a time-independent generator) posses a well\hyp{}understood theory in terms of the theory of strongly continuous semigroups and groups.
The theory for non-autonomous evolution equations is significantly more complicated and subtle.
In Appx.~\ref{appx:evolution} we discuss the relevant results based on the work of Kato~\cite{kato:hyperbolic}.

Here we apply Thm.~\ref{thm:evolution_selfadjoint} to the operator~$B(t)$ on the spaces
\begin{equation}\label{eq:spaces-evolution}
  \mathcal{X}_t = \Hne[t]
  \quad\text{and}\quad
  \mathcal{Y}_t = \Hen[t].
\end{equation}
For this purpose, we need to check whether the conditions~\ref{item:evolution-selfadj:a}--\ref{item:evolution-selfadj:c} of Thm.~\ref{thm:evolution_selfadjoint} hold.
The self-adjointness condition~\ref{item:evolution-selfadj:c} is clearly true, see Sect.~\ref{sec:spaces}.
The next proposition implies that condition~\ref{item:evolution-selfadj:b}, a continuity condition on the norms of the Hilbert spaces~$\Hen[t]$ and~$\Hne[t]$, holds:

\begin{proposition}\label{prop:norm_growth}
  Let $C \in L^1_\loc(\RR)$ as in Assumption~\ref{asm:LW_growth_loc},
  $a(t) \in C(\RR)$ as in Assumption~\ref{asm:L-WW} and $\abs{t-s} \leq 1$ with $t \geq s$.
  Set
  \begin{equation*}
    c_{s,t} \defn \sup_{\tau \in [s,t]} \bigl(1-a(\tau)\bigr)^{-1}.
  \end{equation*}
  Then, for $\lambda \in [-1,1]$,
  \begin{equation}\label{eq:norm_growth_in_time}
    \norm{u}_{\lambda,t}\, \exp\left( -c_{s,t} \!\int_s^t\! C(\tau)\, \dif\tau \right) \leq \norm{u}_{\lambda,s} \leq \norm{u}_{\lambda,t}\, \exp\left( c_{s,t} \!\int_s^t\! C(\tau)\, \dif\tau \right).
  \end{equation}
\end{proposition}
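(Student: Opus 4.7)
The strategy is to prove the bound at the two endpoints $\lambda = \pm 1$ by a Gronwall-type argument, and then to interpolate to $\lambda \in (-1, 1)$ via the Heinz--Kato inequality (Thm.~\ref{thm:heinz-kato}). Since Assumption~\ref{asm:LW_growth_loc} is symmetric in $s$ and $t$, it suffices to establish one of the two inequalities in~\eqref{eq:norm_growth_in_time}.

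For $\lambda = 1$, I would work directly with the instantaneous energy quadratic form $u \mapsto (u \,|\, H(\tau)\, u)$. The crucial estimate is that for any $s', t' \in [s, t]$,
\begin{equation*}
  \abs{(u \,|\, (H(t') - H(s'))\, u)} \le \abs*{\int_{s'}^{t'} C(r)\, \dif r} \cdot (u \,|\, H_0(t')\, u),
\end{equation*}
which follows by writing $H(t') - H(s')$ in its $2 \times 2$ block form, bounding the diagonal $L$-block via Cauchy--Schwarz and the off-diagonal $W$-contributions via $2\abs{ab} \le a^2 + b^2$, and invoking Assumption~\ref{asm:LW_growth_loc} (which bounds the sum of the two relevant operator norms). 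Combined with the a priori inequality $H_0(\tau) \le c_{s,t}\, H(\tau)$ on $[s,t]$ coming from Assumption~\ref{asm:L-WW}, this yields the one-step operator bound $H(s') \le \bigl(1 + c_{s,t}\abs*{\int_{s'}^{t'} C(r)\, \dif r}\bigr) H(t')$ as quadratic forms on~$\Hen$. Iterating on a partition of $[s,t]$ of vanishing mesh and using $1 + x \le \e^x$ produces the operator inequality $H(s) \le \exp\bigl(c_{s,t}\int_s^t C(r)\, \dif r\bigr) H(t)$, and hence the claim for $\lambda = 1$ after evaluation on~$u$ and taking square roots.

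For $\lambda = -1$, the same strategy applies to $QH(\tau)^{-1}Q$: the resolvent identity $H(t')^{-1} - H(s')^{-1} = H(t')^{-1}\bigl(H(s') - H(t')\bigr) H(s')^{-1}$ reduces the required variation estimate to the one already established, and the Gronwall-by-partition argument then delivers the analogous inequality $QH(s)^{-1}Q \le \exp\bigl(c_{s,t}\int_s^t C(r)\, \dif r\bigr) QH(t)^{-1}Q$ on~$\Hne$. For the interior values $\lambda \in (-1, 1)$, the Heinz--Kato inequality applied to the equivalent pairs of scalar products at times $s$ and $t$ on~$\Hen$ and on~$\Hne$ interpolates the endpoint bounds through the whole Hilbertizable scale with the same exponent, yielding~\eqref{eq:norm_growth_in_time}.

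The principal technical subtlety is that Assumption~\ref{asm:LW_growth_loc} controls the variation of $L$ and $W$ only in integral (rather than pointwise) form, which forces the Gronwall step to be carried out as a product over a finite partition passed to the mesh-zero limit, rather than as a plain differential inequality. Uniformity of the constants across $[s,t]$ is guaranteed by the local integrability of $C$ and by the continuity of $a(\cdot)$ entering $c_{s,t}$.
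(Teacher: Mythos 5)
Your argument is correct and follows essentially the same route as the paper's proof: the key estimate is the sandwich bound $\norm[\big]{(L(t)^{-\frac12}\oplus\one)\bigl(H(s)-H(t)\bigr)(L(t)^{-\frac12}\oplus\one)}\leq\int_s^t C(\tau)\,\dif\tau$ from Assumption~\ref{asm:LW_growth_loc}, combined with $H_0(\tau)\leq c_{s,t}H(\tau)$ from~\eqref{eq:H-H0-ineq}, giving the case $\lambda=1$; the case $\lambda=-1$; and Heinz--Kato interpolation for the interior. Two remarks on where you deviate. First, the partition-and-mesh-limit Gr\"onwall step that you single out as the principal technical subtlety is unnecessary: the one-step bound $\norm{u}_{\en,s}^2\leq\bigl(1+c_{s,t}\int_s^t C\bigr)\norm{u}_{\en,t}^2$ already yields the exponential bound via $1+x\leq\e^{x}$, which holds for all $x\geq0$ and not only for small increments --- this is exactly what the paper does (your iteration is still valid, since any partition, including the trivial one, gives the same product bound, but it buys nothing). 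Second, for $\lambda=-1$ the paper simply invokes duality of $\Hne[\tau]$ with $\Hen[\tau]$ under the charge pairing, which is a one-line argument; your resolvent-identity route also works, but note that turning $H(t)^{-1}-H(s)^{-1}=H(t)^{-1}(H(s)-H(t))H(s)^{-1}$ into a relative form bound against $QH(t)^{-1}Q$ requires the additional control of $\norm{H(t)^{\frac12}H(s)^{-1}H(t)^{\frac12}}$, which you must extract from the already established $\lambda=1$ estimate, so the duality argument is both shorter and logically lighter.
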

\begin{proof}
  First we show~\eqref{eq:norm_growth_in_time} for $\lambda = 1$, \ie, for the energy space.

  By Assumption~\ref{asm:LW_growth_loc}, we have
  \begin{align}
    \MoveEqLeft\norm[\big]{\bigl( L(t)^{-\frac12} \oplus \one \bigr) \bigl( H(s) - H(t) \bigr) \bigl( L(t)^{-\frac12} \oplus \one \bigr)} \nonumber \\
    &\leq \norm[\big]{L(t)^{-\frac12} \bigl( L(s) - L(t) \bigr) L(t)^{-\frac12}} + 2\norm[\big]{\bigl( W(s) - W(t) \bigr) L(t)^{-\frac12}} \nonumber \\
    &\leq \int_s^t\! C(\tau)\, \dif\tau.\label{eq:growth_estimate1}
  \end{align}
  Eq.~\eqref{eq:H-H0-ineq} then implies that
  \begin{equation}\label{eq:growth_estimate2}
    \norm[\big]{H(t)^{-\frac12}\bigl(L(t)\oplus\one\bigr)H(t)^{-\frac12}} \leq c_{s,t}.
  \end{equation}
  Putting together~\eqref{eq:growth_estimate1} and~\eqref{eq:growth_estimate2}, we obtain
  \begin{equation*}
    \norm[\big]{H(t)^{-\frac12} \bigl( H(s) - H(t) \bigr) H(t)^{-\frac12}}
    \leq c_{s,t} \!\int_s^t\! C(\tau)\, \dif\tau.
  \end{equation*}
  Consequently we have
  \begin{equation*}
    \abs[\big]{\norm{u}_{\en,s}^2 - \norm{u}_{\en,t}^2} \leq \norm{u}_{\en,t}^2\, \left( c_{s,t} \!\int_s^t\! C(\tau)\, \dif\tau \right).
  \end{equation*}
  Therefore
  \begin{align*}
    \norm{u}_{\en,s}^2
    &\leq \norm{u}_{\en,t}^2\, \left( 1 + c_{s,t} \!\int_s^t\! C(\tau)\, \dif\tau \right) \\
    &\leq \norm{u}_{\en,t}^2\, \exp\left( c_{s,t} \!\int_s^t\! C(\tau)\, \dif\tau \right)
  \intertext{and, exchanging the role of $t$ and $s$, we can similarly derive}
    \norm{u}_{\en,s}^2
    &\geq \norm{u}_{\en,t}^2\, \exp\left( -c_{s,t} \!\int_s^t\! C(\tau)\, \dif\tau \right),
  \end{align*}
  so that the inequality~\eqref{eq:norm_growth_in_time} for $\lambda=1$ follows.

  For $\lambda = -1$ the inequality follows by duality.
  Using interpolation, we can then extend the inequality to the remaining values of $\lambda$.
\end{proof}

To show that the condition~\ref{item:evolution-selfadj:a} of Thm.~\ref{thm:evolution_selfadjoint} holds, we only need to show the norm-continuity of $t \mapsto B(t)$; the remaining statements are obvious.

\begin{proposition}\label{prop:norm_cont}
  With $C \in L^1_\loc(\RR)$ as in Assumption~\ref{asm:LW_growth_loc}, $c_{s,t}$ as in~\eqref{eq:H-H0-ineq} and $\abs{t-s} \leq 1$
  \begin{equation*}
    \norm[\big]{\bigl( B(s) - B(t) \bigr) u}_{\en^*,t} \leq \norm{u}_{\en,t}\, \abs*{c_{s,t} \!\int_s^t\! C(\tau)\, \dif\tau},
  \end{equation*}
  where we place the absolute value on the right-hand side because $t \geq s$ or $t \leq s$.
  In particular, $t \mapsto B(t)$ is norm-continuous as an operator from $\Hen[t]$ to $\Hne[t]$.
\end{proposition}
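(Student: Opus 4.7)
The plan is to reduce the statement to a single operator-norm estimate on the auxiliary Hilbert space~$\mathcal{H}$, rerunning the calculation from the proof of Proposition~\ref{prop:norm_growth}. Since $B(t) = QH(t)$ and $Q^2 = \one$ with $Q$ self-adjoint on~$\mathcal{H}$, we have $B(s)-B(t) = Q(H(s)-H(t))$, and a short manipulation of the definition $(u\,|\,v)_{\en^*,t} = (u\,|\,QH(t)^{-1}Qv)$ gives
\begin{equation*}
  \norm[\big]{(B(s)-B(t))u}_{\en^*,t}^2 = \norm[\big]{H(t)^{-\frac12}(H(s)-H(t))u}_{\mathcal{H}}^2.
\end{equation*}
Combined with $\norm{u}_{\en,t} = \norm{H(t)^{\frac12}u}_{\mathcal{H}}$, this reduces the task to the operator-norm bound
\begin{equation*}
  \norm[\big]{H(t)^{-\frac12}(H(s)-H(t))H(t)^{-\frac12}}_{\mathcal{H}} \leq c_{s,t}\,\abs[\bigg]{\int_s^t C(\tau)\,\dif\tau}.
\end{equation*}

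To obtain this bound I would factor through $(L(t)\oplus\one)^{\pm\frac12}$, writing the product as $A^*MA$ with $A \defn (L(t)\oplus\one)^{\frac12}H(t)^{-\frac12}$ and $M \defn (L(t)^{-\frac12}\oplus\one)(H(s)-H(t))(L(t)^{-\frac12}\oplus\one)$. The outer factor satisfies $\norm{A}^2 \leq c_{s,t}$ directly by~\eqref{eq:H-H0-ineq}, exactly as at~\eqref{eq:growth_estimate2}. For the middle factor, a block expansion (its top-left entry is $L(t)^{-\frac12}(L(s)-L(t))L(t)^{-\frac12}$, its off-diagonals are $(W(s)-W(t))L(t)^{-\frac12}$ and its adjoint, and its bottom-right is zero), followed by the elementary norm estimate for such a $2\times 2$ Hermitian block matrix, gives
\begin{equation*}
  \norm{M} \leq \norm[\big]{L(t)^{-\frac12}(L(s)-L(t))L(t)^{-\frac12}} + 2\norm[\big]{(W(s)-W(t))L(t)^{-\frac12}},
\end{equation*}
which is $\leq \abs{\int_s^t C(\tau)\,\dif\tau}$ by Assumption~\ref{asm:LW_growth_loc}. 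Multiplying the three pieces yields the advertised inequality.

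The concluding norm-continuity statement will then follow at once: as $s\to t$ inside a unit interval, $c_{s,t}$ stays bounded by continuity of~$a$, while $\int_s^t C(\tau)\,\dif\tau\to 0$ by absolute continuity of the Lebesgue integral of the $L^1_\loc$ function~$C$. I do not expect any serious obstacle here: every ingredient has already appeared in essentially identical form in Proposition~\ref{prop:norm_growth}, and the only genuinely new input is the identification of $\norm{\,\cdot\,}_{\en^*,t}$ through $H(t)^{-\frac12}$, which is immediate from $Q^2=\one$.
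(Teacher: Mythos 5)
Your proposal is correct and follows essentially the same route as the paper: both reduce the claim to the operator bound $\norm[\big]{H(t)^{-\frac12}\bigl(H(s)-H(t)\bigr)H(t)^{-\frac12}} \leq c_{s,t}\abs[\big]{\int_s^t C(\tau)\,\dif\tau}$ already established in the proof of Prop.~\ref{prop:norm_growth}, via the identity $B=QH$, the block estimate of Assumption~\ref{asm:LW_growth_loc}, and~\eqref{eq:H-H0-ineq}. The only difference is organizational: the paper first conjugates $B(s)-B(t)$ by $L(t)^{-\frac12}\oplus\one$ and $\one\oplus L(t)^{-\frac12}$ and then passes to the instantaneous norms, while you conjugate by $H(t)^{\pm\frac12}$ first and then factor through $H_0(t)$.
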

\begin{proof}
  We reduce the problem to the inequalities
  \begin{align*}
    \MoveEqLeft \norm[\big]{\bigl( \one \oplus L(t)^{-\frac12} \bigr) \bigl( B(s) - B(t) \bigr) \bigl( L(t)^{-\frac12} \oplus \one \bigr) } \\
    &= \norm[\big]{Q \bigl( L(t)^{-\frac12} \oplus \one \bigr) Q \bigl( B(s) - B(t) \bigr) \bigl( L(t)^{-\frac12} \oplus \one \bigr) } \\
    &\leq \norm[\big]{\bigl( L(t)^{-\frac12} \oplus \one \bigr) \bigl( H(s) - H(t) \bigr) \bigl( L(t)^{-\frac12} \oplus \one \bigr) } \\
    &\leq \abs*{\int_s^t\! C(\tau)\, \dif\tau}
  \end{align*}
  and proceed similar as in the proof of Prop.~\ref{prop:norm_growth}.
  Since the integral is continuous, the required norm-continuity follows.
\end{proof}

It follows that we can globally define an evolution for $B(t)$:
\begin{theorem}\label{thm:evolution}
  There exists a unique, strongly continuous family of bounded operators $\{U(t,s)\}_{s,t \in \RR}$ on $\Hne$, the \emph{evolution generated by} $B(t)$, with the following properties:
  \begin{enumerate}
    \item
      For all $r,s,t \in \RR$, we have the identities
      \begin{equation}\label{eq:U_composition}
        U(t,t) = \one,
        \quad
        U(t,r) U(r,s) = U(t,s).
      \end{equation}
    \item
      For $\lambda \in [-1,1]$, $U(t,s) \mathcal{H}_\lambda \subset \mathcal{H}_\lambda$, $(t,s) \mapsto U(t,s)$ is strongly $\mathcal{H}_\lambda$-continuous and satisfies the bound
      \begin{subequations}\label{eq:U_bound}\begin{align}
        \norm{U(t,r)}_{\lambda,s} &\leq \exp\left( 2c_{r,t} \!\int_r^t\! C(\tau)\, \dif\tau \right), \\
        \norm{U(r,t)}_{\lambda,s} &\leq \exp\left( 2c_{r,t} \!\int_r^t\! C(\tau)\, \dif\tau \right)
      \end{align}\end{subequations}
      with $C, c$ as in Prop.~\ref{prop:norm_growth} and $r \leq s \leq t$ where $\abs{t-r} \leq 1$.
    \item \label{item:evolution:3}
      For all $u \in \Hen$, $U(t,s) u$ is continuously differentiable in $s,t \in \RR$ with respect to the strong topology of~$\Hne$ and it satisfies
      \begin{subequations}\label{eq:U_sol}\begin{align}
         \im \partial_t U(t,s) u &= B(t) U(t,s) u, \\
        -\im \partial_s U(t,s) u &= U(t,s) B(s) u.
      \end{align}\end{subequations}
  \end{enumerate}
\end{theorem}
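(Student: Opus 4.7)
The plan is to invoke the abstract non-autonomous evolution theorem Thm.~\ref{thm:evolution_selfadjoint} (Kato's theory, as reviewed in Appx.~\ref{appx:evolution}), applied to the generator $B(t)$ with ambient Hilbert spaces $\mathcal{X}_t = \Hne[t]$ and regularity spaces $\mathcal{Y}_t = \Hen[t]$. The three hypotheses of that theorem have essentially already been arranged: self-adjointness with $0$ in the resolvent set is Prop.~\ref{prop:B}; the compatibility/continuity of the families of Hilbert norms on $\Hen[t]$ and $\Hne[t]$ is Prop.~\ref{prop:norm_growth} at $\lambda=\pm 1$; the norm-continuity of the generator as an operator from $\Hen[t]$ to $\Hne[t]$ is Prop.~\ref{prop:norm_cont}. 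These verifications immediately produce, on any compact time interval of length at most $1$, a unique strongly continuous evolution $U(t,s)$ on $\Hne$ satisfying the composition rule and the differential identities~\eqref{eq:U_sol} on $\Hen$.

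To pass from local to global: since the local evolutions obtained above are unique on overlaps, I would glue them across unit intervals $[n,n+1]$, $n\in\ZZ$, by iterating the composition rule~\eqref{eq:U_composition}. This produces $U(t,s)$ for all $s,t\in\RR$ with properties (i) and (iii) of the statement.

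For the norm bound~\eqref{eq:U_bound}, I would first extract from Kato's theorem the bounds in the instantaneous norms $\norm\cdot_{\en^*,\tau}$ (on $\mathcal{X}_\tau=\Hne[\tau]$) and $\norm\cdot_{\en,\tau}$ (on $\mathcal{Y}_\tau=\Hen[\tau]$); the latter endpoint comes from the self-adjoint dual formulation, using that $B(t)^*=QB(t)Q^{-1}$ interchanges the roles of $\Hen$ and $\Hne$. Converting these instantaneous norms at the evolved time to the reference norm at time $s$ via Prop.~\ref{prop:norm_growth} produces the stated exponential factor, where one power of $\exp(c_{r,t}\int_r^t C\,\dif\tau)$ comes from the evolution in the instantaneous picture and a second from the norm conversion, accounting for the overall factor of $2$ in the exponent. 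For intermediate $\lambda\in(-1,1)$, I would interpolate the endpoint bounds at $\lambda=\pm 1$ using the Heinz--Kato inequality (Thm.~\ref{thm:heinz-kato}); since the exponential factor is independent of $\lambda$, it passes unchanged to all intermediate scales, and the invariance $U(t,s)\mathcal{H}_\lambda\subset\mathcal{H}_\lambda$ and strong $\mathcal{H}_\lambda$-continuity follow by the same interpolation together with a standard approximation argument in $\Hen\cap\Hne$.

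The main obstacle is the careful bookkeeping between the genuinely time-dependent instantaneous Hilbert structures on $\mathcal{H}_{\lambda,t}$ and the $t$-independent Hilbertizable structure on $\mathcal{H}_\lambda$: Kato's theorem is phrased intrinsically in the former, while the bound~\eqref{eq:U_bound} mixes different time labels (operator norm of $U(t,r)$ measured in the norm at a third time $s$ with $r\le s\le t$). Verifying that the interpolation is genuinely uniform in $\lambda$ and in the choice of reference time $s$, and that the differentiation identity~\eqref{eq:U_sol} on $\Hen$ matches the abstract one delivered by Kato's construction in the antidual picture, are the technical points that require the most care; everything else is a direct translation of the abstract theorem into the present setting.
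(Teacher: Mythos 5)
Your proposal is correct and follows essentially the same route as the paper: verify the hypotheses of Thm.~\ref{thm:evolution_selfadjoint} via Prop.~\ref{prop:B}, Prop.~\ref{prop:norm_growth} and Prop.~\ref{prop:norm_cont}, obtain the evolution with the endpoint bounds on $\Hen[s]$ and $\Hne[s]$ over sufficiently small compact intervals, interpolate (Heinz--Kato) to get~\eqref{eq:U_bound} for all $\lambda\in[-1,1]$, and glue over a cover of $\RR$ using the composition law. The bookkeeping points you flag (instantaneous vs.\ reference-time norms, uniformity of the interpolation) are exactly the ones the paper's proof handles implicitly, so nothing essential is missing.
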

\begin{proof}
  Props.~\ref{prop:norm_growth} and~\ref{prop:norm_cont} as well as the results of Sect.~\ref{sec:spaces} show that Thm.~\ref{thm:evolution_selfadjoint} can be applied to our operator $B(t)$ understood as an operator from $\Hen$ to $\Hne$ (or, equivalently, as a form on $\Hdyn$ with form domain $\Hen$).
  We thus obtain for every sufficiently small compact interval $I \subset \RR$ an evolution $U(t,s)$ with the properties~\ref{item:evolution-selfadj:1}--\ref{item:evolution-selfadj:4} of Thm.~\ref{thm:evolution_selfadjoint}.
  In particular, we have for $r,t \in I$ and $r \leq s \leq t$
  \begin{align*}
    \norm{U(t,r)}_{\en,s} &\leq \exp\left( 2c_{r,t} \!\int_r^t\! C(\tau)\, \dif\tau \right), \\
    \norm{U(t,r)}_{\en^*,s} &\leq \exp\left( 2c_{r,t} \!\int_r^t\! C(\tau)\, \dif\tau \right).
  \end{align*}
  The same bounds also hold for $\norm{U(r,t)}_{\en,s}$ and $\norm{U(r,t)}_{\en^*,s}$.
  By interpolation we find~\eqref{eq:U_bound}.

  We cover $\RR$ by compact intervals.
  Using the identity~\eqref{eq:U_composition}, we thereby define the evolution $U(t,s)$ on the whole real axis by gluing.
  For finite $s,t$, it has the properties~\ref{item:evolution-selfadj:1}--\ref{item:evolution-selfadj:4} of Thm.~\ref{thm:evolution_selfadjoint}.
\end{proof}

Eq.~\eqref{eq:U_bound} states that $U(t,s)$ is bounded for finite $t,s$.
To obtain stronger results later, we can choose more stringent assumptions:
\begin{corollary}\label{cor:evolution-bound}
  If Assumption~\ref{asm:LW_growth} holds, and we set $C_1(t) \defn 2 (1-a)^{-1} C(t)$, then
  \begin{equation*}
    \norm{U(t,s)}_{\lambda,r} \leq \exp\left( \int_\RR\! C_1(\tau)\, \dif\tau \right)
  \end{equation*}
  for all $r,s,t \in \RR$ and any $\lambda \in [-1,1]$.
\end{corollary}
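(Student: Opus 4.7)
Under Assumption~\ref{asm:asymptotic}, two global strengthenings hold uniformly in $s,t \in \RR$: $a(t) \leq a < 1$ forces $c_{s,t} \leq (1-a)^{-1}$, and $C \in L^1(\RR)$ bounds $\int_s^t C$ by $\norm{C}_{L^1(\RR)}$. Consequently $C_1 = 2(1-a)^{-1} C \in L^1(\RR)$, and the local bound~\eqref{eq:U_bound} of Thm.~\ref{thm:evolution} specializes to $\norm{U(t,r)}_{\lambda, s} \leq \exp\bigl( \int_r^t C_1 \bigr)$ for $|t-r| \leq 1$, $r \leq s \leq t$. The plan is to promote this to a bound independent of $s,t,r$.

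First I would upgrade Prop.~\ref{prop:norm_growth} to the global statement
$$\norm{u}_{\lambda, s} \leq \exp\left( (1-a)^{-1} \left| \int_s^t C(\tau)\, \dif\tau \right| \right) \norm{u}_{\lambda, t} \leq \exp\bigl( \tfrac{1}{2} \norm{C_1}_{L^1(\RR)} \bigr) \norm{u}_{\lambda, t},$$
valid for all $s,t \in \RR$ and $\lambda \in [-1,1]$. This follows by partitioning $[s,t]$ into unit-length subintervals, applying Prop.~\ref{prop:norm_growth} on each link, and collapsing the product of local exponentials into a single exponential of the additive integral (possible because $c_{s_i,t_i}$ is uniformly bounded by $(1-a)^{-1}$ under Assumption~\ref{asm:asymptotic}). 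In particular, the family $\{\norm{\,\cdot\,}_{\lambda, t}\}_{t \in \RR}$ is uniformly equivalent on $\mathcal{H}_\lambda$.

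For the evolution bound itself I would repeat the energy estimate globally rather than compose local Thm.~\ref{thm:evolution} bounds (the latter would introduce a change-of-reference factor at every partition step and blow up as $|t-s|\to\infty$). The identity $B(\tau)^* H(\tau) = H(\tau) B(\tau)$, which follows from $H = QB = B^* Q$, cancels the commutator contribution to the time-derivative of the energy norm along the evolution, leaving
$$\frac{\dif}{\dif\tau} \norm{U(\tau,s) u_0}_{\en, \tau}^2 = \bigl( U(\tau,s) u_0 \,\big|\, \dot H(\tau)\, U(\tau,s) u_0 \bigr) \leq (1-a)^{-1} C(\tau)\, \norm{U(\tau,s) u_0}_{\en, \tau}^2,$$
where the inequality is the infinitesimal global version of~\eqref{eq:growth_estimate1}--\eqref{eq:growth_estimate2}. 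Gronwall then gives a uniform bound on $\norm{U(t,s)}_{\en, t}$, and the norm equivalence of the previous step converts this into a bound on $\norm{U(t,s)}_{\en, r}$ for any fixed $r$ at the cost of two further multiplicative constants, all absorbable into $\exp(\norm{C_1}_{L^1})$. The case $\lambda = -1$ follows by duality, and the full range $\lambda \in [-1,1]$ by Heinz--Kato interpolation (Thm.~\ref{thm:heinz-kato}). The main obstacle, sidestepped by this direct energy-estimate approach, is controlling the accumulation of change-of-reference factors through a long composition chain; using a single global Gronwall step instead of $n \sim |t-s|$ local applications keeps the constant finite.
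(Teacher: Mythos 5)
Your overall strategy is viable, but the central analytic step has a rigor gap. You differentiate $\tau \mapsto \norm{U(\tau,s)u_0}_{\en,\tau}^2$ and invoke $\dot H(\tau)$, justified as "the infinitesimal version" of~\eqref{eq:growth_estimate1}--\eqref{eq:growth_estimate2}. But Assumption~\ref{asm:LW_growth} only provides an integrated modulus of continuity, $\norm[\big]{H_0(t)^{-\frac12}(H(t)-H(s))H_0(t)^{-\frac12}} \leq \abs{\int_s^t C}$ with $C \in L^1(\RR)$; this makes $t \mapsto H(t)$ norm-absolutely continuous (as a form on $\Hen$) but does \emph{not} guarantee that $\dot H(\tau)$ exists, even almost everywhere, since $B(\mathcal{H})$ lacks the Radon--Nikodym property. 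Likewise $\tau \mapsto U(\tau,s)u_0$ is differentiable only into $\Hne$ for $u_0 \in \Hen$, so the pointwise energy identity needs care. The argument must therefore be run on difference quotients / in integrated form --- which is exactly what Prop.~\ref{prop:norm_growth} together with condition~\ref{item:evolution-selfadj:b} of Thm.~\ref{thm:evolution_selfadjoint} already encode. Once you phrase it that way, you have reproduced the intended proof: under Assumption~\ref{asm:asymptotic}, Props.~\ref{prop:norm_growth} and~\ref{prop:norm_cont} hold for \emph{all} $s,t \in \RR$ (the restriction $\abs{t-s}\leq 1$ was only needed because Assumption~\ref{asm:LW_growth_loc} controls short increments), with the uniform constant $c_{s,t} \leq (1-a)^{-1}$ and $C \in L^1(\RR)$. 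Hence Thm.~\ref{thm:evolution_selfadjoint} applies on every compact interval with the single $L^1$ datum $(1-a)^{-1}C$, and its conclusions~\ref{item:evolution-selfadj:2} and~\ref{item:evolution-selfadj:4}, interpolated via Heinz--Kato, give the bound uniformly in the interval. No partitioning into unit pieces and no separate Gronwall argument is required.

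Two smaller points. First, your worry that composing the local bounds of Thm.~\ref{thm:evolution} "blows up as $\abs{t-s}\to\infty$" is overstated: if each link $U(t_i,t_{i-1})$ is measured in its own left-endpoint norm and one telescopes, the change-of-reference factor for link $i$ is only $\exp\bigl(c\int_{t_{i-1}}^{t_i}C\bigr)$, and the product over all links is $\exp\bigl(c\int_s^t C\bigr) \leq \exp(c\norm{C}_{L^1})$ --- finite, just with a worse numerical constant. Second, your final bookkeeping does not actually land on $\exp\bigl(\int_\RR C_1\bigr)$: the two change-of-reference factors needed to pass to an arbitrary $r$ (possibly outside $[s,t]$) each cost up to $\exp\bigl((1-a)^{-1}\norm{C}_{L^1}\bigr)$, pushing the exponent above $\norm{C_1}_{L^1}$. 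The corollary's constant should be read modulo this kind of slack, but "all absorbable into $\exp(\norm{C_1}_{L^1})$" is not literally correct as written.
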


In Assumption~\ref{asm:L-selfadj} we supposed that $L(t)$ is positive and invertible.
Actually, the main results of this section remain true if $L(t)$ is only bounded from below:

\begin{corollary}\label{cor:bounded-below}
  Instead of Assumptions~\ref{asm:L}, suppose that there exists a constant $b > 0$ such that Assumptions~\ref{asm:L} hold for $L(t)+b$.
  Then Thm.~\ref{thm:evolution} holds with respect to the scale of Hilbert spaces and constants obtained from $L(t)+b$, and with the bounds~\eqref{eq:U_bound} replaced by
  \begin{align*}
    \norm{U(t,r)}_{\lambda,s} &\leq \exp\left( 2c_{r,t} \!\int_r^t\! C(\tau)\, \dif\tau + (t-r)b\norm[\big]{\bigl(L(s)+b\bigr)^{-\frac12}} \right), \\
    \norm{U(r,t)}_{\lambda,s} &\leq \exp\left( 2c_{r,t} \!\int_r^t\! C(\tau)\, \dif\tau + (t-r)b\norm[\big]{\bigl(L(s)+b\big)^{-\frac12}} \right).
  \end{align*}
\end{corollary}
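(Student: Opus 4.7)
The plan is to reduce to the already-established case by a shift of $L(t)$ and then recover the evolution generated by the original $B(t)$ through a bounded perturbation.

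First I would set $\tilde L(t) \defn L(t)+b$ and let $\tilde B(t)$ denote the corresponding first-order generator, obtained by replacing $L$ with $\tilde L$ in the definition of $B$ from Sect.~\ref{sub:KG_operator}. By hypothesis $\tilde L(t)$ satisfies Assumption~\ref{asm:L}, so Thm.~\ref{thm:evolution} applies verbatim and produces a strongly continuous evolution $\tilde U(t,s)$ on the scale of Hilbertizable spaces $\mathcal{H}_\lambda$ built from $\tilde L$; in particular the bound~\eqref{eq:U_bound} holds for $\tilde U$ with the constants $c_{r,t}$ and $C$ determined by $\tilde L$, which accounts for the first summand in the exponent of the claimed estimate.

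Next, I would note the identity
\begin{equation*}
  B(t) = \tilde B(t) - bP, \qquad P \defn \twomat{0}{0}{\one}{0},
\end{equation*}
so $B(t)$ differs from $\tilde B(t)$ by a time-independent bounded perturbation. On $\mathcal{H}_{\lambda,s}$ (with $\lambda\in[-1,1]$ and the scales built from $\tilde L(s)$), the operator $P$ has norm at most $\norm[\big]{(L(s)+b)^{-\frac12}}$: it kills the second component and on the first is the inclusion $\mathcal{K}^{(\lambda+1)/2}\hookrightarrow\mathcal{K}^{(\lambda-1)/2}$, whose norm is exactly $\norm[\big]{\tilde L(s)^{-\frac12}}$. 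I would then define $U(t,s)$ from $\tilde U(t,s)$ via the Duhamel equation
\begin{equation*}
  U(t,s) = \tilde U(t,s) + \im b\int_s^t \tilde U(t,r)\,P\,U(r,s)\,\dif r,
\end{equation*}
equivalently as the Dyson series in $-bP$, which converges in operator norm on each $\mathcal{H}_\lambda$ uniformly on compact time intervals. The announced properties of $U$ are inherited from those of $\tilde U$ together with the boundedness of $P$; in particular $P$ maps $\Hen$ into $\{0\}\oplus\mathcal{K}^0\subset\Hen$, so $B(t)$ and $\tilde B(t)$ share the form domain $\Hen$ and the strong differentiability on $\Hen$ transfers.

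Finally, the modified norm bound follows from Grönwall applied to the Duhamel equation: combining~\eqref{eq:U_bound} for $\tilde U$ with $\norm{P}_{\lambda,s}\leq\norm[\big]{(L(s)+b)^{-\frac12}}$ yields the extra factor $\exp\bigl((t-r)b\norm[\big]{(L(s)+b)^{-\frac12}}\bigr)$, and the bound on $\norm{U(r,t)}_{\lambda,s}$ is obtained by exchanging the roles of $t$ and $r$. The main obstacle I anticipate is the bookkeeping around the strong differentiability in Thm.~\ref{thm:evolution}\ref{item:evolution:3}: one must check that the Dyson series may be differentiated term by term in the strong topology of $\Hne$. This is routine given the strong continuity and differentiability of $\tilde U$ on $\Hen$ and the fact that $P$ is bounded everywhere on the scale, but it deserves to be stated explicitly rather than hand-waved.
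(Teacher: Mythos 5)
Your proposal is correct and follows essentially the same route as the paper: shift $L(t)$ by $b$, apply Thm.~\ref{thm:evolution} to the shifted generator, observe that $B(t)$ differs from it by the bounded off-diagonal perturbation $-b\bigl(\begin{smallmatrix}0&0\\\one&0\end{smallmatrix}\bigr)$ whose norm on the scale is controlled by $\norm{(L(s)+b)^{-\frac12}}$, and recover $U(t,s)$ by the Duhamel/Dyson construction. The paper simply delegates this last step to its perturbation theorem (Thm.~\ref{thm:perturbed_evolution}), which is exactly the Dyson-series argument you spell out, including the differentiability bookkeeping you flag.
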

\begin{proof}
  Replacing $L(t)$ in $B(t)$ by $L(t)+b$, we obtain a new operator
  \begin{equation*}
    B_b(t) = B(t) + \begin{pmatrix}
      0 & 0 \\ b & 0
    \end{pmatrix}.
  \end{equation*}
  We also replace $L(t)$ by $L(t)+b$ in all definitions concerning the Hilbert and Hilbertizable spaces that we need.
  According to Thm.~\ref{thm:evolution}, $B_b(t)$ has an evolution $U_b(t,s)$ with the properties stated in the theorem.
  Note that $\bigl(\begin{smallmatrix} 0 & 0 \\ b & 0 \end{smallmatrix}\bigr)$ is a bounded operator on $\mathcal{H}_\en^*$.
  Indeed, this follows from the boundedness of
  \begin{equation*}
    \begin{pmatrix}
      \one & 0 \\ 0 & L^{-\frac12}
    \end{pmatrix} \begin{pmatrix}
      0 & 0 \\ b & 0
    \end{pmatrix} \begin{pmatrix}
      \one & 0 \\ 0 & L^{\frac12}
    \end{pmatrix}= \begin{pmatrix}
      0 & 0 \\ L^{-\frac12}b & 0
    \end{pmatrix}
  \end{equation*}
  on $\mathcal{H}$.
  Since $B(t)$ is a bounded perturbation of~$B_b(t)$, we can apply Thm.~\ref{thm:perturbed_evolution} to find the evolution for~$B(t)$.
\end{proof}

\begin{remark}
  Our choice of spaces~\eqref{eq:spaces-evolution} to prove Thm.~\ref{thm:evolution} is natural, especially given our low regularity setup.
  Under more restrictive assumptions on the smoothness and boundedness of coefficients of the Klein--Gordon operator~$\KG$, other spaces in the scale $\mathcal{H}_{\lambda,t}$, $\lambda \in \RR$, could be used.
  This would lead to improved regularity results of the type $U(t,s) \mathcal{H}_\lambda \subset \mathcal{H}_\lambda$ and continuous differentiability of $U(t,s) \mathcal{H}_\lambda$ in~$\mathcal{H}_{\lambda-2}$.
\end{remark}

\begin{remark}
  In the stationary case, \ie, if $B$ does not depend on time, there exist distinguished Hilbert spaces $\mathcal{H}_\lambda$ and the evolution family $U(t,s)$ simplifies to a unitary group $U(t-s) = U(t,s)$ on $\mathcal{H}_\lambda$.
\end{remark}

\section{Solutions of the Klein--Gordon equation}

Solutions of the Klein--Gordon equation are closely related to solutions of the first-order Klein--Gordon equation.

Let us introduce the projection onto the second component:
\begin{equation*}
  \pi_2 \begin{pmatrix} u_1 \\ u_2 \end{pmatrix} \defn u_2,
\end{equation*}
We also define embeddings
\begin{equation*}
  \iota_2 u \defn \begin{pmatrix} 0 \\ u \end{pmatrix},
  \quad
  \rho u \defn \begin{pmatrix} u \\ -(D_t + W) u \end{pmatrix}.
\end{equation*}
A formal calculation then shows that\footnote{Note that there is a sign error in the corresponding equation in~\cite{derezinski-siemssen} which also affects the definition of the associated propagators.}
\begin{equation}\label{eq:KG1-KG}
  \tilde\KG = -\im \pi_2 (\partial_t + \im B) \rho
  \quad\text{and}\quad
  \KG = -\im \alpha^{-1} \pi_2 (\partial_t + \im B) \rho \alpha^{-1}.
\end{equation}
Therefore, if $\KG u = f$ or, equivalently, $\tilde\KG \tilde{u} = \tilde{f}$ with $\tilde{u} = \alpha^{-1} u$, $\tilde{f} = \alpha f$, then
\begin{equation*}
  -\im (\partial_t + \im B) \rho \tilde{u} = \iota_2 \tilde{f}.
\end{equation*}

The projection $\pi_2$ and the embeddings $\rho$, $\iota_2$, which relate solutions of the Klein--Gordon equation and the first-order Klein--Gordon equation, can be understood between various spaces.
It follows from the definition of $\mathcal{H}_\lambda$ in~\eqref{eq:H_alpha} that, for $\lambda \in [-1,1]$,
\begin{subequations}\label{eq:pi2-iota2}\begin{align}
  \pi_2 &: \mathcal{H}_\lambda \to \mathcal{K}^{(\lambda-1)/2}, \\
  \pi_2 Q &: \mathcal{H}_\lambda \to \mathcal{K}^{(\lambda+1)/2}, \\
  \iota_2 &: \mathcal{K}^{(\lambda-1)/2} \to \mathcal{H}_\lambda.
\end{align}\end{subequations}

These projections and embeddings already allow us to easily prove an existence and uniqueness result regarding solutions of the Klein--Gordon equation with Cauchy data in the energy space:
\begin{theorem}
  Let $s \in \RR$, $\,\left(\!\begin{smallmatrix} u_1(s) \\ u_2(s) \end{smallmatrix}\!\right) \in \Hen$ and $f \in L^1_\loc(\RR; \mathcal{K}^0)$.
  Set
  \begin{equation*}
    \begin{pmatrix} \tilde{u}_1(s) \\ \tilde{u}_2(s) \end{pmatrix} = \alpha(s)^{-1} \begin{pmatrix} u_1(s) \\ u_2(s) \end{pmatrix}
    \quad\text{and}\quad
    \tilde{f} = \alpha f.
  \end{equation*}
  Then $u = \alpha \tilde{u}$ with
  \begin{equation*}
    \tilde{u}(t) = \pi_2 Q U(t,s) \begin{pmatrix} \tilde{u}_1(s) \\ \tilde{u}_2(s) \end{pmatrix} + \im \int_s^t \pi_2 Q U(t,r) \iota_2 \tilde{f}(r)\, \dif r
  \end{equation*}
  is the unique solution of $\KG u=f$ such that
  \begin{equation}\label{eq:a5}
    u \in C(\RR; \mathcal{K}^1) \cap C^1(\RR; \mathcal{K}^0)
    \quad\text{and}\quad
    \rho \tilde{u}(s) = \begin{pmatrix} \tilde{u}_1(s) \\ \tilde{u}_2(s) \end{pmatrix}.
  \end{equation}
\end{theorem}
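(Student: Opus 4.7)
My plan is to produce the solution via Duhamel's formula for the first-order system, recover a scalar function by projecting onto the first component, and then verify that this scalar solves the Klein--Gordon equation with the prescribed Cauchy data; uniqueness will follow by reversing the construction.

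First I would define
\begin{equation*}
  v(t) \defn U(t,s) \begin{pmatrix} \tilde u_1(s) \\ \tilde u_2(s) \end{pmatrix} + \im \int_s^t U(t,r)\, \iota_2 \tilde f(r)\, \dif r
\end{equation*}
as an element of $\Hen$. The initial datum lies in $\Hen$ by hypothesis, and the mapping property~\eqref{eq:pi2-iota2} with $\lambda = 1$ gives $\iota_2 \tilde f \in L^1_\loc(\RR; \Hen)$; hence by Theorem~\ref{thm:evolution}(ii) the function $v$ is strongly $\Hen$-continuous with $v(s) = \left(\begin{smallmatrix} \tilde u_1(s) \\ \tilde u_2(s) \end{smallmatrix}\right)$, and by Theorem~\ref{thm:evolution}(iii) (plus a routine differentiation of the Duhamel integral) it is $C^1$ into $\Hne$ and satisfies $(\partial_t + \im B) v = \iota_2 \tilde f$ strongly in $\Hne$.

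Next I set $\tilde u(t) \defn \pi_2 Q v(t)$, the first component of $v$, and observe that the top row of the above equation reads $\partial_t v_1 + \im W v_1 + \im v_2 = 0$, which rearranges to $v_2 = \im \partial_t v_1 - W v_1 = -(D_t + W)\tilde u$; thus $v = \rho \tilde u$. Applying $\pi_2$ to the bottom row and invoking $\tilde \KG = -\im \pi_2 (\partial_t + \im B) \rho$ from~\eqref{eq:KG1-KG} yields $\tilde \KG \tilde u = \tilde f$, equivalently $\KG u = f$ for $u \defn \alpha \tilde u$; the Cauchy condition in~\eqref{eq:a5} holds by construction since $v(s) = \rho \tilde u(s) = \left(\begin{smallmatrix} \tilde u_1(s) \\ \tilde u_2(s) \end{smallmatrix}\right)$. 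The regularity $u \in C(\RR; \mathcal K^1) \cap C^1(\RR; \mathcal K^0)$ follows from $\tilde u \in C(\RR; \mathcal K^1)$ (via $\pi_2 Q : \Hen \to \mathcal K^1$ in~\eqref{eq:pi2-iota2}) together with $\partial_t \tilde u = -\im(W \tilde u + v_2) \in C(\RR; \mathcal K^0)$ (using $W : \mathcal K^1 \to \mathcal K^0$ from Assumption~\ref{asm:L-WW} and $v_2 \in C(\RR; \mathcal K^0)$), combined with the norm-continuity of $\alpha$ and $\dot\alpha$ on the relevant $\mathcal K^\delta$ guaranteed by Assumption~\ref{asm:beta_cont}.

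For uniqueness, given any $u$ satisfying $\KG u = f$ together with~\eqref{eq:a5} and the stated Cauchy datum, the analogous rewriting shows $(\partial_t + \im B) \rho \tilde u = \iota_2 \tilde f$ in $\Hne$: the first row holds automatically from the definition of $\rho$, and the second follows from $\tilde \KG \tilde u = \tilde f$ via~\eqref{eq:KG1-KG}. The difference of any two such solutions then satisfies $(\partial_t + \im B) w = 0$ with $w(s) = 0$, which forces $w \equiv 0$ by the uniqueness of the evolution in Theorem~\ref{thm:evolution}. The only delicate point, and in my view the main thing to be careful about, is justifying the algebraic extraction $v_2 = -(D_t + W) v_1$ from the first-row identity: this is sound because $v \in C^1(\RR; \Hne)$ gives $\partial_t v_1 \in C(\RR; \mathcal K^0)$, and each term in the first row lives in $\mathcal K^0$ (with $W : \mathcal K^1 \to \mathcal K^0$ by Assumption~\ref{asm:L-WW}), so the identity holds pointwise in $\mathcal K^0$ rather than merely in some weaker distributional sense.
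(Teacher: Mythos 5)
Your proposal is correct and follows essentially the same route as the paper's proof: Duhamel's formula for the first-order system, extraction of the relation $v_2=-(D_t+W)v_1$ from the first row so that $v=\rho\tilde u$, the identity $\tilde\KG\tilde u=\tilde f$ from the second row via~\eqref{eq:KG1-KG}, regularity from the mapping properties~\eqref{eq:pi2-iota2} together with Assumption~\ref{asm:beta_cont}, and uniqueness from the uniqueness of the evolution. Your extra remark on the sense in which the first-row identity holds (pointwise in $\mathcal K^0$) is a sound refinement of what the paper leaves implicit.
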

\begin{proof}
  We have the following special cases of~\eqref{eq:pi2-iota2}:
  \begin{subequations}\begin{align}
    \iota_2 &: \mathcal{K}^0 \to \Hen, \label{eq:iota-prop} \\
    \pi_2 Q &: \Hen \to \mathcal{K}^1, \label{eq:piQ-prop1} \\
    \pi_2 Q &: \Hne \to \mathcal{K}^0. \label{eq:piQ-prop2}
  \end{align}\end{subequations}
  By~\eqref{eq:iota-prop}, \eqref{eq:piQ-prop1} and Assumption~\ref{asm:beta_cont}, $u$ belongs to $C(\RR; \mathcal{K}^1)$.
  By~\eqref{eq:iota-prop}, \eqref{eq:piQ-prop2} and Assumption~\ref{asm:beta_cont}, $\partial_t u$ belongs to $C(\RR; \mathcal{K}^0)$.
  Hence the first part of~\eqref{eq:a5} is true.
  The second part of~\eqref{eq:a5} is obvious.

  Set
  \begin{equation}\label{eq:a6}
    \begin{pmatrix} \tilde{u}_1(t) \\ \tilde{u}_2(t) \end{pmatrix} = U(t,s) \begin{pmatrix} \tilde{u}_1(s) \\ \tilde{u}_2(s) \end{pmatrix} + \im \int_s^t U(t,r) \iota_2 \tilde{f}(r)\, \dif r.
  \end{equation}
  Differentiating~\eqref{eq:a6} we obtain
  \begin{equation}\label{eq:a7}
    \im\partial_t \begin{pmatrix} \tilde{u}_1(t) \\ \tilde{u}_2(t) \end{pmatrix} = B(t) \begin{pmatrix} \tilde{u}_1(t) \\ \tilde{u}_2(t) \end{pmatrix} - \iota_2 \tilde{f}(t)
  \end{equation}
  Clearly, $\tilde{u}(t)=\tilde{u}_1(t)$.
  The first component of~\eqref{eq:a7} yields $\tilde{u}_2(t) = -(D_t + W(t)) \tilde{u}_1(t)$.
  Hence
  \begin{equation}\label{eq:a8}
    \rho \tilde{u}(t) = \begin{pmatrix} \tilde{u}_1(t) \\ \tilde{u}_2(t) \end{pmatrix}
  \end{equation}
  The second component of~\eqref{eq:a7}, and then insertion of~\eqref{eq:a8} yield
  \begin{equation*}
    \tilde{f}(t) = -\im \pi_2 (\partial_t + \im B) \begin{pmatrix} \tilde{u}_1(t) \\ \tilde{u}_2(t) \end{pmatrix} = -\im \pi_2 (\partial_t + \im B)\rho \tilde{u}(t)
    = \tilde\KG \tilde{u}(t),
  \end{equation*}
  whence we have shown that $\tilde{u}$ solves $\tilde\KG \tilde{u} = \tilde{f}$ and thus $\KG u = f$.

  Uniqueness of the solution follows from the uniqueness of the evolution~$U(t,s)$, and the linearity of $\KG, \rho$ by the standard argument:
  If $u,u'$ satisfy
  \begin{equation*}
    \KG u = \KG u' = f
    \quad\text{and}\quad
    \rho \tilde{u}(s) = \rho \tilde{u}'(s) = \begin{pmatrix} \tilde{u}_1(s) \\ \tilde{u}_2(s) \end{pmatrix},
  \end{equation*}
  where $\tilde{u}' = \alpha^{-1} u'$, then $\KG(u-u') = 0$, $\rho (\tilde{u}-\tilde{u}')(s) = 0$ and thus $u=u'$.
\end{proof}

It is well-known that solutions of the Klein--Gordon equation propagate slower than the speed of light.
The method of evolution equations together with the freedom of the choice of the time-variable provide a rather obvious heuristic argument for the propagation at a finite speed.
However, when one tries to convert this argument into a rigorous proof, technical problems appear which make such a proof difficult to formulate.

In the literature the finiteness of the speed of propagation is usually shown for the Klein--Gordon equation with smooth coefficients.
In Appx.~\ref{appx:finite-speed}, in particular in Thm.~\ref{thm:finite-speed}, we show that solutions of the Klein--Gordon propagate at a finite speed also in a low-regularity setup typical for our paper.

\section{Classical propagators}
\label{sec:classical}

Having constructed the evolution for~$B(t)$ in Sect.~\ref{sec:evolution}, it is not difficult to find the classical propagators for the first-order Klein--Gordon operator $\partial_t + \im B$.
To wit, the \emph{Pauli--Jordan propagator}~$E^\PJ$ and the \emph{forward/backward propagator}~$E^\retadv$ are given by the integral kernels
\begin{subequations}\label{eq:classical_kernels}\begin{align}
    E^\PJ(t, s) &\defn U(t, s), \\
    E^\vee(t, s) &\defn \theta(t - s)\, U(t, s), \\
    E^\wedge(t, s) &\defn -\theta(s - t)\, U(t, s),
\end{align}\end{subequations}
where $\theta$ denotes the Heaviside step function, via
\begin{equation}\label{eq:E_op-E_kernel}
  (E^\bullet f)(t) = \int_\RR E^\bullet(t, s) f(s)\, \dif s.
\end{equation}

\begin{theorem}\label{thm:E_classical}
  Let $\lambda \in [-1,1]$.
  \begin{enumerate}
    \item\label{item.i}
      The classical propagators $E^\PJ$ and $E^\retadv$ are well-defined between the following spaces:
      \begin{align*}
        E^\bullet &: L^1_\comp(\RR; \mathcal{H}_\lambda) \to C(\RR; \mathcal{H}_\lambda), \\
        E^\bullet &: L^1_\comp(\RR; \Hen) \to C^1(\RR; \Hne).
      \end{align*}
    \item\label{item.ii}
      The forward and backward propagator $E^\retadv$ are well-defined between the following spaces:
      \begin{align*}
        E^\retadv &: L^1_\loc(I; \mathcal{H}_\lambda) \to C(I; \mathcal{H}_\lambda), \notag\\
        E^\retadv &: L^1_\loc(I; \Hen) \to C^1(I; \Hne),
      \end{align*}
      where $I = [a,+\infty\mathclose{[}$ resp. $\mathopen{]}-\infty,a]$ for some $a \in \RR$.
    \item\label{item.iii}
      If Assumption~\ref{asm:asymptotic} is satisfied, the classical propagators $E^\PJ$ and $E^\retadv$ are bounded between the following spaces:
      \begin{align*}
        E^\bullet &: L^1(\RR; \mathcal{H}_\lambda) \to C_\bnd(\RR; \mathcal{H}_\lambda), \\
        E^\bullet &: L^1(\RR; \Hen) \to C_\bnd^1(\RR; \Hne).
      \end{align*}
    \item\label{item.iv}
      $E^\PJ$ is a bisolution of $\partial_t + \im B$:
      \begin{align}
        (\partial_t + \im B) E^\PJ f &= 0,
        \quad
        f \in L_\comp^1(\RR;\Hen), \label{eq:b2}
        \\
        E^\PJ (\partial_t + \im B) f &= 0,
        \quad
        f \in L^1_\comp(\RR; \Hen) \cap AC_\comp(\RR;\Hne). \label{eq:b1}
      \end{align}
    \item\label{item.v}
      $E^\retadv$ are the unique inverses of $\partial_t + \im B$ such that
      \begin{align}
        (\partial_t + \im B) E^\retadv f &= f,
        \quad
        f \in L_\loc^1(I,\Hen), \label{eq:b4}
        \\
        E^\retadv (\partial_t + \im B) f &= f,
        \quad
        f \in L^1_\loc(I; \Hen)\cap AC(I,\Hne), \label{eq:b3}
      \end{align}
       with $I = [a,+\infty\mathclose{[}$ resp. $\mathopen{]}-\infty,a]$ for some $a \in \RR$.
    \item The relation $E^\PJ = E^\vee - E^\wedge$ holds.
  \end{enumerate}
\end{theorem}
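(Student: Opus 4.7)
The plan is to assemble all six claims from Theorem~\ref{thm:evolution} combined with standard manipulations of the variation-of-constants formula. Throughout, the essential inputs are: (a) the uniform-on-compacts bound~\eqref{eq:U_bound} for $\norm{U(t,s)}_{\lambda,r}$; (b) the strong continuity of $(t,s) \mapsto U(t,s)$ on each $\mathcal{H}_\lambda$; (c) the differential identities~\eqref{eq:U_sol}; and (d) the composition law~\eqref{eq:U_composition}.

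I would start with the algebraically trivial claim~(\ref{item.vi}): since $\theta(t-s) + \theta(s-t) = 1$ almost everywhere, the identity $E^\PJ(t,s) = E^\vee(t,s) - E^\wedge(t,s)$ is immediate from~\eqref{eq:classical_kernels}. Next, for the mapping properties in~(\ref{item.i})–(\ref{item.iii}), I would apply Bochner integration in $\mathcal{H}_\lambda$. For $f \in L^1_\comp(\RR;\mathcal{H}_\lambda)$ supported in a compact interval $J$, the bound~\eqref{eq:U_bound} yields $\norm{U(t,s)}_{\lambda,s}$ uniformly bounded for $(t,s)$ in any compact set, so the integral $(E^\PJ f)(t) = \int_\RR U(t,s) f(s)\,\dif s$ converges in $\mathcal{H}_\lambda$, and strong continuity of $U$ combined with dominated convergence gives continuity in $t$. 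The same argument, restricted to $s \leq t$ or $s \geq t$, handles $E^\retadv$ and also extends to the $L^1_\loc$ case in~(\ref{item.ii}) because one only integrates over a bounded region below or above $t$. For~(\ref{item.iii}), Corollary~\ref{cor:evolution-bound} replaces the local bound by a global one, making $\norm{U(t,s)}_{\lambda,r}$ uniformly bounded on all of $\RR^2$, which immediately upgrades the estimates from $C$ to $C_\bnd$.

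For the $C^1(\RR;\Hne)$ statement, I would use~\eqref{eq:U_sol}: since $B(t)$ maps $\Hen$ continuously into $\Hne$ and $U(t,s)$ preserves $\Hen$ with controlled norm, differentiating under the integral sign gives
\begin{equation*}
  \partial_t (E^\vee f)(t) = f(t) - \im B(t) (E^\vee f)(t),
\end{equation*}
and analogously for $E^\wedge$ and $E^\PJ$, all taking values in $\Hne$; continuity in $t$ follows from Proposition~\ref{prop:norm_cont} and the continuity in $\Hen$ of $E^\bullet f$. This simultaneously establishes~(\ref{item.iv}) and the forward direction of~(\ref{item.v}): $(\partial_t + \im B) E^\PJ f = 0$ for $E^\PJ$ (the $f(t)$ term does not appear because there is no jump), whereas $(\partial_t + \im B) E^\retadv f = f$. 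For the reverse composition $E^\PJ (\partial_t + \im B) f = 0$ with $f \in L^1_\comp(\RR;\Hen) \cap AC_\comp(\RR;\Hne)$, I would integrate by parts using $\partial_s U(t,s) = \im U(t,s) B(s)$, the compactness of the support annihilating the boundary terms; for $E^\retadv$ one keeps the single boundary term at $s = t$, where $U(t,t) = \one$ produces exactly $f(t)$. The absolute continuity assumption on $f$ is precisely what makes the integration by parts rigorous.

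Uniqueness of $E^\retadv$ as inverses satisfying the support condition reduces to uniqueness of solutions of the Cauchy problem for $\partial_t + \im B$, which is part of Theorem~\ref{thm:evolution}: if $G$ is another right-inverse with, say, $\supp G \subset \{t \geq s\}$, then $G - E^\vee$ is a bisolution whose kernel vanishes for $t < s$, and then Cauchy data vanishing at some $s$ force the vanishing everywhere by uniqueness of $U(t,s)$. The step I expect to be most delicate is the integration-by-parts computation for $E^\PJ(\partial_t + \im B) f = 0$ and the analogue for $E^\retadv$, since one must carefully justify that $\partial_s U(t,s) f(s)$ is integrable in $\Hne$ when $f \in AC$ takes values in $\Hne$ but only lies in $\Hen$ almost everywhere; the key is that $U(t,s) B(s) : \Hen \to \Hne$ is bounded with locally uniform norm thanks to Proposition~\ref{prop:norm_cont} and~\eqref{eq:U_bound}, which reduces the problem to an ordinary $L^1$ integration by parts in $\Hne$.
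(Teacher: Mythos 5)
Your proof follows essentially the same route as the paper's: items \ref{item.i}--\ref{item.iii} are read off from the bounds and strong-continuity properties of $U(t,s)$ in Thm.~\ref{thm:evolution} and Cor.~\ref{cor:evolution-bound} together with the kernel definitions~\eqref{eq:classical_kernels}, and items \ref{item.iv}--\ref{item.v} are obtained by first checking that the compositions are well-defined, then differentiating under the integral via~\eqref{eq:U_sol} and integrating by parts for the reverse compositions. The paper's own proof is terser but contains no ingredient beyond what you supply.
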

\begin{proof}
  \ref{item.i}--\ref{item.iii} follow from the properties of the evolution $U(t, s)$ (see Thm.~\ref{thm:evolution} and Cor.~\ref{cor:evolution-bound}) and the definition of the kernels~\eqref{eq:classical_kernels}.

  Consider next~\ref{item.iv} and~\ref{item.v}.
  We first need to check that the products contained in these properties are well-defined.
  Indeed, by~\ref{item.i}, the following maps are well-defined:
  \begin{subequations}\label{eq:E-KG_map1}\begin{align}
    E^\bullet &: L_\comp^1(\RR;\Hen) \to C(\RR;\Hen)\cap C^1(\RR;\Hne), \\
    (\partial_t+\im B) &: C(\RR;\Hen) \cap C^1(\RR;\Hne) \to C(\RR;\Hne),
  \end{align}\end{subequations}
  which shows that~\eqref{eq:b2} and~\eqref{eq:b4} make sense.
  Similarly, by~\ref{item.i}, we have
  \begin{subequations}\label{eq:E-KG_map2}\begin{align}
    (\partial_t+\im B) &: L_\comp^1(\RR;\Hen) \cap AC_\comp(\RR;\Hne) \to L_\comp^1(\RR;\Hne), \\
    E^\bullet &: L_\comp^1(\RR;\Hne) \to C(\RR;\Hne),
  \end{align}\end{subequations}
  hence the products in~\eqref{eq:b1} and~\eqref{eq:b3} make sense.
  Then we show \eqref{eq:b2}--\eqref{eq:b3} using~\eqref{eq:E_op-E_kernel} and~\eqref{eq:U_sol}.
  For~\eqref{eq:b1} and~\eqref{eq:b3} we also need to apply an integration by parts.
\end{proof}

We can also state an $L^2$ version of~\ref{item.iii} in Thm.~\ref{thm:E_classical} above:
\begin{theorem}\label{thm:E_classical-L2}
  Let $s > \tfrac12$ and $\lambda \in [-1,1]$.
  If Assumption~\ref{asm:asymptotic} is satisfied, the classical propagators $E^\PJ$ and $E^\retadv$ are bounded between the following spaces:
  \begin{align*}
    E^\bullet &: \jnorm{t}^{-s} L^2(\RR; \mathcal{H}_\lambda) \to \jnorm{t}^s L^2(\RR; \mathcal{H}_\lambda), \\
    E^\bullet &: \jnorm{t}^{-s} L^2(\RR; \Hen) \to \jnorm{t}^s\jnorm{\partial_t}^{-1} L^2(\RR; \Hne).
  \end{align*}
\end{theorem}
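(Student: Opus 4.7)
The plan is to upgrade the $L^1$-$L^\infty$-type estimates of Thm.~\ref{thm:E_classical}\ref{item.iii} to weighted $L^2$-$L^2$ bounds by a direct Cauchy--Schwarz argument that exploits the hypothesis $s > \tfrac12$. The key ingredient will be Cor.~\ref{cor:evolution-bound}, which under Assumption~\ref{asm:asymptotic} supplies a constant $M < \infty$ with $\norm{U(t,r)}_{\mathcal{H}_\lambda \to \mathcal{H}_\lambda} \le M$ uniformly in $t,r \in \RR$ and $\lambda \in [-1,1]$.

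Since each kernel $K^\bullet(t,r) \in \{U(t,r),\,\theta(t-r)\,U(t,r),\,-\theta(r-t)\,U(t,r)\}$ is bounded in operator norm by $M$, I would apply Cauchy--Schwarz in $r$ to the integral representation~\eqref{eq:E_op-E_kernel}:
\begin{equation*}
  \norm{(E^\bullet f)(t)}_{\mathcal{H}_\lambda}
  \le M \int_\RR \jnorm{r}^{-s} \bigl( \jnorm{r}^{s} \norm{f(r)}_{\mathcal{H}_\lambda} \bigr)\, \dif r
  \le M\, C_s^{1/2}\, \norm{f}_{\jnorm{t}^{-s} L^2(\RR;\mathcal{H}_\lambda)},
\end{equation*}
where $C_s \defn \int_\RR \jnorm{r}^{-2s}\, \dif r < \infty$ precisely because $s > \tfrac12$. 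Multiplying by $\jnorm{t}^{-s}$ and integrating in $t$ then yields the first claimed bound with constant $M\,C_s$; the pointwise estimate in fact even gives $E^\bullet f \in L^\infty(\RR; \mathcal{H}_\lambda)$, which will be useful below.

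For the second bound, I would specialise the first estimate to $\lambda = 1$, so that $E^\bullet f \in L^\infty(\RR; \Hen) \subset L^\infty(\RR; \Hne)$ and hence $\jnorm{t}^{-s} E^\bullet f \in L^2(\RR; \Hne)$. The differential identities from items~\ref{item.iv} and~\ref{item.v} of Thm.~\ref{thm:E_classical}, namely $\partial_t E^\PJ f = -\im B\, E^\PJ f$ and $\partial_t E^\retadv f = -\im B\, E^\retadv f + f$, give control of $\partial_t E^\bullet f$: since Assumption~\ref{asm:asymptotic} ensures $B(t) : \Hen \to \Hne$ is uniformly bounded in $t$, $B\, E^\bullet f$ lies in $L^\infty(\RR; \Hne)$, and the extra $f$ in the retarded/advanced case lies in $\jnorm{t}^{-s} L^2(\Hen) \subset L^2(\RR; \Hne)$. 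Combined with the bound $\abs{\partial_t \jnorm{t}^{-s}} \le s\, \jnorm{t}^{-s-1}$, the Leibniz expansion $\partial_t (\jnorm{t}^{-s} E^\bullet f) = \jnorm{t}^{-s} \partial_t E^\bullet f + (\partial_t \jnorm{t}^{-s})\, E^\bullet f$ then places $\jnorm{t}^{-s} E^\bullet f$ in $H^1(\RR; \Hne)$, which is equivalent to $E^\bullet f \in \jnorm{t}^s \jnorm{\partial_t}^{-1} L^2(\RR; \Hne)$.

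The one step I expect to require actual care is the justification of the differential identities above for $f$ merely in the weighted $L^2$ class, since Thm.~\ref{thm:E_classical} states them only for compactly supported or locally integrable $f$. This should follow routinely by density, using that $C^\infty_\comp(\RR;\Hen)$ is dense in $\jnorm{t}^{-s} L^2(\RR;\Hen)$ and that both sides of each identity are continuous in $f$ with respect to the relevant topologies (alternatively one can differentiate the defining kernel~\eqref{eq:classical_kernels} directly, using~\eqref{eq:U_sol} together with the boundary contribution $U(t,t)=\one$ in the retarded/advanced case).
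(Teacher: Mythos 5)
Your argument is correct and is essentially the paper's: the paper proves this in one line by combining the embeddings $\jnorm{t}^{-s} L^2(\RR;\mathcal{X}) \subset L^1(\RR;\mathcal{X})$ and $C_\bnd(\RR;\mathcal{X}) \subset \jnorm{t}^s L^2(\RR;\mathcal{X})$ (both valid precisely for $s>\tfrac12$) with the $L^1 \to C_\bnd$ and $L^1 \to C^1_\bnd$ bounds of Thm.~\ref{thm:E_classical}\ref{item.iii}, and your Cauchy--Schwarz computation and Leibniz-rule treatment of $\partial_t(\jnorm{t}^{-s}E^\bullet f)$ are just the explicit unwinding of those embeddings. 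The only cosmetic difference is that you re-derive the control of $\partial_t E^\bullet f$ from the differential identities rather than citing the $C^1_\bnd(\RR;\Hne)$ mapping property directly.
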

\begin{proof}
  We use the embeddings
  \begin{equation*}
    \jnorm{t}^{-s} L^2(\RR; \mathcal{X}) \subset L^1(\RR; \mathcal{X})
    \quad\text{and}\quad
    \jnorm{t}^s L^2(\RR; \mathcal{X}) \supset C_\bnd(\RR; \mathcal{X})
  \end{equation*}
  for any Banach space $\mathcal{X}$ and $s > \frac12$.
\end{proof}

The classical propagators for the first-order Klein--Gordon operator can also be understood between various spaces other than those considered in Thms.~\ref{thm:E_classical} and~\ref{thm:E_classical-L2}, but our choices are quite natural.
At the same time, this setup leads to an almost straightforward derivation of the propagators for the Klein--Gordon operator~$\KG$.

Since $\partial_t + \im B$ and $\KG$ are related via~\eqref{eq:KG1-KG}, also the propagators of these operators are closely related.
At least formally, it can be shown that if $E^\bullet$ is a propagator for $\partial_t + \im B$, then $\im \pi_2 Q E^\bullet \iota_2$ is a propagator for $\tilde\KG$, and hence
\begin{equation}\label{eq:G-E}
  G^\bullet = \im \alpha \pi_2 Q E^\bullet \iota_2 \alpha.
\end{equation}
is a propagator for the Klein--Gordon operator~$\KG$.
As we shall see now, this is indeed true if the domain of $G^\bullet$ is carefully chosen:
\begin{theorem}\label{thm:G_classical}
  Let $\delta \in [0,1]$.
  \begin{enumerate}
    \item\label{item-i}
      The classical propagators $G^\PJ$ and $G^\retadv$ are well-defined between the following spaces:
      \begin{align}
        G^\bullet &: L^1_\comp(\RR; \mathcal{K}^{-\delta}) \to C(\RR; \mathcal{K}^{1-\delta}), \label{eq:w1}\\
        G^\bullet &: L^1_\comp(\RR; \mathcal{K}^0) \to C^1(\RR; \mathcal{K}^0). \label{eq:w2}
      \end{align}
    \item\label{item-ii}
      The forward and backward propagators $G^\retadv$ are well-defined between the following spaces:
      \begin{align*}
        G^\retadv &: L^1_\loc(I; \mathcal{K}^{-\delta}) \to C(I; \mathcal{K}^{1-\delta}), \notag\\
        G^\retadv &: L^1_\loc(I; \mathcal{K}^0) \to C^1(I; \mathcal{K}^0),
      \end{align*}
      where $I = [a,+\infty\mathclose{[}$ resp. $\mathopen{]}-\infty,a]$ for some $a \in \RR$.
    \item\label{item-iii}
      If Assumption~\ref{asm:asymptotic} is satisfied, the classical propagators $G^\PJ$ and $G^\retadv$ are bounded between the following spaces:
      \begin{align*}
        G^\bullet &: L^1(\RR; \mathcal{K}^{-\delta}) \to C_\bnd(\RR; \mathcal{K}^{1-\delta}), \\
        G^\bullet &: L^1(\RR; \mathcal{K}^0) \to C_\bnd^1(\RR; \mathcal{K}^0).
      \end{align*}
    \item\label{item-iv}
      $G^\PJ$ is a bisolution of $\KG$:
      \begin{align}
        \KG G^\PJ f &= 0,
        \quad
        f \in L^1_\comp(\RR; \mathcal{K}^0),\label{eq:c1}
        \\
        G^\PJ \KG f &= 0,
        \quad
        f \in L^1_\comp(\RR; \mathcal{K}^1) \cap AC_\comp(\RR;\mathcal{K}^{0}) \cap AC_\comp^1(\RR;\mathcal{K}^{-1}). \label{eq:c2}
      \end{align}
    \item\label{item-v}
      $G^\retadv$ are the unique inverses of $\KG$ such that
      \begin{align}
        \KG G^\retadv f &= f,
        \quad
        f \in L^1_\loc(I; \mathcal{K}^0), \label{eq:c3}
        \\
        G^\retadv \KG f &= f,
        \quad
        f \in L^1_\loc(I; \mathcal{K}^1) \cap AC(\RR;\mathcal{K}^{0}) \cap AC^1(I;\mathcal{K}^{-1}). \label{eq:c4}
      \end{align}
      with $I = [a,+\infty\mathclose{[}$ resp. $\mathopen{]}-\infty,a]$ for some $a \in \RR$.
    \item The relation $G^\PJ = G^\vee - G^\wedge$ holds.
  \end{enumerate}
\end{theorem}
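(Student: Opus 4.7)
The plan is to transfer Theorem~\ref{thm:E_classical} from the first-order propagators $E^\bullet$ to the Klein--Gordon propagators $G^\bullet$ via the defining relation~\eqref{eq:G-E} and the intertwining identity~\eqref{eq:KG1-KG}. The bridge is the mapping properties~\eqref{eq:pi2-iota2} of $\iota_2$ and $\pi_2 Q$, together with the continuity and (under Assumption~\ref{asm:asymptotic}) uniform boundedness of $\alpha^{\pm 1}$ on each $\mathcal{K}^\delta$, $\delta \in [-1,1]$, established after Assumption~\ref{asm:beta_cont}.

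For items~\ref{item-i}--\ref{item-iii} I would simply compose: choosing $\lambda = 1-2\delta$ with $\delta \in [0,1]$ gives $\iota_2 : \mathcal{K}^{-\delta} \to \mathcal{H}_\lambda$ and $\pi_2 Q : \mathcal{H}_\lambda \to \mathcal{K}^{1-\delta}$, which together with the first bound of Theorem~\ref{thm:E_classical}.\ref{item.i} yields~\eqref{eq:w1}. The $C^1$ statement~\eqref{eq:w2} uses instead the second bound, $E^\bullet : L^1_\comp(\RR;\Hen) \to C^1(\RR;\Hne)$, with the special cases $\iota_2 : \mathcal{K}^0 \to \Hen$ and $\pi_2 Q : \Hne \to \mathcal{K}^0$. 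Items~\ref{item-ii} and~\ref{item-iii} follow the same scheme using the corresponding parts of Theorem~\ref{thm:E_classical}.

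For the bisolution identities in~\ref{item-iv}, the crucial algebraic fact is that on the range of $E^\PJ \iota_2$ --- i.e., on solutions $v$ of $(\partial_t + \im B) v = 0$ --- the components satisfy $v_2 = -(D_t + W) v_1$, so that $\rho \pi_2 Q v = v$. Combining this with~\eqref{eq:KG1-KG} and~\eqref{eq:b2} gives
\begin{equation*}
  \KG G^\PJ f = \alpha^{-1} \pi_2 (\partial_t + \im B) \rho \pi_2 Q E^\PJ \iota_2 \alpha f = \alpha^{-1} \pi_2 (\partial_t + \im B) E^\PJ \iota_2 \alpha f = 0,
\end{equation*}
proving~\eqref{eq:c1}. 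For~\eqref{eq:c2} the pivotal observation is that the first component of $(\partial_t + \im B)\rho u$ vanishes identically --- a direct calculation exploiting $\partial_t = \im D_t$ --- so $\iota_2 \pi_2 (\partial_t + \im B)\rho = (\partial_t + \im B)\rho$, and hence $G^\PJ \KG f = \alpha \pi_2 Q E^\PJ (\partial_t + \im B)\rho \alpha^{-1} f$, which vanishes by~\eqref{eq:b1}. Item~\ref{item-v} proceeds identically, with~\eqref{eq:b3}--\eqref{eq:b4} in place of~\eqref{eq:b1}--\eqref{eq:b2} and uniqueness inherited from Theorem~\ref{thm:E_classical}.\ref{item.v}. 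The final identity $G^\PJ = G^\vee - G^\wedge$ is immediate from its counterpart $E^\PJ = E^\vee - E^\wedge$, transparent from~\eqref{eq:classical_kernels}.

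The main obstacle will be the bookkeeping for~\eqref{eq:c2} and~\eqref{eq:c4}: when $f$ lies in the hypothesized intersection $L^1_\comp(\RR;\mathcal{K}^1) \cap AC_\comp(\RR;\mathcal{K}^0) \cap AC_\comp^1(\RR;\mathcal{K}^{-1})$, one must verify that $\rho \alpha^{-1} f = \bigl(\alpha^{-1} f,\, -(D_t + W)\alpha^{-1} f\bigr)^{\mathrm{T}}$ lands precisely in the intersection $L^1_\comp(\RR; \Hen) \cap AC_\comp(\RR; \Hne)$ demanded by~\eqref{eq:b1}. The hypotheses are tailored exactly for this: the $L^1_\comp(\RR;\mathcal{K}^1)$ datum controls the first component in $\mathcal{K}^1$; the boundedness $W : \mathcal{K}^1 \to \mathcal{K}^0$ (implicit in Assumption~\ref{asm:L-WW}) combined with the $AC_\comp(\RR;\mathcal{K}^0)$ datum (which yields $D_t \alpha^{-1} f \in L^1_\comp(\RR;\mathcal{K}^0)$) handles the second component in $\mathcal{K}^0$; and the $AC_\comp^1(\RR;\mathcal{K}^{-1})$ datum provides the distributional time derivative of the second component at the $\mathcal{K}^{-1}$ level needed for membership in $AC_\comp(\RR;\Hne)$, with Assumption~\ref{asm:beta_cont} controlling the $\alpha^{-1}$ factors throughout.
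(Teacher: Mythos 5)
Your proposal is correct and follows essentially the same route as the paper: items \ref{item-i}--\ref{item-iii} by composing the mapping properties~\eqref{eq:pi2-iota2} of $\iota_2$ and $\pi_2 Q$ (with $\lambda = 1-2\delta$) and the $\mathcal{K}^\delta$-boundedness of $\alpha^{\pm1}$ with Thm.~\ref{thm:E_classical}, and items \ref{item-iv}--\ref{item-v} by checking that the compositions are well-defined (via the mapping properties of $\rho$) and then transferring \eqref{eq:b1}--\eqref{eq:b4} through the intertwining relation~\eqref{eq:KG1-KG}. You in fact make explicit two algebraic identities the paper leaves implicit, namely $\rho\,\pi_2 Q\, v = v$ on solutions of the homogeneous first-order equation and the vanishing of the first component of $(\partial_t + \im B)\rho$, which is a welcome addition rather than a deviation.
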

\begin{proof}
  These results are a direct consequence of Thm.~\ref{thm:E_classical}.
  In \ref{item-i}--\ref{item-iii} we used~\eqref{eq:pi2-iota2} and Assumption~\ref{asm:beta_cont}.

  Let us check that the products in~\ref{item-iv} and~\ref{item-v} are well-defined.
  From the definition of $\rho$ we can read off that
  \begin{align*}
    \rho &: C(\RR; \mathcal{K}^1) \cap C^1(\RR; \mathcal{K}^0) \to C(\RR; \Hen), \\
    \rho &: L^1_\comp(\RR; \mathcal{K}^1) \cap AC_\comp(\RR; \mathcal{K}^0) \to L^1_\comp(\RR; \Hen), \\
    \rho &: AC_\comp(\RR; \mathcal{K}^0) \cap AC^1_\comp(\RR; \mathcal{K}^{-1}) \to AC_\comp(\RR; \Hne).
  \end{align*}
  Then, by~\ref{item-i} and also using~\eqref{eq:E-KG_map1}, we have
  \begin{align*}
    G^\bullet &: L_\comp^1(\RR;\mathcal{K}^0) \to C(\RR;\mathcal{K}^1) \cap C^1(\RR;\mathcal{K}^0), \\
    \KG &: C(\RR;\mathcal{K}^1) \cap C^1(\RR;\mathcal{K}^0) \to C^{-1}(\RR;\mathcal{K}^0) \cap C(\RR;\mathcal{K}^{-1}),
  \end{align*}
  where $C^{-1}(\RR)$ denotes the space of distributional derivatives of continuous functions.
  This shows that~\eqref{eq:c1} and~\eqref{eq:c3} make sense.
  Similarly, by~\ref{item-i} and~\eqref{eq:E-KG_map2}, we have
  \begin{align*}
    \KG &: L_\comp^1(\RR;\mathcal{K}^1) \cap AC_\comp(\RR;\mathcal{K}^0) \cap AC_\comp^1(\RR;\mathcal{K}^{-1}) \to L_\comp^1(\RR;\mathcal{K}^{-1}),\\
    G^\bullet &: L_\comp^1(\RR;\mathcal{K}^{-1}) \to C(\RR;\mathcal{K}^{0}),
  \end{align*}
  hence the products in~\eqref{eq:c2} and~\eqref{eq:c4} make sense.
\end{proof}

Here is an $L^2$ version of~\ref{item-iii} in Thm.~\ref{thm:G_classical}:
\begin{theorem}
  Let $s > \tfrac12$.
  If Assumption~\ref{asm:asymptotic} is satisfied, the classical propagators~$G^\PJ$ and~$G^\retadv$ are bounded between the following spaces:
  \begin{subequations}\begin{align}
    G^\bullet &: \jnorm{t}^{-s} L^2(\Omega^\frac12 M) \to \jnorm{t}^s L(t)^{-\frac12} L^2(\Omega^\frac12 M),\label{eq:q1}\\
    G^\bullet &: \jnorm{t}^{-s} L^2(\Omega^\frac12 M) \to \jnorm{t}^s \jnorm{\partial_t}^{-1} L^2(\Omega^\frac12 M).\label{eq:q2}
  \end{align}\end{subequations}
\end{theorem}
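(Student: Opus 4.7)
The plan is to mimic the proof of Thm.~\ref{thm:E_classical-L2}: sandwich the mapping property~\ref{item-iii} of Thm.~\ref{thm:G_classical} between two elementary embeddings of weighted $L^2$-spaces, using the identification $L^2(\Omega^\frac12 M) \cong L^2(\RR;\mathcal{K}^0)$ induced by the fixed splitting $\RR\times\Sigma \to M$.

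The two embeddings I would invoke, both valid for $s>\frac12$ and any Banach space $\mathcal{X}$, are
\begin{equation*}
  \jnorm{t}^{-s} L^2(\RR;\mathcal{X}) \subset L^1(\RR;\mathcal{X})
  \quad\text{and}\quad
  C_\bnd(\RR;\mathcal{X}) \subset \jnorm{t}^{s} L^2(\RR;\mathcal{X}).
\end{equation*}
The first is Cauchy--Schwarz; the second follows from $\int \jnorm{t}^{-2s}\,\dif t<\infty$.

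For the bound~\eqref{eq:q1}, I would apply these to the first mapping of Thm.~\ref{thm:G_classical}\,\ref{item-iii} taken at $\delta = 0$, composing
\begin{equation*}
  \jnorm{t}^{-s} L^2(\RR;\mathcal{K}^0)
  \hookrightarrow L^1(\RR;\mathcal{K}^0)
  \xrightarrow{\;G^\bullet\;} C_\bnd(\RR;\mathcal{K}^1)
  \hookrightarrow \jnorm{t}^{s} L^2(\RR;\mathcal{K}^1).
\end{equation*}
Since $\mathcal{K}^1 = L(t)^{-\frac12}\mathcal{K}^0$, the target space is $\jnorm{t}^{s} L(t)^{-\frac12} L^2(\Omega^\frac12 M)$, which is exactly~\eqref{eq:q1}.

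For~\eqref{eq:q2}, I would use instead the second mapping in Thm.~\ref{thm:G_classical}\,\ref{item-iii}, $G^\bullet : L^1(\RR;\mathcal{K}^0) \to C_\bnd^1(\RR;\mathcal{K}^0)$, together with the slightly upgraded embedding
\begin{equation*}
  C_\bnd^1(\RR;\mathcal{K}^0) \subset \jnorm{t}^{s}\jnorm{\partial_t}^{-1} L^2(\RR;\mathcal{K}^0),
  \quad s>\tfrac12.
\end{equation*}
This I would verify directly: for $u\in C_\bnd^1(\RR;\mathcal{K}^0)$ set $v \defn \jnorm{t}^{-s} u$; then both $v$ and $\partial_t v = -s\, t\jnorm{t}^{-s-2} u + \jnorm{t}^{-s}\partial_t u$ lie in $L^2(\RR;\mathcal{K}^0)$ because $\jnorm{t}^{-s}\in L^2(\RR)$ and $\jnorm{t}^{-s-1}\in L^2(\RR)$ under $s>\frac12$. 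Hence $v \in H^1(\RR;\mathcal{K}^0) = \jnorm{\partial_t}^{-1} L^2(\RR;\mathcal{K}^0)$, which gives the embedding.

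I do not anticipate a real obstacle: the statement is essentially weight-bookkeeping on top of Thm.~\ref{thm:G_classical}\,\ref{item-iii}. The only nonroutine item is the embedding $C_\bnd^1 \subset \jnorm{t}^{s}\jnorm{\partial_t}^{-1}L^2$, whose verification as above is elementary.
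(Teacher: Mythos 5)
Your argument is correct and is essentially the paper's own proof: both reduce the statement to the $L^1 \to C_\bnd$ (resp.\ $L^1 \to C_\bnd^1$) mapping properties of Thm.~\ref{thm:G_classical}\,\ref{item-iii} via the elementary weighted embeddings and the identification $L^2(\RR;\mathcal{K}^0)\cong L^2(\Omega^\frac12 M)$. Your explicit verification of $C_\bnd^1(\RR;\mathcal{K}^0)\subset\jnorm{t}^s\jnorm{\partial_t}^{-1}L^2(\RR;\mathcal{K}^0)$ is a detail the paper leaves implicit, and it is done correctly.
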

\begin{proof}
  By~\eqref{eq:w1}, for $\delta \in [0,1]$ we have
  \begin{equation}\label{eq:q3}
    G^\bullet : \jnorm{t}^{-s} L^2(\RR; \mathcal{K}^{-\delta}) \to \jnorm{t}^s L^2(\RR; \mathcal{K}^{1-\delta}).
  \end{equation}
  Setting $\delta=0$ we obtain
  \begin{equation}\label{eq:q4}
    G^\bullet : \jnorm{t}^{-s} L^2(\RR; \mathcal{K}^0) \to \jnorm{t}^s L(t)^{-\frac12} L^2(\RR; \mathcal{K}^0).
  \end{equation}
  But $L^2(\RR; \mathcal{K}^0) = L^2(\RR; L^2(\Omega^\frac12 \Sigma))$ and $L^2(\Omega^\frac12 M)$ can naturally be identified, which proves~\eqref{eq:q1}.

  It follows from~\eqref{eq:w2} that
  \begin{equation*}
    G^\bullet:\jnorm{t}^{-s} L^2(\RR; \mathcal{K}^0) \to \jnorm{t}^s \jnorm{\partial_t}^{-1} L^2(\RR; \mathcal{K}^0).
  \end{equation*}
  This yields~\eqref{eq:q2}.
  \end{proof}

Observe that in other approaches, \eg~\cite{bar}, the retarded and advanced propagators are the central objects and the Pauli--Jordan propagator is defined as their difference.
Here, instead, the Pauli--Jordan propagator follows immediately from the evolution $U(t,s)$ and should be seen as the central object, while the retarded and advanced propagators are derived objects.

Using the Pauli--Jordan propagator $G^\PJ$, we can associate to every sufficiently regular compactly supported function a solution of the homogeneous Klein--Gordon equation.
In fact, as the following proposition shows, also the converse is true.
\begin{proposition}
  Suppose that $u \in L_\loc^1(\RR;\mathcal{K}^1) \cap AC(\RR;\mathcal{K}) \cap AC^1(\RR; \mathcal{K}^{-1})$ satisfies $\KG u=0$.
  Then there exists a (non-unique) $f \in L_\comp^1(\RR; \mathcal{K}^{-1})$ such that $u = G^\PJ f$.
\end{proposition}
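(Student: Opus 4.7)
The plan is to use a smooth temporal cutoff to split $u$ into a future-supported and a past-supported piece, apply the uniqueness of the forward/backward Cauchy problem to each piece, and then reconstruct $u$ via $E^\PJ = E^\vee - E^\wedge$.

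Fix $a < b$ in $\RR$ and choose $\chi \in C^\infty(\RR)$ with $\chi(t) = 0$ for $t \leq a$ and $\chi(t) = 1$ for $t \geq b$. I would set
\begin{equation*}
  f \defn \KG(\chi u) = [\KG, \chi]\, u,
\end{equation*}
where the second equality uses $\KG u = 0$. Since $\chi$ depends only on $t$ and the spatial operators $W, W^*, L$ commute with multiplication by $\chi(t)$, the commutator reduces to
\begin{equation*}
  [\KG, \chi] = \chi'' + 2\chi'\partial_t + \im\chi'(W + W^*),
\end{equation*}
with coefficients supported in $[a,b]$. Using $u \in AC(\RR; \mathcal{K}^0)$ (so that $\partial_t u \in L^1_\loc(\RR; \mathcal{K}^0)$), $u \in L^1_\loc(\RR; \mathcal{K}^1)$ together with Assumption~\ref{asm:L-WW} (which yields $W u \in L^1_\loc(\RR; \mathcal{K}^0)$), and the dual statement $W^* : \mathcal{K}^0 \to \mathcal{K}^{-1}$, every summand of $[\KG, \chi]\, u$ lies in $L^1_\comp(\RR; \mathcal{K}^{-1})$. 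Hence $f \in L^1_\comp(\RR; \mathcal{K}^{-1})$.

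To identify $u$ with $G^\PJ f$, I would pass to the first-order formalism. With $\tilde u = \alpha^{-1} u$, $\tilde f = \alpha f$ and $\tilde U = \rho\tilde u$, the hypothesis gives $(\partial_t + \im B)\tilde U = 0$, and a direct computation using the definitions of $\rho$ and $B$ yields
\begin{equation*}
  \rho(\chi\tilde u) = \chi\tilde U + \im\,\iota_2(\chi'\tilde u),
  \qquad
  (\partial_t + \im B)\bigl(\chi\tilde U + \im\,\iota_2(\chi'\tilde u)\bigr) = \im\,\iota_2\tilde f.
\end{equation*}
The left-hand side of the second equation is supported in $[a, \infty)$; the uniqueness clause of Thm.~\ref{thm:E_classical}\ref{item.v} therefore gives $\chi\tilde U + \im\,\iota_2(\chi'\tilde u) = \im\, E^\vee \iota_2 \tilde f$. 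The analogous computation with $1 - \chi$ in place of $\chi$ produces a past-supported object satisfying $(\partial_t + \im B)(\cdot) = -\im\,\iota_2\tilde f$, hence equal to $-\im\, E^\wedge \iota_2 \tilde f$. Adding the two identities yields $\tilde U = \im\, E^\PJ \iota_2 \tilde f$, whence $u = \alpha\pi_2 Q\tilde U = \im\alpha\pi_2 Q E^\PJ\iota_2\alpha f = G^\PJ f$ by~\eqref{eq:G-E}.

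The main obstacle is the algebraic bookkeeping between the second-order Klein--Gordon equation and its first-order form: the lift $\rho(\chi\tilde u)$ is \emph{not} simply $\chi\tilde U$, and the correction $\im\,\iota_2(\chi'\tilde u)$ is essential both for the source produced by $\partial_t + \im B$ to take the pure second-component form $\iota_2\tilde f$ needed to match the definition~\eqref{eq:G-E} of $G^\PJ$, and for this source to coincide with $\alpha$ times the commutator $[\KG, \chi]\, u$. Once this is unwound, the regularity verification of $f$ and the two uniqueness steps are straightforward consequences of Assumption~\ref{asm:L-WW} and the mapping and uniqueness statements in Thm.~\ref{thm:E_classical}.
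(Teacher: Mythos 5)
Your proposal is correct and follows essentially the same route as the paper: choose a temporal cutoff $\chi$, set $f=\KG(\chi u)$, note that $\KG u=0$ forces $f$ to be compactly supported in time and to lie in $L^1_\comp(\RR;\mathcal{K}^{-1})$, and recover $u$ from the forward/backward uniqueness applied to $\chi u$ and $(\chi-1)u$. The only difference is presentational — you unwind the identification in the first-order formalism via $\rho$, $\iota_2$ and $E^\retadv$ (with the correct commutator and the correction term $\im\,\iota_2(\chi'\tilde u)$), whereas the paper applies $G^\vee$ and $G^\wedge$ directly at the level of $\KG$.
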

\begin{proof}
  Choose any $r,s \in \RR$, $r < s$, and $\chi \in C^\infty(M)$ such that $\chi(t) = 0$ for $t < r$, $0 \leq \chi(t) \leq 1$ for $r \leq t \leq s$ and $\chi(t) = 1$ for $t > s$.
  Clearly,
  \begin{equation*}
    0 = \KG u = \KG \chi u - \KG (\chi-1) u
  \end{equation*}
  and thus $\supp(\KG \chi u) \subset [r,s] \times \Sigma$.
  Besides, $\KG\chi u\in L_\comp^1(\RR;\mathcal{K}^{-1})$.
  Therefore, we can act with $G^\PJ$ on $\KG\chi u$, obtaining
  \begin{equation*}
    G^\PJ \KG \chi u = G^\vee \KG \chi u - G^\wedge \KG (\chi-1) u = u.
  \end{equation*}
  That is, $f = \KG \chi u$ is the desired function.
\end{proof}

Our construction of the classical propagators starts from the propagators for the first-order Klein--Gordon operator (\ie, given $E^\bullet$, we derive $G^\bullet$ using~\eqref{eq:G-E}).
If, instead, $G^\bullet$ is provided, then $E^\bullet$ can be derived:
\begin{enumerate}
  \item
    If $G^\bullet$ is an inverse of $\KG$ then
    \begin{equation*}
      E^\bullet = -\im \begin{pmatrix}
        -\alpha^{-1} G^\bullet \alpha^{-1} (D_t + W^*) & \alpha^{-1} G^\bullet \alpha^{-1} \\
        \one + (D_t + W) \alpha^{-1} G^\bullet \alpha^{-1} (D_t + W^*) & -(D_t + W) \alpha^{-1} G^\bullet \alpha^{-1} \\
      \end{pmatrix}
    \end{equation*}
    is (formally) an inverse of $(\partial_t + \im B)$.
  \item
    If $G^\bullet$ is a bisolution of $\KG$ then
    \begin{equation*}
      E^\bullet = -\im \begin{pmatrix}
        -\alpha^{-1} G^\bullet \alpha^{-1} (D_t + W^*) & \alpha^{-1} G^\bullet \alpha^{-1} \\
        (D_t + W) \alpha^{-1} G^\bullet \alpha^{-1} (D_t + W^*) & -(D_t + W) \alpha^{-1} G^\bullet \alpha^{-1} \\
      \end{pmatrix}
    \end{equation*}
    is (formally) a bisolution of $(\partial_t + \im B)$.
\end{enumerate}
Note the subtle difference in the formulas for inverses and bisolutions.
No such difference appears in~\eqref{eq:G-E} which yields $G^\bullet$ given $E^\bullet$.

\section{Instantaneous non-classical propagators}
\label{sec:instantaneous}

Consider an arbitrary reference time $\tau$.
According to Prop.~\ref{prop:B}, $B(\tau)$ is a self-adjoint operator on~$\Hne[\tau]$.
Therefore we can use the functional calculus to define the projections onto the positive and negative spectrum of~$B(\tau)$:
\begin{equation}\label{eq:comple}
  \Pi_\tau ^{(\pm)} \defn \one_{[0,\infty[}\bigl(\pm B(\tau)\bigr).
\end{equation}
Zero is in the resolvent set of $B(\tau)$, and therefore \eqref{eq:comple} are complementary.
\begin{proposition}\label{prps1}
  $\Pi^{(\pm)}_\tau$ restrict to complementary projections on $\mathcal{H}_\lambda$ for $\lambda \in [-1,1]$, and have the following properties:
  \begin{enumerate}
    \item $\Pi^{(\pm)}_\tau B(\tau) = B(\tau) \Pi^{(\pm)}_\tau $,
    \item $\Pi^{(+)}_\tau - \Pi^{(-)}_\tau = \sgn B(\tau)$,
    \item $\spec \bigl( \pm\Pi^{(\pm)}_\tau B(\tau) \bigr) \subset \mathopen{]}0,\infty\mathclose{[}$,
    \item $\Pi^{(\pm)}_\tau $ are self-adjoint with respect to $\mathcal{H}_{\lambda,\tau}$.
  \end{enumerate}
\end{proposition}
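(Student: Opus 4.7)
The plan is to reduce everything to the Borel functional calculus for the self-adjoint operator $B(\tau)$ on $\Hne[\tau] = \mathcal{H}_{-1,\tau}$ supplied by Prop.~\ref{prop:B}, and then to propagate the resulting properties along the scale $\mathcal{H}_{\lambda,\tau} = \abs{B(\tau)}^{-(1+\lambda)/2}\Hne[\tau]$, exploiting that $\Pi^{(\pm)}_\tau$, being a Borel function of $B(\tau)$, commutes with every power $\abs{B(\tau)}^s$, $s \in \RR$.

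First, I would verify (i)--(iv) at $\lambda = -1$. All four are direct consequences of the spectral theorem applied to $B(\tau)$ on $\Hne[\tau]$: (i) follows from the multiplicative form of the spectral calculus, since $x \mapsto x$ and $x \mapsto \one_{[0,\infty[}(\pm x)$ commute pointwise; the pointwise identity $\one_{[0,\infty[}(x) - \one_{[0,\infty[}(-x) = \sgn x$, valid off $x = 0$, yields (ii), which is enough because $0 \in \reso B(\tau)$; (iii) follows from the spectral mapping theorem applied to $x \mapsto \pm x\,\one_{[0,\infty[}(\pm x)$ on the invariant range $\Ran \Pi^{(\pm)}_\tau$, together with the fact that $0$ is absent from $\spec B(\tau)$; and (iv) at $\lambda = -1$ is the standard orthogonality of spectral projections with respect to the ambient scalar product $(\,\cdot\,|\,\cdot\,)_{\en^*,\tau}$. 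Complementarity $\Pi^{(+)}_\tau + \Pi^{(-)}_\tau = \one$ follows from the same pointwise reasoning used for (ii), and $(\Pi^{(\pm)}_\tau)^2 = \Pi^{(\pm)}_\tau$ from the idempotence of the indicator functions.

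Next I would lift the statement to general $\lambda \in [-1,1]$. Commutativity of $\Pi^{(\pm)}_\tau$ with $\abs{B(\tau)}^{(1+\lambda)/2}$ immediately implies that $\Pi^{(\pm)}_\tau$ preserves $\mathcal{H}_{\lambda,\tau} = \abs{B(\tau)}^{-(1+\lambda)/2}\Hne[\tau]$, which is precisely the restriction statement on $\mathcal{H}_\lambda$ under the Hilbertizable identification $\mathcal{H}_{\lambda,\tau} = \mathcal{H}_\lambda$. Properties (i)--(iii) are algebraic and transport verbatim under this restriction. For (iv) at general $\lambda$, the computation
\[ (u\,|\,\Pi^{(\pm)}_\tau v)_{\lambda,\tau} = (u\,|\,\abs{B(\tau)}^{1+\lambda} \Pi^{(\pm)}_\tau v)_{\en^*,\tau} = (\Pi^{(\pm)}_\tau u\,|\,\abs{B(\tau)}^{1+\lambda} v)_{\en^*,\tau} = (\Pi^{(\pm)}_\tau u\,|\,v)_{\lambda,\tau} \]
combines commutativity of $\Pi^{(\pm)}_\tau$ with $\abs{B(\tau)}^{1+\lambda}$ and the $\Hne[\tau]$-self-adjointness established in the previous step.

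The main subtle point I anticipate is confirming that the symbol $\Pi^{(\pm)}_\tau$ is unambiguous across the scale, that is, that the Borel calculi of the various self-adjoint realisations of $B(\tau)$ on $\mathcal{H}_{\lambda,\tau}$ produce the same operator as the one inherited from $\Hne[\tau]$. This should follow because $\abs{B(\tau)}^{-(1+\lambda)/2}$ is, by construction of the scale, a unitary isomorphism $\Hne[\tau] \to \mathcal{H}_{\lambda,\tau}$ that intertwines the two realisations of $B(\tau)$, hence maps one spectral family into the other, so the Borel operators assembled on either side coincide on the common dense subspace and extend uniquely.
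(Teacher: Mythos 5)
Your proposal is correct. The paper in fact states Prop.~\ref{prps1} without any proof (only the complementarity is justified in the preceding sentence, via $0 \in \reso B(\tau)$), so there is no argument to compare against; what you give is the standard spectral-calculus argument the authors evidently have in mind: establish everything on $\Hne[\tau]$ from Prop.~\ref{prop:B}, then transport along the scale $\mathcal{H}_{\lambda,\tau} = \abs{B(\tau)}^{-(1+\lambda)/2}\Hne[\tau]$ using that $\Pi^{(\pm)}_\tau$ commutes with powers of $\abs{B(\tau)}$. Your closing remark on the consistency of the functional calculi across the scale, and your reading of (iii) as a statement about the restriction of $\pm B(\tau)$ to $\Ran \Pi^{(\pm)}_\tau$ (the only reading under which it is true, since $0$ would otherwise lie in the spectrum), are both appropriate.
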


Moreover, the projections $\Pi^{(\pm)}_\tau $ split $\mathcal{H}_{\lambda,\tau}$ into subspaces of positive and negative charge (with respect to the charge form $Q$):
\begin{proposition}\label{prop:Q_pos}
  \begin{equation}
    \pm (u \,|\, Q \Pi^{(\pm)}_\tau u)
    = \pm(\Pi^{(\pm)}_\tau u \,|\, Q u)
    = \pm(\Pi^{(\pm)}_\tau u \,|\, Q \Pi^{(\pm)}_\tau u)
    \geq 0
  \end{equation}
  for all $u \in \mathcal{H}_\lambda$ with $\lambda \in [-1,1]$.
\end{proposition}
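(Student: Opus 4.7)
The plan is to recast the charge form on the right-hand sides in terms of the anti-energy inner product $\langle\,\cdot\;|\;\cdot\,\rangle_{\en^*,\tau} \defn (\,\cdot\;|\;Q H(\tau)^{-1} Q\,\cdot\,)$, with respect to which $B(\tau)$ is self-adjoint (Prop.~\ref{prop:B}), and then exploit the functional calculus of $B(\tau)$.

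The first step is the identity
\begin{equation*}
  (u \,|\, Q v) = \langle u \,|\, B(\tau) v \rangle_{\en^*,\tau},
\end{equation*}
which is an immediate consequence of the factorization $H(\tau) = Q B(\tau)$ proved in Sect.~\ref{sec:instantaneous_spaces}: indeed, $\langle u \,|\, B v \rangle_{\en^*,\tau} = (u \,|\, Q H^{-1} Q B v) = (u \,|\, Q H^{-1} H v) = (u \,|\, Q v)$. This identity is initially meaningful for $v \in \Hen$ and $u \in \Hne$ but extends by continuity/duality to the scale $\mathcal{H}_\lambda$ in the range $\lambda \in [-1,1]$ where both the charge form and the anti-energy pairing are defined (via the identifications $\mathcal{H}^{\mathrlap{*}}{}_\lambda = \mathcal{H}_{-\lambda}$ discussed in Sect.~\ref{sec:spaces}).

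The second step is to use that by Prop.~\ref{prps1}, $\Pi^{(\pm)}_\tau$ commute with $B(\tau)$, are idempotent, and are $\langle\,\cdot\;|\;\cdot\,\rangle_{\en^*,\tau}$-self-adjoint. Combining the identity from Step~1 with these properties yields
\begin{equation*}
  (u \,|\, Q \Pi^{(\pm)}_\tau u)
  = \langle u \,|\, B(\tau) \Pi^{(\pm)}_\tau u \rangle_{\en^*,\tau}
  = \langle \Pi^{(\pm)}_\tau u \,|\, B(\tau) \Pi^{(\pm)}_\tau u \rangle_{\en^*,\tau},
\end{equation*}
where in the last step one uses $\Pi^{(\pm)}_\tau B(\tau) = \Pi^{(\pm)}_\tau B(\tau) \Pi^{(\pm)}_\tau$ and the self-adjointness of $\Pi^{(\pm)}_\tau$. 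The same rearrangement applied in the other order shows that the middle expression $(\Pi^{(\pm)}_\tau u \,|\, Q u)$ equals the other two, proving the chain of equalities in the statement.

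It remains to establish positivity. By the spectral theorem applied to the self-adjoint operator $B(\tau)$ on $\Hne[\tau]$, we have $\pm B(\tau) \Pi^{(\pm)}_\tau = \abs{B(\tau)} \Pi^{(\pm)}_\tau$, since on the range of $\Pi^{(\pm)}_\tau$ the spectrum of $B(\tau)$ lies in $[0,\infty\mathclose{[}$, respectively $\mathopen{]}-\infty,0]$. Hence
\begin{equation*}
  \pm \langle \Pi^{(\pm)}_\tau u \,|\, B(\tau) \Pi^{(\pm)}_\tau u \rangle_{\en^*,\tau}
  = \langle \Pi^{(\pm)}_\tau u \,|\, \abs{B(\tau)} \Pi^{(\pm)}_\tau u \rangle_{\en^*,\tau}
  \geq 0,
\end{equation*}
which gives the desired sign. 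The main technical point—and the only place that requires some care—is justifying the identity of Step~1 simultaneously on the whole range $\lambda \in [-1,1]$; once that is in place, the rest is a purely algebraic manipulation in the functional calculus of $B(\tau)$.
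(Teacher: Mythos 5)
Your proof is correct and is essentially the argument the paper has in mind (the paper itself simply defers to Prop.~6.3 of~\cite{derezinski-siemssen}, whose proof is exactly this reduction of the charge form to $\langle u\,|\,B(\tau)v\rangle_{\en^*,\tau}$ via $H(\tau)=QB(\tau)$, followed by the spectral calculus of $B(\tau)$ on $\Hne[\tau]$). The only caveat, which you already flag, is that for $\lambda<0$ the common value $\norm[\big]{\abs{B(\tau)}^{1/2}\Pi^{(\pm)}_\tau u}_{\en^*,\tau}^2$ need not be finite, so the chain of equalities should be read as an identity of (possibly $+\infty$-valued) quadratic forms, or first established on the form domain $\Hdyn$ and then extended.
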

\begin{proof}
  The proof is the same as for Prop.~6.3 in~\cite{derezinski-siemssen}.
\end{proof}

The projections $\Pi^{(\pm)}_\tau$ can be used to define \emph{instantaneous positive/negative frequency bisolutions} $E^{(\pm)}_\tau$, given by their integral kernels as
\begin{equation}\label{eq:kernel_posneg}
  E^{(\pm)}_\tau(t, s) \defn \pm U(t, \tau) \Pi^{(\pm)}_\tau U(\tau, s).
\end{equation}
Using step functions, we then define the kernels of the \emph{instantaneous Feynman and anti-Feynman inverses of} $\partial_t + \im B$:
\begin{align*}
  E^\Feyn_\tau(t, s)  &\defn \theta(t - s) E^{(+)}_\tau(t, s) + \theta(s - t) E^{(-)}_\tau(t, s), \\
  E^\aFeyn_\tau(t, s) &\defn -\theta(t - s) E^{(-)}_\tau(t, s) - \theta(s - t) E^{(+)}_\tau(t, s).
\end{align*}
It is easy to see that these kernels can also be expressed using the retarded and advanced propagators:
\begin{subequations}\label{eq:E_qkernel_rels}\begin{align}
  E^\Feyn_\tau(t, s) &= E^\wedge(t, s) + E^{(+)}_\tau(t, s) = E^\vee(t, s) + E^{(-)}_\tau(t, s), \\
  E^\aFeyn_\tau(t, s) &= E^\vee(t, s) - E^{(+)}_\tau(t, s) = E^\wedge(t, s) - E^{(-)}_\tau(t, s).
\end{align}\end{subequations}

As before, these kernels define the corresponding propagators via~\eqref{eq:E_op-E_kernel}:
\begin{theorem}\label{thm:E_quantum}
  Let $\lambda \in [-1,1]$.
  \begin{enumerate}
    \item
      The instantaneous non-classical propagators $E^{(\pm)}_\tau$ and $E^\FeynFeyn_\tau$ are well-defined between the following spaces:
      \begin{align*}
        E^\bullet_\tau &: L^1_\comp(\RR; \mathcal{H}_\lambda) \to C(\RR; \mathcal{H}_\lambda), \\
        E^\bullet_\tau &: L^1_\comp(\RR; \Hen) \to C^1(\RR; \Hne).
      \end{align*}
    \item \label{item:E_quantum:2}
      If Assumption~\ref{asm:asymptotic} is satisfied, $E^{(\pm)}_\tau$ and $E^\FeynFeyn_\tau$ are bounded between the following spaces:
      \begin{align*}
        E^\bullet_\tau &: L^1(\RR; \mathcal{H}_\lambda) \to C_\bnd(\RR; \mathcal{H}_\lambda), \\
        E^\bullet_\tau &: L^1(\RR; \Hen) \to C^1_\bnd(\RR; \Hne).
      \end{align*}
    \item
      $E^{(\pm)}_\tau$ are bisolutions of $\partial_t + \im B$:
      \begin{align*}
        (\partial_t + \im B) E^{(\pm)}_\tau f &= 0,
        \quad
        f \in L^1_\comp(\RR; \Hen),
        \\
        E^{(\pm)}_\tau (\partial_t + \im B) f &= 0,
        \quad
        f \in L^1_\comp(\RR; \Hen) \cap AC_\comp(\RR;\Hne).
      \end{align*}
    \item
      $E^\FeynFeyn_\tau$ are inverses of $\partial_t + \im B$:
      \begin{align*}
        (\partial_t + \im B) E^\FeynFeyn_\tau f &= f,
        \quad
        f \in L^1_\comp(\RR; \Hen),
        \\
        E^\FeynFeyn_\tau (\partial_t + \im B) f &= f,
        \quad
        f \in L^1_\comp(\RR; \Hen) \cap AC_\comp(\RR;\Hne).
      \end{align*}
    \item \label{item:E_quantum:5}
      The instantaneous non-classical propagators satisfy the relations:
      \begin{alignat*}{3}
        E^\Feyn_\tau &= E^\wedge \mkern-1mu + E^{(+)}_\tau = E^\vee \mkern-1mu + E^{(-)}_\tau,
        &\qquad
        E^\Feyn_\tau + E^\aFeyn_\tau &= E^\vee \mkern-1mu + E^\wedge,
        &\qquad
        E^{(+)}_\tau \mkern-1mu - E^{(-)}_\tau = E^\PJ, \\
        E^\aFeyn_\tau &= E^\vee \mkern-1mu - E^{(+)}_\tau = E^\wedge \mkern-1mu - E^{(-)}_\tau,
        &\qquad
        E^\Feyn_\tau - E^\aFeyn_\tau &= E^{(+)}_\tau \mkern-1mu + E^{(-)}_\tau.
      \end{alignat*}
  \end{enumerate}
\end{theorem}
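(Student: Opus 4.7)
The plan is to reduce everything to the evolution $U(t,s)$ constructed in Thm.~\ref{thm:evolution} together with the functional calculus of $B(\tau)$ on $\Hne[\tau] = \Hne$. The key observation is that by Prop.~\ref{prps1}, $\Pi^{(\pm)}_\tau$ restricts to a bounded projection on each $\mathcal{H}_\lambda$ with $\lambda \in [-1,1]$, and by Thm.~\ref{thm:evolution}, $U(t,\tau)$ and $U(\tau,s)$ are bounded on every such $\mathcal{H}_\lambda$. Hence the kernel $\pm U(t,\tau) \Pi^{(\pm)}_\tau U(\tau,s)$ is a bounded operator on $\mathcal{H}_\lambda$ depending strongly continuously on $(t,s)$. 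For part~(i), this plus the standard fact that $L^1_\comp(\RR;\mathcal{X}) \ni f \mapsto \int K(t,s) f(s)\,\dif s \in C(\RR;\mathcal{X})$ for a strongly continuous bounded kernel yields the first mapping. For the $C^1(\RR;\Hne)$ statement, I would differentiate under the integral, invoking Thm.~\ref{thm:evolution}\ref{item:evolution:3}, which gives $\partial_t[U(t,\tau)v] = -\im B(t) U(t,\tau) v$ in $\Hne$ for $v \in \Hen$. Part~(ii) is analogous using the uniform bound from Cor.~\ref{cor:evolution-bound}.

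For part~(iii), I would differentiate $E^{(\pm)}_\tau f$ under the integral and apply Thm.~\ref{thm:evolution}\ref{item:evolution:3}:
\begin{equation*}
  (\partial_t + \im B(t)) E^{(\pm)}_\tau f(t) = \pm \int_\RR \bigl( \partial_t + \im B(t) \bigr) U(t,\tau) \Pi^{(\pm)}_\tau U(\tau,s) f(s)\, \dif s = 0,
\end{equation*}
since $\partial_t U(t,\tau) = -\im B(t) U(t,\tau)$. For the right action, I would integrate by parts: using $\partial_s U(\tau,s) = \im U(\tau,s) B(s)$ from Thm.~\ref{thm:evolution}\ref{item:evolution:3},
\begin{equation*}
  \int_\RR U(\tau,s) \bigl( \partial_s + \im B(s) \bigr) f(s)\, \dif s = \int_\RR \partial_s \bigl[ U(\tau,s) f(s) \bigr]\, \dif s = 0
\end{equation*}
because $f$ is compactly supported. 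The $AC_\comp$ hypothesis is exactly what justifies this integration by parts. Here one must be careful that all compositions make sense in the chain $\Hen \subset \Hdyn \subset \Hne$; this is bookkeeping but not difficult given the mapping properties already established.

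Part~(v) is a purely algebraic matter. The defining identities $\Pi^{(+)}_\tau + \Pi^{(-)}_\tau = \one$ and $\theta(t-s) + \theta(s-t) = 1$ (a.e.) give
\begin{equation*}
  E^{(+)}_\tau(t,s) - E^{(-)}_\tau(t,s) = U(t,\tau) \bigl( \Pi^{(+)}_\tau + \Pi^{(-)}_\tau \bigr) U(\tau,s) = U(t,s) = E^\PJ(t,s),
\end{equation*}
and a similar manipulation of step functions yields $E^\Feyn_\tau = E^\wedge + E^{(+)}_\tau = E^\vee + E^{(-)}_\tau$ and the analogous identities for $E^\aFeyn_\tau$. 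Then part~(iv) follows for free: since $E^\wedge$ is an inverse of $\partial_t + \im B$ by Thm.~\ref{thm:E_classical}\ref{item.v} and $E^{(+)}_\tau$ is a bisolution by part~(iii), $E^\Feyn_\tau = E^\wedge + E^{(+)}_\tau$ is an inverse on the stated domain; similarly for $E^\aFeyn_\tau$.

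The main obstacle I foresee is not conceptual but technical: making sure that the functional-calculus manipulations, the differentiation under the integral, and the integration by parts all take place in the correct Hilbertizable spaces. One has to track carefully the asymmetry between $\Hen$ (where $B(t)$ is defined) and $\Hne$ (where it takes values), since $\partial_t U(t,s)$ is only strongly continuous in the $\Hne$-topology. Once the mapping properties~\eqref{eq:pi2-iota2}-style bookkeeping is in place and the commutation of $\Pi^{(\pm)}_\tau$ with $B(\tau)$ from Prop.~\ref{prps1} is used to reshuffle factors, the proof reduces to the arguments already given for Thm.~\ref{thm:E_classical}.
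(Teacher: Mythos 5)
Your proposal is correct and follows essentially the same route as the paper, which simply remarks that the properties are proved ``along the same lines as Thm.~\ref{thm:E_classical}'' and that part~(v) follows from~\eqref{eq:E_qkernel_rels} and its linear combinations. You supply exactly those omitted details — boundedness of $\Pi^{(\pm)}_\tau$ on $\mathcal{H}_\lambda$ from Prop.~\ref{prps1}, differentiation under the integral and integration by parts via Thm.~\ref{thm:evolution}\ref{item:evolution:3}, and the algebraic kernel identities — so there is nothing to correct.
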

\begin{proof}
  The various properties of the non-classical propagators can be shown along the same lines as in Thm.~\ref{thm:E_classical} so we will omit the proofs.
  Property~\ref{item:E_quantum:5} in particular follows from~\eqref{eq:E_qkernel_rels} and its linear combinations.
\end{proof}

As for the classical propagators, we can also find an $L^2$ version of~\ref{item:E_quantum:2} in Thm.~\ref{thm:E_quantum}:
\begin{theorem}
  Let $s > \tfrac12$ and $\lambda \in [-1,1]$.
  If Assumption~\ref{asm:asymptotic} is satisfied, the instantaneous non-classical propagators $E^{(\pm)}_\tau$ and $E^\FeynFeyn_\tau$ are bounded between the following spaces:
  \begin{align*}
    E^\bullet_\tau &: \jnorm{t}^{-s} L^2(\RR; \mathcal{H}_\lambda) \to \jnorm{t}^s L^2(\RR; \mathcal{H}_\lambda), \\
    E^\bullet_\tau &: \jnorm{t}^{-s} L^2(\RR; \Hen) \to \jnorm{t}^s\jnorm{\partial_t}^{-1} L^2(\RR; \Hne).
  \end{align*}
\end{theorem}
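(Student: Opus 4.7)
The plan is to follow verbatim the argument used to prove Thm.~\ref{thm:E_classical-L2}, replacing the appeal to item~\ref{item.iii} of Thm.~\ref{thm:E_classical} by the corresponding instantaneous statement, item~\ref{item:E_quantum:2} of Thm.~\ref{thm:E_quantum}. The whole argument rests on two elementary embeddings, valid for any Banach space $\mathcal{X}$ whenever $s > \tfrac12$:
\begin{equation*}
  \jnorm{t}^{-s} L^2(\RR;\mathcal{X}) \subset L^1(\RR;\mathcal{X}),
  \qquad
  C_\bnd(\RR;\mathcal{X}) \subset \jnorm{t}^s L^2(\RR;\mathcal{X}),
\end{equation*}
both of which follow from the finiteness of $\int_\RR \jnorm{t}^{-2s}\, \dif t$ (via Cauchy--Schwarz for the first, directly for the second).

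For the first asserted bound I would simply compose these two embeddings with the $\mathcal{H}_\lambda$-valued mapping property $E^\bullet_\tau : L^1(\RR;\mathcal{H}_\lambda) \to C_\bnd(\RR;\mathcal{H}_\lambda)$ supplied by item~\ref{item:E_quantum:2} of Thm.~\ref{thm:E_quantum}; this yields the desired boundedness $\jnorm{t}^{-s} L^2(\RR;\mathcal{H}_\lambda) \to \jnorm{t}^s L^2(\RR;\mathcal{H}_\lambda)$ directly.

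For the second asserted bound I would instead use the finer mapping $E^\bullet_\tau : L^1(\RR;\Hen) \to C^1_\bnd(\RR;\Hne)$ of the same item. Setting $u \defn E^\bullet_\tau f$ for $f \in \jnorm{t}^{-s} L^2(\RR;\Hen)$, both $u$ and $\partial_t u$ then belong to $C_\bnd(\RR;\Hne)$, hence to $\jnorm{t}^s L^2(\RR;\Hne)$ by the second embedding above. Since $\partial_t \jnorm{t}^{-s}$ is a bounded multiplier (of order $\jnorm{t}^{-s-1}$), the Leibniz rule gives $\partial_t(\jnorm{t}^{-s} u) \in L^2(\RR;\Hne)$, and therefore $\jnorm{\partial_t} \jnorm{t}^{-s} u \in L^2(\RR;\Hne)$, which is exactly the statement $u \in \jnorm{t}^s \jnorm{\partial_t}^{-1} L^2(\RR;\Hne)$.

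I do not anticipate any real obstacle: the argument is a direct parallel of the short proof of Thm.~\ref{thm:E_classical-L2}, the only minor extra step being the Leibniz-rule observation used to commute $\jnorm{t}^{-s}$ past $\jnorm{\partial_t}$ when unpacking the target space $\jnorm{t}^s \jnorm{\partial_t}^{-1} L^2$.
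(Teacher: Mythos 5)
Your proposal is correct and follows the same route the paper takes: the theorem is the direct analogue of Thm.~\ref{thm:E_classical-L2}, proved by sandwiching the mapping properties of item~\ref{item:E_quantum:2} of Thm.~\ref{thm:E_quantum} between the embeddings $\jnorm{t}^{-s} L^2(\RR;\mathcal{X}) \subset L^1(\RR;\mathcal{X})$ and $C_\bnd(\RR;\mathcal{X}) \subset \jnorm{t}^s L^2(\RR;\mathcal{X})$ for $s > \tfrac12$. Your additional Leibniz-rule remark for unpacking $\jnorm{t}^s \jnorm{\partial_t}^{-1} L^2(\RR;\Hne)$ is a correct elaboration of a step the paper leaves implicit, not a different argument.
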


Similar to~\eqref{eq:G-E}, we define the instantaneous non-classical propagators $G^{(\pm)}_\tau$ and $G^\FeynFeyn_\tau$ for the Klein--Gordon operator~$\KG$ by
\begin{align*}
  G^{(\pm)}_\tau \defn \alpha \pi_2 Q E^{(\pm)}_\tau \iota_2 \alpha,
  \quad
  G^\FeynFeyn_\tau \defn \im \alpha \pi_2 Q E^\FeynFeyn_\tau \iota_2 \alpha.
\end{align*}
Note the absence of the complex unit in the definition of $G^{(\pm)}_\tau$ so that~$G^{(\pm)}_\tau$ define positive forms, see property~\ref{item:G_quantum:6} below.

Analogously to Thm.~\ref{thm:G_classical}, we find
\begin{theorem}\label{thm:G_quantum}
  Let $\delta \in [-1,1]$.
  \begin{enumerate}
    \item
      The instantaneous non-classical propagators $G^{(\pm)}_\tau$ and $G^\FeynFeyn_\tau$ are well-defined between the following spaces:
      \begin{align*}
        G^\bullet_\tau &: L^1_\comp(\RR; \mathcal{K}^{-\delta}) \to C(\RR; \mathcal{K}^{1-\delta}), \\
        G^\bullet_\tau &: L^1_\comp(\RR; \mathcal{K}^0) \to C^1(\RR; \mathcal{K}^0).
      \end{align*}
    \item \label{item:G_quantum:2}
      If Assumption~\ref{asm:asymptotic} is satisfied, $G^{(\pm)}_\tau$ and $G^\FeynFeyn_\tau$ are bounded between the following spaces:
      \begin{align*}
        G^\bullet_\tau &: L^1(\RR; \mathcal{K}^{-\delta}) \to C_\bnd(\RR; \mathcal{K}^{1-\delta}), \\
        G^\bullet_\tau &: L^1(\RR; \mathcal{K}^0) \to C^1_\bnd(\RR; \mathcal{K}^0).
      \end{align*}
    \item
      $G^{(\pm)}_\tau$ are bisolutions of $\KG$:
      \begin{align*}
        \KG G^{(\pm)}_\tau f &= 0,
        \quad
        f \in L^1_\comp(\RR; \mathcal{K}^0),
        \\
        G^{(\pm)}_\tau \KG f &= 0,
        \quad
        f \in L^1_\comp(\RR; \mathcal{K}^1) \cap AC_\comp(\RR;\mathcal{K}^0) \cap AC^1_\comp(\RR; \mathcal{K}^{-1}).
      \end{align*}
    \item
      $G^\FeynFeyn_\tau$ are inverses of $\KG$:
      \begin{align*}
        \KG G^\FeynFeyn_\tau f &= f,
        \quad
        f \in L^1_\comp(\RR; \mathcal{K}^0),
        \\
        G^\FeynFeyn_\tau \KG f &= f,
        \quad
        f \in L^1_\comp(\RR; \mathcal{K}^1) \cap AC_\comp(\RR;\mathcal{K}^0) \cap AC^1_\comp(\RR; \mathcal{K}^{-1}).
      \end{align*}
    \item \label{item:G_quantum:5}
      The instantaneous non-classical propagators satisfy the relations:
      \begin{alignat*}{3}
        G^\Feyn_\tau &= G^\wedge \mkern-1mu + \im G^{(+)}_\tau = G^\vee \mkern-1mu + \im G^{(-)}_\tau,
        &\qquad
        G^\Feyn_\tau + G^\aFeyn_\tau &= G^\vee \mkern-1mu + G^\wedge,
        &\qquad
        G^{(+)}_\tau \mkern-1mu - G^{(-)}_\tau = -\im G^\PJ, \\
        G^\aFeyn_\tau &= G^\vee \mkern-1mu - \im G^{(+)}_\tau = G^\wedge \mkern-1mu - \im G^{(-)}_\tau,
        &\qquad
        G^\Feyn_\tau - G^\aFeyn_\tau &= \im G^{(+)}_\tau \mkern-1mu + \im G^{(-)}_\tau.
      \end{alignat*}
    \item \label{item:G_quantum:6}
      The instantaneous positive/negative frequency bisolutions are positive:
      \begin{equation*}
        (f \,|\, G^{(\pm)}_\tau f) = \int_M \conj{f}\, G^{(\pm)}_\tau f \geq 0
      \end{equation*}
      for $f \in L^1_\comp(\RR; \mathcal{K}^0)$.
  \end{enumerate}
\end{theorem}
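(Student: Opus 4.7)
The plan is to reduce items (i)--(v) to the corresponding statements of Theorem~\ref{thm:E_quantum} via the defining identity $G^\bullet_\tau = (\im)\alpha\pi_2 Q E^\bullet_\tau \iota_2 \alpha$, exactly as in the proof of Theorem~\ref{thm:G_classical}, and to derive the positivity~\ref{item:G_quantum:6} from the $Q$-invariance of the evolution combined with Proposition~\ref{prop:Q_pos}.

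For items (i)--(ii) I would chain
\begin{equation*}
  L^1_\comp(\RR;\mathcal{K}^{-\delta}) \xrightarrow{\iota_2 \alpha} L^1_\comp(\RR;\Hen) \xrightarrow{E^\bullet_\tau} C(\RR;\Hen) \xrightarrow{\alpha \pi_2 Q} C(\RR;\mathcal{K}^{1-\delta}),
\end{equation*}
using~\eqref{eq:pi2-iota2}, Theorem~\ref{thm:E_quantum} and Assumption~\ref{asm:beta_cont} (or~\ref{asm:beta_bnd} for the uniform bound), and analogously for the $C^1(\RR;\mathcal{K}^0)$-valued statement. Items (iii)--(iv) follow from the factorization $\KG = -\im\alpha^{-1}\pi_2(\partial_t + \im B)\rho\alpha^{-1}$ in~\eqref{eq:KG1-KG} combined with the bi-solution/inverse properties of $E^\bullet_\tau$; the only point to check is that the compositions $\KG G^\bullet_\tau$ and $G^\bullet_\tau \KG$ are well-defined on the stated domains, which is the same bookkeeping with $\rho$, $\pi_2$ and $\iota_2$ used in the proof of Theorem~\ref{thm:G_classical}. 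Item~\ref{item:G_quantum:5} is obtained by applying $\alpha\pi_2 Q(\,\cdot\,)\iota_2\alpha$ (with the appropriate $\im$ factor) to the identities of Theorem~\ref{thm:E_quantum}~\ref{item:E_quantum:5}.

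For item~\ref{item:G_quantum:6} the genuinely new ingredient is the $Q$-invariance of the evolution: differentiating $(U(t,\tau)u\,|\,QU(t,\tau)v)$ and using the relation $QB(t) = B(t)^*Q$ (a consequence of the self-adjointness of $H(t)=QB(t)$ in Proposition~\ref{prop:B}) makes the two cross terms cancel, so $(U(t,\tau)u\,|\,QU(t,\tau)v) = (u\,|\,Qv)$ on~$\Hdyn$. Setting $\Phi \defn \int_\RR U(\tau,s)\iota_2\alpha(s)f(s)\,\dif s$, the trivial identity $(a\,|\,\pi_2 Q w)_{L^2} = (\iota_2 a\,|\,Qw)_{\mathcal{H}}$ together with the $Q$-invariance rearranges the spacetime integral into
\begin{equation*}
  (f\,|\,G^{(\pm)}_\tau f) = \pm \int_\RR (U(\tau,t)\iota_2\alpha(t)f(t)\,|\,Q\Pi^{(\pm)}_\tau \Phi)\,\dif t = \pm(\Phi\,|\,Q\Pi^{(\pm)}_\tau\Phi),
\end{equation*}
which is non-negative by Proposition~\ref{prop:Q_pos}.

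The main obstacle I anticipate is keeping all these manipulations on spaces where they are legitimate: one needs $\Phi \in \Hdyn$ so that $\Pi^{(\pm)}_\tau$ acts, $Q$ is bounded, and the $Q$-invariance applies. The hypothesis $f \in L^1_\comp(\RR;\mathcal{K}^0)$ combined with~\eqref{eq:pi2-iota2} and Assumption~\ref{asm:beta_cont} gives $\iota_2\alpha(s)f(s) \in L^1_\comp(\RR;\Hen) \subset L^1_\comp(\RR;\Hdyn)$, so Theorem~\ref{thm:evolution} ensures $\Phi \in \Hen \subset \Hdyn$ and all pairings above are justified.
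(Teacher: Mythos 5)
Your proposal is correct and follows essentially the same route as the paper: items (i)--(v) are reduced to Theorem~\ref{thm:E_quantum} by the same bookkeeping with $\alpha$, $\pi_2 Q$, $\iota_2$ and $\rho$ as in Theorem~\ref{thm:G_classical}, and item~\ref{item:G_quantum:6} is obtained by rewriting the double integral as $\pm(\Phi\,|\,Q\Pi^{(\pm)}_\tau\Phi)$ with $\Phi = \int U(\tau,t)\iota_2\tilde f(t)\,\dif t \in \Hen$ and invoking Proposition~\ref{prop:Q_pos}. The only difference is that you spell out the $Q$-invariance of the evolution (via $QB(t)=B(t)^*Q$), which the paper uses implicitly; that is a welcome clarification, not a deviation.
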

\begin{proof}
  We only show~\ref{item:G_quantum:6}; the remaining properties follow from corresponding properties of~$E^\bullet_\tau$ in Thm.~\ref{thm:E_quantum} and can be shown as in Thm.~\ref{thm:G_classical}.
  For~\ref{item:G_quantum:6}, we note that
  \begin{align*}
    (f \,|\, G^{(\pm)}_\tau f)
    &= \iint \bigl( \iota_2 \tilde{f}(t) \,\big|\, Q E^{(\pm)}_\tau(t,s) \iota_2 \tilde{f}(s) \bigr)\, \dif s\, \dif t \\
    &= \bigl( \tilde{u}(\tau) \,\big|\, Q \Pi^{(\pm)}_\tau \tilde{u}(\tau) \bigr)
    \geq 0
  \end{align*}
  by Prop.~\ref{prop:Q_pos}, where we set $\tilde{f} = \alpha f$ and $\tilde{u}(\tau) = \int U(\tau, t) \tilde{f}(t)\, \dif t \in \Hen$.
\end{proof}

The $L^2$ version of~\ref{item:G_quantum:2} of Thm.~\ref{thm:G_quantum} is:
\begin{theorem}
  Let $s > \tfrac12$.
  If Assumption~\ref{asm:asymptotic} is satisfied, the instantaneous non-classical propagators~$G^{(\pm)}_\tau$ and~$G^\FeynFeyn_\tau$ are bounded between the following spaces:
  \begin{align*}
    G^\bullet_\tau &: \jnorm{t}^{-s} L^2(\Omega^\frac12 M) \to \jnorm{t}^sL(t)^{-\frac12} L^2(\Omega^\frac12 M),\\
    G^\bullet_\tau &: \jnorm{t}^{-s} L^2(\Omega^\frac12 M) \to \jnorm{t}^s\jnorm{\partial_t}^{-1} L^2(\Omega^\frac12 M).
  \end{align*}
\end{theorem}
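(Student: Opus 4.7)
The plan is to reduce the claim to item~\ref{item:G_quantum:2} of Thm.~\ref{thm:G_quantum} via the same weighted embedding argument that was used for the classical case.

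First I would recall the two elementary embeddings valid for any Banach space $\mathcal{X}$ and any $s > \tfrac12$,
\begin{equation*}
  \jnorm{t}^{-s} L^2(\RR; \mathcal{X}) \subset L^1(\RR; \mathcal{X})
  \quad\text{and}\quad
  C_\bnd(\RR; \mathcal{X}) \subset \jnorm{t}^s L^2(\RR; \mathcal{X}),
\end{equation*}
each with a continuity constant depending only on $s$ (by Cauchy--Schwarz against the $L^2$-function $\jnorm{t}^{-s}$). Composing with the $L^1 \to C_\bnd$ bounds of item~\ref{item:G_quantum:2} of Thm.~\ref{thm:G_quantum}, applied with $\delta = 0$ and with $\mathcal{X} = \mathcal{K}^0$, we obtain
\begin{equation*}
  G^\bullet_\tau : \jnorm{t}^{-s} L^2(\RR; \mathcal{K}^0) \to \jnorm{t}^s L^2(\RR; \mathcal{K}^1)
  \quad\text{and}\quad
  G^\bullet_\tau : \jnorm{t}^{-s} L^2(\RR; \mathcal{K}^0) \to \jnorm{t}^s L^2(\RR; \mathcal{K}^0),
\end{equation*}
where the second bound uses the $L^1 \to C^1_\bnd$ mapping property (in particular the $C_\bnd$ control of $\partial_t$ into $\mathcal{K}^0$).

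Next I would translate these into the statement of the theorem using the natural unitary identification $L^2(\RR; L^2(\Omega^\frac12 \Sigma)) \cong L^2(\Omega^\frac12 M)$ (Fubini applied to the canonical half-density structure). Recalling that $\mathcal{K}^1 = L(t)^{-\frac12} L^2(\Omega^\frac12 \Sigma)$ immediately converts the first bound into the stated estimate into $\jnorm{t}^s L(t)^{-\frac12} L^2(\Omega^\frac12 M)$. For the second estimate, the fact that $G^\bullet_\tau f$ lies in $\jnorm{t}^s L^2(\RR;\mathcal{K}^0)$ together with $\partial_t G^\bullet_\tau f$ lying in the same space (again by item~\ref{item:G_quantum:2}) gives membership of $G^\bullet_\tau f$ in $\jnorm{t}^s \jnorm{\partial_t}^{-1} L^2(\Omega^\frac12 M)$, with the operator norm controlled by the two previous bounds.

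No step is genuinely hard here; the argument is essentially the same as in the $L^2$ version of Thm.~\ref{thm:G_classical}. The only mild care is to make sure the weight $\jnorm{t}^s$ on the target is not disturbed when we pass through the Hilbertizable identification $\mathcal{K}^1 = L(t)^{-\frac12} L^2(\Omega^\frac12\Sigma)$; this is harmless because Assumption~\ref{asm:asymptotic} yields norms on $\mathcal{K}^\lambda$ that are equivalent uniformly in $t$, so $L(t)^{-\frac12}$ commutes with the scalar weight $\jnorm{t}^s$ up to an overall constant.
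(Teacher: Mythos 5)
Your argument is correct and is essentially the paper's: the paper omits an explicit proof of this theorem, but the analogous $L^2$ versions for the classical propagators are proved precisely by the two weighted embeddings $\jnorm{t}^{-s}L^2(\RR;\mathcal{X}) \subset L^1(\RR;\mathcal{X})$ and $C_\bnd(\RR;\mathcal{X}) \subset \jnorm{t}^s L^2(\RR;\mathcal{X})$ composed with the $L^1 \to C_\bnd$ (resp.\ $C^1_\bnd$) mapping properties, here supplied by item~\ref{item:G_quantum:2} of Thm.~\ref{thm:G_quantum} with $\delta=0$, followed by the identification $L^2(\RR;\mathcal{K}^0) \cong L^2(\Omega^\frac12 M)$. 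Your closing remark on the uniform (in $t$) equivalence of the $\mathcal{K}^\lambda$ norms under Assumption~\ref{asm:asymptotic} is the right point to flag and is handled correctly.
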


In the static case, the non-classical propagators defined above do not depend on~$\tau$.
They are the natural propagators to consider in that situation, see also our earlier work~\cite{derezinski-siemssen}.

In the non-static case, however, the instantaneous non-classical propagators just defined have deficiencies from the physical point of view, see \eg~\cite{fulling}.
First of all, their definition hinges on the arbitrary choice of a fixed instance of time and, even more seriously, on the choice of a time function.
Secondly, instantaneous positive frequency bisolutions usually do not satisfy the microlocal spectrum condition of~\cite{radzikowski} (in other words, they do not define Hadamard states).

Nevertheless, the situation improves if the Klein--Gordon operator is infinitesimally static at the time when the positive/negative frequency splitting is performed.
In a forthcoming article~\cite{derezinski-siemssen:hadamard} we will show (using methods of evolution equations) that the corresponding instantaneous positive frequency bisolutions, which we define in the following section, satisfy then the microlocal spectrum condition of~\cite{radzikowski}.

\section{Asymptotic non-classical propagators}
\label{sec:asymptotic}

Throughout this section we assume that Assumption~\ref{asm:asymptotic} is satisfied.
It follows, in particular, that $B(t)$ converges to $B(\pm\infty)$ as $t \to \pm\infty$ in norm as an operator from $\Hen$ to $\Hne$.
We define the \emph{out} and \emph{in positive/negative frequency projections}
\begin{align*}
  \Pi_+^{(\pm)} &\defn \one_{[0,\infty[}\bigl(\pm B(+\infty)\bigr),\\
  \Pi_-^{(\pm)} &\defn \one_{[0,\infty[}\bigl(\pm B(-\infty)\bigr).
\end{align*}

\begin{theorem}
  The strong limits
  \begin{subequations}\begin{align}
    \Pi^{(\pm)}_+(t) &\defn \slim_{\tau \to +\infty} U(t,\tau ) \Pi^{(\pm)}_{+} U(\tau ,t), \label{eq:prop1} \\
    \Pi^{(\pm)}_-(t) &\defn \slim_{\tau \to -\infty} U(t,\tau ) \Pi^{(\pm)}_{-} U(\tau,t) \label{eq:prop2}
  \end{align}\end{subequations}
  exist as bounded operators on $\mathcal{H}_\lambda$ with $\lambda \in [-1,1]$.
  They satisfy the obvious analogs of Propositions \ref{prps1} and \ref{prop:Q_pos}.
Besides,
  \begin{align}
    U(s,t) \Pi^{(\pm)}_+(t) U(t,s) &= \Pi^{(\pm)}_+(s), \label{eq:pp1+} \\
    U(s,t) \Pi^{(\pm)}_-(t) U(t,s) &= \Pi^{(\pm)}_-(s). \label{eq:pp1-}
  \end{align}
\end{theorem}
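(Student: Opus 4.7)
My plan is to combine Cook's method with a density and uniform-boundedness argument; I sketch it for $\Pi^{(+)}_+(t)$, the other three cases being entirely analogous.

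\textbf{Step 1 (Uniform bounds and the asymptotic operator).}
By Assumption~\ref{asm:asymptotic} the operators $B(\tau)$ form a Cauchy family in the norm $\Hen\to\Hne$, so they converge to a well-defined self-adjoint operator $B(+\infty)$ (with $0$ in its resolvent set, by the asymptotic version of Prop.~\ref{prop:B}). Hence $\Pi^{(+)}_+=\one_{[0,+\infty[}(B(+\infty))$ is a bounded projection on each $\mathcal{H}_\lambda$, $\lambda\in[-1,1]$, by the analogue of Prop.~\ref{prps1}. Combined with the uniform bounds on $U(t,\tau)$, $U(\tau,t)$ from Cor.~\ref{cor:evolution-bound} this yields
\begin{equation*}
  \sup_{\tau\in\RR}\,\norm[\big]{U(t,\tau)\Pi^{(+)}_+U(\tau,t)}_{\mathcal{H}_\lambda\to\mathcal{H}_\lambda}<\infty,
\end{equation*}
so it suffices to prove strong convergence of $U(t,\tau)\Pi^{(+)}_+U(\tau,t)u$ for $u$ in a dense subspace of $\mathcal{H}_\lambda$.

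\textbf{Step 2 (Cook identity).}
For $u\in\Hen$ the differential equations of Thm.~\ref{thm:evolution} together with $[B(+\infty),\Pi^{(+)}_+]=0$ give
\begin{equation*}
  \partial_r\bigl[U(t,r)\Pi^{(+)}_+U(r,t)u\bigr] = \im\,U(t,r)\bigl[B(r)-B(+\infty),\,\Pi^{(+)}_+\bigr]U(r,t)u,
\end{equation*}
so integrating between $\tau_1\leq\tau_2$ yields
\begin{equation*}
  \bigl(U(t,\tau_2)\Pi^{(+)}_+U(\tau_2,t) - U(t,\tau_1)\Pi^{(+)}_+U(\tau_1,t)\bigr)u = \im\!\int_{\tau_1}^{\tau_2}\!U(t,r)\bigl[B(r)-B(+\infty),\Pi^{(+)}_+\bigr]U(r,t)u\,dr.
\end{equation*}
From the block structure of $B$ together with Assumption~\ref{asm:LW_growth}, $\norm{B(r)-B(+\infty)}_{\Hen\to\Hne}\leq\int_r^\infty C(\sigma)\,d\sigma\to0$, and the integrand is dominated in $\Hne$ by a constant times $\norm{u}_\Hen\int_r^\infty C(\sigma)\,d\sigma$. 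For $u$ from a sufficiently regular dense class the integral over $[\tau_1,\tau_2]$ is then Cauchy as $\tau_1,\tau_2\to+\infty$, producing the strong limit.

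\textbf{Step 3 (Intertwining and the remaining properties).}
With $\Pi^{(+)}_+(t)$ in hand, \eqref{eq:pp1+} is immediate: since $U(s,t)$ and $U(t,s)$ are bounded on $\mathcal{H}_\lambda$, they pass through the strong limit, and the composition law \eqref{eq:U_composition} gives
\begin{equation*}
  U(s,t)\Pi^{(+)}_+(t)U(t,s) = \slim_{\tau\to+\infty} U(s,\tau)\Pi^{(+)}_+U(\tau,s) = \Pi^{(+)}_+(s).
\end{equation*}
Idempotency $\Pi^{(+)}_+(t)^2=\Pi^{(+)}_+(t)$ and the other items in the analogue of Prop.~\ref{prps1} pass to the limit using $U(\tau,t)U(t,\tau)=\one$. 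The $Q$-positivity analogue of Prop.~\ref{prop:Q_pos} follows because $B(t)^*Q=QB(t)$ implies that the evolution is $Q$-unitary, $(U(\tau,t)u\,|\,QU(\tau,t)v)=(u\,|\,Qv)$, so
\begin{equation*}
  (u\,|\,QU(t,\tau)\Pi^{(+)}_+U(\tau,t)u) = (U(\tau,t)u\,|\,Q\Pi^{(+)}_+U(\tau,t)u) \geq 0
\end{equation*}
by Prop.~\ref{prop:Q_pos} applied to $B(+\infty)$, and the inequality is preserved in the strong limit.

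\textbf{Main obstacle.}
The delicate point is Step~2: although $\int_r^\infty C(\sigma)\,d\sigma$ tends to $0$, it need not be integrable in~$r$ on $[\tau_0,+\infty)$ under Assumption~\ref{asm:LW_growth} alone, so the naive $\Hne$-norm bound on the integrand does not by itself give a Cauchy condition for the integral. The expected remedy is to work first on a dense subspace of regular vectors -- for instance $u\in\Ran\chi(B(+\infty))$ with $\chi\in C_\comp^\infty$ supported away from $0$ -- on which the commutator $[B(r)-B(+\infty),\Pi^{(+)}_+]$ can be rewritten (via the spectral calculus of $B(+\infty)$ or an Abel-type integration by parts in~$r$) so that a $\partial_r$-derivative is shifted onto $U(t,r)U(r,t)$, producing an extra bounded $B$-factor that is absorbed by the uniform form bound and turning the slow tail $\int_r^\infty C$ into a genuinely summable quantity.
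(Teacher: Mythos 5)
Your strategy is essentially the one the paper uses: uniform bounds on the conjugated projections via Cor.~\ref{cor:evolution-bound}, strong convergence on the dense subspace $\Hen$ by Cook's method, and then the algebraic, intertwining and $Q$-positivity properties pushed through the limit exactly as in your Step~3 (the paper likewise obtains \eqref{eq:pp1+} from the composition law and the positivity from $Q$-conservation of the evolution). The only structural difference is cosmetic: the paper inserts $\Pi^{(\pm)}_+ = \e^{\im(t-r)B(+\infty)}\,\Pi^{(\pm)}_+\,\e^{\im(r-t)B(+\infty)}$ and establishes the two wave-operator-type limits $\slim_{r\to+\infty}U(t,r)\e^{\im(t-r)B(+\infty)}$ and $\slim_{r\to+\infty}\e^{\im(r-t)B(+\infty)}U(r,t)$ separately, whereas you differentiate the whole conjugated projection and land on the commutator $[B(r)-B(+\infty),\Pi^{(+)}_+]$; the resulting Cook integrals are estimated in the same way.

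The obstacle you single out in Step~2 is genuine, and it is worth saying plainly that the paper's own proof does not dispose of it either: after writing the same Cook formula, the paper bounds the integrand by a constant times $\norm[\big]{(\one\oplus L(\tau)^{-\frac12})(B(s)-B(+\infty))(L(\tau)^{-\frac12}\oplus\one)}$, remarks that this is uniformly bounded (by Prop.~\ref{prop:norm_cont} it is in fact $O\bigl(\int_s^{\infty}C(\sigma)\,\dif\sigma\bigr)$), and then concludes that the integral over $[t,r]$ tends to zero. As you observe, $C\in L^1(\RR)$ only gives $\int_s^\infty C\to 0$, not integrability of $s\mapsto\int_s^\infty C(\sigma)\,\dif\sigma$ at $+\infty$, so the conclusion as written needs something more --- in effect a strengthened decay hypothesis such as $\jnorm{s}C(s)\in L^1(\RR)$, or an additional cancellation argument of the kind you sketch. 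Your proposed remedy (spectral localization of $u$ away from $0$ plus an Abel-type integration by parts) is plausible but remains a sketch, so your Step~2 is not complete; you have, however, correctly located the one step of this theorem where the published argument is itself an assertion rather than a proof, and everything else in your proposal matches the paper.
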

\begin{proof}
  We only prove the theorem for~\eqref{eq:prop1} because the proof for~\eqref{eq:prop2} is the same.
  We have
  \begin{equation*}
    U(t,r) \Pi^{(\pm)}_+ U(r,t) = U(t,r) \e^{\im(t-r) B(+\infty)} \Pi^{(\pm)}_+ \e^{\im(r-t) B(+\infty)} U(r,t).
  \end{equation*}
  We analyze separately the limit $r \to +\infty$ of the operators left and right of the projection.
  Since both operators are bounded on $\mathcal{H}_{\lambda,\tau}$, $\lambda \in [-1,1]$, uniformly in $t,r$ for arbitrary $\tau \in \RR$, it is sufficient to show the convergence on $\Hen$ with respect to the norm on $\Hne[\tau]$.

  We may assume that $r > t$.
  For $u \in \Hen$ we have
  \begin{align*}
    U(t, r) \e^{\im (t-r) B(+\infty)} u
    &= u +\! \int_t^r \partial_s \bigl( U(t, s) \e^{\im (t-s) B(+\infty)} \bigr) u\, \dif s \\
    &= u - \im\! \int_t^r U(t, s) \bigl( B(s) - B(+\infty) \bigr) \e^{\im (t-s) B(+\infty)} u\, \dif s,
  \end{align*}
  by the fundamental theorem of calculus and~\ref{item:evolution:3} of Thm.~\ref{thm:evolution}.
  Taking the norm of this expression in $\Hne[\tau]$, we find
  \begin{equation*}\begin{split}
    \MoveEqLeft \norm[\big]{U(t, r) \e^{\im (t-r) B(+\infty)} u - u}_{\en^*,\tau} \\
    &\leq C \norm{u}_{\en,\tau} \int_t^r \norm[\big]{\bigl( \one \oplus L(\tau)^{-\frac12} \bigr) \bigl( B(s) - B(+\infty) \bigr) \bigl( L(\tau)^{-\frac12} \oplus \one \bigr)}\, \dif s,
  \end{split}\end{equation*}
  since $U(t,s)$ is uniformly bounded on $\Hne[\tau]$.

  It follows from the proof of Prop.~\ref{prop:norm_cont} that
  \begin{equation*}
    \norm[\big]{\bigl( \one \oplus L(\tau)^{-\frac12} \bigr) \bigl( B(s) - B(+\infty) \bigr) \bigl( L(\tau)^{-\frac12} \oplus \one \bigr)}
  \end{equation*}
  is uniformly bounded.
  Therefore,
  \begin{equation*}
    \norm[\big]{U(t, r) \e^{\im (t-r) B(+\infty)} u - u}_{\en^*,\tau} \to 0
  \end{equation*}
  as $t, r \to +\infty$ and the desired convergence follows.

  The proof for $U(t,r) \e^{\im (t-r) B(+\infty)}$ is essentially the same.
  The main difference is that we use the uniform boundedness of $U(t,s)$ on $\Hen[\tau]$.
\end{proof}

We also define
\begin{align}
  E_+^{(\pm)}(t,s) &\defn \pm U(t,\tau)\Pi_+^{(\pm)}(\tau)U(\tau,s),\\
  E_-^{(\pm)}(t,s) &\defn \pm U(t,\tau)\Pi_-^{(\pm)}(\tau)U(\tau,s).
\end{align}
Clearly, the definition above do not depend on $\tau$.

The kernels $E^{(\pm)}_\pm(t,s)$ yield the \emph{positive/negative frequency bisolutions at future} and \emph{past infinity}. They are often called \emph{out} and \emph{in}, or jointly \emph{asymptotic}.
Moreover, we may use them together with the advanced and retarded propagators to define corresponding \emph{asymptotic Feynman} and \emph{anti-Feynman propagators}:
\begin{align*}
  E^\Feyn_\pm  &= E^\wedge + E^{(+)}_\pm = E^\vee   + E^{(-)}_\pm, \\
  E^\aFeyn_\pm &= E^\vee   - E^{(+)}_\pm = E^\wedge - E^{(-)}_\pm.
\end{align*}
As before, the propagators $E^\bullet_\pm$ for $\partial_t + \im B$ induce the corresponding propagators $G^\bullet_\pm$ for $\KG$.
Obviously, the asymptotic non-classical propagators defined here have analogues to Thm.~\ref{thm:E_quantum} and Thm.~\ref{thm:G_quantum}; we only have to replace occurrences of~$\tau$ with~$\pm$.

The asymptotic propagators defined above have various advantages over the instantaneous ones of the previous section.
For one,
they do not depend on an arbitrarily chosen instant of time.
Under rather broad assumptions one can show that they even do not depend on the choice of the time function, but only on the spacetime itself.
Finally, as recently discussed in~\cite{gerard-wrochna:inout}, if the spacetime becomes asymptotically static sufficiently fast,
they satisfy the microlocal spectrum condition of~\cite{radzikowski}.

\begin{acknowledgments}
  We would like to thank Kenji Yajima for useful discussions.
  The work of D.S. was supported by a grant of the Polish National Science Center (NCN) based on the decision no. DEC-2015/16/S/ST1/00473.
  The work of J.D. was supported by the National Science Center under the grant UMO-2014/15/B/ST1/00126.
\end{acknowledgments}

\appendix
\section{Second order differential operators}
\label{appx:laplace}

Consider a manifold $\Sigma$.
Every second-order Hermitian differential operator on $L^2(\Omega^\frac12\Sigma)$ can locally be written as
\begin{equation}
  L = D_i g^{ij}(x) D_j - A^i(x) D_i - D_i A^i(x) + Y_0(x),
\end{equation}
where $g^{ij}=g^{ji}$, $Y_0$ and $A^i$ are real-valued.

$L$ can be often rewritten in the form
\begin{equation}\label{eq:L-appx}
  L = (D_i - A_i) g^{ij} (D_j - A_j) + Y_1.
\end{equation}
This is possible in particular if $g^{ij}$ is everywhere non-degenerate, \viz, $g$ determines a (pseudo\nobreakdash-)\hspace{0pt}Riemannian structure on~$M$.
Then~\eqref{eq:L-appx} holds with
\begin{equation*}
  A_i \defn g_{ij} A^j,
  \quad
  Y_1 \defn Y_0 - A^i g_{ij} A^j,
\end{equation*}
where $g_{ij}$ denotes the inverse of $g^{ij}$.

Let $\gamma$ be an everywhere non-zero function.
Then the operator~$L$ can be rewritten as
\begin{equation}\label{eq:L-pregeom-appx}
  L = \gamma^{-\frac12} (D_i - A_i) \gamma g^{ij} (D_j - A_j) \gamma^{-\frac12} + Y_\gamma,
\end{equation}
where
\begin{align*}
  Y_\gamma &\defn Y - \tfrac12 \bigl(D_i g^{ij} \gamma^{-1} (D_j \gamma)\bigr) - \tfrac14 g^{ij} \gamma^{-2} (D_i \gamma) (D_j \gamma).
\end{align*}
In particular, if we set $\gamma \defn \abs{g}^\frac12$, where $\abs{g} \defn \abs{\det [g_{ij}]}$ is the canonical density induced by the metric, and $Y \defn Y_{|g|^{\frac12}}$, then~\eqref{eq:L-pregeom-appx} yields the geometric form of the operator~$L$:
\begin{equation}\label{eq:L-geom-appx}
  L = \abs{g}^{-\frac14} (D_i-A_i) \abs{g}^\frac12 g^{ij} (D_j-A_j) \abs{g}^{-\frac14} + Y.
\end{equation}
If $g$ is a metric tensor, $A$ a $1$-form, and $Y$ a scalar, then the right-hand side of~\eqref{eq:L-geom-appx} transforms covariantly and $L$ is well-defined as a differential operator acting on half-densities.
We can rewrite~\eqref{eq:L-geom-appx} using the Levi-Civita derivative $\nabla$ for~$g$ as
\begin{equation}\label{eq:L-geom2-appx}
  L = g^{ij} (\im\nabla_{\!i}+A_i) (\im\nabla_{\!j}+A_j) + Y.
\end{equation}
Note that in~\eqref{eq:L-geom2-appx} the right $\nabla$ acts on half-densities and the left $\nabla$ acts on half-densitized covectors.

If the metric is Riemannian, the differential part of the operator~\eqref{eq:L-geom-appx} can be called a \emph{(magnetic) Laplace--Beltrami operator}, and the full operator can be called a \emph{(magnetic) Schr\"odinger operator}.
If the metric is Lorentzian, the differential part of the operator~\eqref{eq:L-geom-appx} can be called an \emph{(electromagnetic) d'Alembertian}, and the full operator can be called an \emph{(electromagnetic) Klein--Gordon operator}.

It is however sometimes convenient to consider a density $\gamma$ independent of the metric tensor $g$, \ie, to work with~\eqref{eq:L-pregeom-appx} instead of~\eqref{eq:L-geom-appx}.
Using the derivative
\begin{equation}
  D^{A,\gamma} \defn \gamma^{\frac12}(D - A)\gamma^{-\frac12},
\end{equation}
$L$ can be written as a quadratic form on half-densities:
\begin{equation}\label{eq:L-form-appx}
  (u\,|\,Lv) = \int_\Sigma \bigl( (\conj{D_i^{A,\gamma} u}) g^{ij} (D_j^{A,\gamma} v) + \conj{u}\, Y_\gamma\, v \bigr).
\end{equation}

\begin{assumption}\label{asm:laplace}
  In the remaining part of this appendix we assume that~$g$ is a Riemannian metric.
  We also assume that $\gamma^{-1} \partial_i\gamma,\ A_i \in L^2_\loc(\Sigma)$, $g^{ij} \in L_\loc^\infty(\Sigma)$ and $Y_\gamma \in L^1_\loc(\Sigma)$ such that $Y_\gamma \geq C$ for some $C \in \RR$.
\end{assumption}

We will see that under the above assumption $L$ can be understood as a self-adjoint operator on~$L^2(\Omega^\frac12\Sigma)$ in at least two natural ways.
First we reinterpret~\eqref{eq:L-form-appx} by introducing the form
\begin{equation}\label{eq:lmax}
  l_{\mx}[u,v] = \int_\Sigma \bigl( (\conj{D_i^{A,\gamma} u}) g^{ij} (D^{A,\gamma}_j v) + \conj{u}\, Y_\gamma\, v \bigr)
\end{equation}
on its maximal form domain
\begin{equation*}
  \dom l_{\mx} = \bigl\{ u \in L^2(\Omega^\frac12\Sigma) \;\big|\; D^{A,\gamma} u \in L^2(\Omega^\frac12 T^*\!\Sigma, g),\; Y_\gamma^\frac12 u \in L^2(\Omega^\frac12\Sigma) \bigr\}.
\end{equation*}
Here we denote by $L^2(\Omega^\frac12 T^*\!\Sigma, g)$ the completion of $C^\infty_\comp(\Omega^\frac12 T^*\!\Sigma)$ with respect to the norm given by
\begin{equation*}
  u \mapsto \left( \int_\Sigma \conj{u}_i \,g^{ij} u_j \right)^\frac12.
\end{equation*}
We remark that $C^\infty_\comp(\Omega^\frac12\Sigma) \subset \dom l_{\mx}$.

The following is a standard proof and has been adapted from Lem.~1 of~\cite{leinfelder-simader}.

\begin{lemma}\label{lem:selfadj-ext}
  The form $l_{\mx}$ is closed and Hermitian.
  It defines a unique self-adjoint operator $L_{\mx}$ on
  \begin{equation*}
    \Dom L_{\mx} = \bigl\{ v \in \dom l_{\mx} \;\big|\; \abs{l_{\mx}[u,v]} \leq C_v \norm{u} \;\text{for all}\; u \in L^2(\Omega^\frac12\Sigma) \bigr\}
  \end{equation*}
  satisfying
  \begin{equation*}
    (u \,|\, L_{\mx} v) = l_{\mx}[u,v]
  \end{equation*}
  for $u \in \dom l_{\mx}$ and $v \in \Dom L_{\mx}$.
  Moreover, $\dom l_{\mx} = \Dom L_{\mx}^\frac12$.
\end{lemma}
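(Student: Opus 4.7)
The plan is to verify the hypotheses of the first representation theorem for closed, semibounded, Hermitian forms (e.g., Thm.~VI.2.1 and Thm.~VI.2.23 of~\cite{kato}), from which both the existence of $L_{\mx}$ with the stated domain and the identity $\dom l_{\mx} = \Dom L_{\mx}^{1/2}$ will follow automatically. Three things then need to be checked: that $l_{\mx}$ is Hermitian, that it is bounded below, and that it is closed. The first is immediate: since $g^{ij} = g^{ji}$ is real and symmetric, $Y_\gamma$ is real, and $D_j^{A,\gamma}$ is formally adjoint to $D_i^{A,\gamma}$ (up to the metric pairing), the integrand is pointwise a Hermitian sesquilinear expression. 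Semiboundedness is also straightforward: since $g$ is Riemannian the first term in~\eqref{eq:lmax} is nonnegative, and $Y_\gamma \geq C$ gives $l_{\mx}[u,u] \geq C\norm{u}^2$; after a shift one can assume $Y_\gamma \geq 1$, so $l_{\mx}$ is equivalent to a genuine scalar product on $\dom l_{\mx}$.

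The real work is showing closedness, i.e. completeness of $\dom l_{\mx}$ with respect to the form norm $\norm{u}_{l} \defn \bigl( l_{\mx}[u,u] + \norm{u}^2 \bigr)^{1/2}$. The plan is to take a Cauchy sequence $(u_n)$ in this norm and produce a limit $u \in \dom l_{\mx}$. By the norm on the form the sequences $(u_n)$, $(D^{A,\gamma}_i u_n)$ and $(Y_\gamma^{1/2} u_n)$ are Cauchy in $L^2(\Omega^\frac12\Sigma)$, $L^2(\Omega^\frac12 T^*\!\Sigma, g)$, and $L^2(\Omega^\frac12\Sigma)$ respectively. By completeness of these $L^2$-spaces they converge to limits $u$, $v_i$, and $w$. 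One then has to identify $v_i = D^{A,\gamma}_i u$ and $w = Y_\gamma^{1/2} u$. For the first, use Assumption~\ref{asm:laplace}: since $\gamma^{-1}\partial_i\gamma$ and $A_i$ lie in $L^2_\loc$, multiplication by them sends $L^2_\loc$ to $L^1_\loc$; hence for any test half-density $\varphi \in C^\infty_\comp(\Omega^\frac12\Sigma)$ the distributional pairing $(\varphi \,|\, D^{A,\gamma}_i u_n)$ converges both to $(\varphi \,|\, v_i)$ and to $(\varphi \,|\, D^{A,\gamma}_i u)$ (the latter via integration by parts, since the coefficients are locally integrable against $u_n \to u$ locally in $L^2$). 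For the second, pass to an a.e.\ convergent subsequence of $u_n$, apply Fatou's lemma to $|Y_\gamma|^{1/2}|u_n - u_m|$, and combine with the Cauchy property to identify $w$ with $Y_\gamma^{1/2} u$ up to a set of measure zero. This gives $u \in \dom l_{\mx}$ and convergence in the form norm, so $l_{\mx}$ is closed.

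Once closedness and semiboundedness are in hand, Kato's first representation theorem yields a unique self-adjoint $L_{\mx}$ with the stated domain and the property $(u\,|\,L_{\mx} v) = l_{\mx}[u,v]$ for $u \in \dom l_{\mx}$, $v \in \Dom L_{\mx}$. The identity $\dom l_{\mx} = \Dom L_{\mx}^{1/2}$ is part of the standard correspondence between densely defined, closed, semibounded Hermitian forms and semibounded self-adjoint operators (see Thm.~VI.2.23 of~\cite{kato}), applied once we note that $C^\infty_\comp(\Omega^\frac12\Sigma) \subset \dom l_{\mx}$ ensures density in $L^2(\Omega^\frac12\Sigma)$.

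The main obstacle is the closedness argument, and more specifically the identification of the limits $v_i$ and $w$ with $D^{A,\gamma}_i u$ and $Y_\gamma^{1/2}u$ under the rather weak regularity Assumption~\ref{asm:laplace}. The local-$L^2$ hypothesis on $\gamma^{-1}\partial_i\gamma$ and $A_i$ and the merely local $L^1$ hypothesis on $Y_\gamma$ are exactly what is needed to justify the distributional integration-by-parts and the Fatou-type extraction of an a.e.\ subsequence, so the proof essentially reduces to an application of the Leinfelder--Simader-style argument referenced in the statement.
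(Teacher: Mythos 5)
Your proposal is correct and follows essentially the same route as the paper: verify Hermiticity and semiboundedness, prove closedness by taking a form-norm Cauchy sequence, extracting the strong $L^2$ limits of $u_n$, $D^{A,\gamma}u_n$ and $Y_\gamma^{1/2}u_n$, identifying them with $D^{A,\gamma}u$ and $Y_\gamma^{1/2}u$ via weak/distributional convergence, and then invoking Kato's first representation theorem (the paper explicitly adapts Lem.~1 of Leinfelder--Simader, as you anticipate). Your treatment of the limit identification is in fact somewhat more detailed than the paper's one-line appeal to uniqueness of weak limits.
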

\begin{proof}
  Suppose that $\{u_n\} \subset \dom l_{\mx}$ is a Cauchy sequence with respect to the norm
  \begin{equation*}
    \dom l_{\mx} \ni u \mapsto \bigl( l_{\mx}[u,u] + (1-C)\norm{u}^2 \bigr)^\frac12.
  \end{equation*}
  Then there exist $u, v \in L^2(\Omega^\frac12\Sigma)$ and $w \in L^2(\Omega^\frac12T^*\!\Sigma, g)$ such that
  \begin{equation*}
    u_n \to u,
    \quad
    Y_\gamma^\frac12 u_n \to v
    \quad
    \text{in}\; L^2(\Omega^\frac12\Sigma)
  \end{equation*}
  and
  \begin{equation*}
    D^{A,\gamma} u_n \to w
    \quad
    \text{in}\; L^2(\Omega^\frac12 T^*\!\Sigma, g).
  \end{equation*}
  Moreover, $Y_\gamma^\frac12 u_n \to Y_\gamma^\frac12 u$ and $D^{A,\gamma}u_n \to D^{A,\gamma}u$ weakly, and thus $v = Y_\gamma^\frac12 u$ and $w = D^{A,\gamma} u$ because $v,w$ must coincide with the weak limits.
  It follows that $l_{\mx}$ is a closed form (and manifestly Hermitian).
  Therefore, by the first representation theorem (Thm.~VI.2.6 of~\cite{kato}), $l_{\mx}$ defines a unique self-adjoint operator with the stated properties.
\end{proof}

An alternative to $l_{\mx}$ is the form $l_{\mn}$ given by the completion of the form~\eqref{eq:lmax} on $C^\infty_\comp(\Omega^\frac12\Sigma)$, and the corresponding operator $L_{\mn}$.
$l_{\mn}$ may have a strictly smaller domain than $l_{\mx}$ because of boundary effects.
If $l_{\mn}=l_{\mx}$, then $C^\infty_\comp(\Omega^\frac12\Sigma)$ is a core of~$l_{\mx}$.
Note that for $\Sigma = \RR^3$ with the Euclidean metric this is known to be true, see \eg~\cite{leinfelder-simader}.

Certainly the setting considered in this appendix is not the most general possible.
For example, the assumption that $Y$ is bounded from below can certainly be relaxed.

\section{Concrete assumptions}
\label{appx:assumptions}

The objective of this appendix is to eludicate how Assumption~\ref{asm:LW_growth_loc} may be realized in practice.
Recall that $(\Sigma, \tilde{g}_\Sigma(t))$ is a family of Riemannian manifolds, $\gamma(t) > 0$ are densities on~$\Sigma$, $A(t)$ are real-valued $1$-forms and $\tilde{Y}(t)$ are real-valued scalar potentials.
For simplicity, we write $\tilde g$ for $\tilde g_\Sigma$.
As in Assumption~\ref{asm:laplace} in Appx.~\ref{appx:laplace}, we assume that $\gamma^{-1}(t)\partial_i\gamma(t),\ A_i(t) \in L^2_\loc(\Sigma)$, $\tilde{g}^{ij} \in L_\loc^\infty(\Sigma)$, and $\tilde{Y} \in L^1_\loc(\Sigma)$ is bounded from below.

Let us recall the definition of the operators $W(t)$ and $L(t)$ on $L^2(\Omega^\frac12\Sigma)$:
\begin{align}
  W(t) &\defn \beta(t)^i D_i + V(t) - \frac12 \gamma(t)^{-1} \bigl(D_t \gamma(t) - \beta(t)^i D_i \gamma(t)\bigr), \nonumber \\
  (u \,|\, L(t)\, v) &\defn \int_\Sigma \Bigl( \bigl(\conj{D^{A,\gamma}_i(t)\, u}\bigr) \tilde{g}^{ij}(t) \bigl(D^{A,\gamma}_j(t)\, v\bigr) + \conj{u}\,\tilde{Y}(t)\, v \Bigr), \label{eq:L1}
\end{align}
where $L(t)$ is interpreted, say, as the maximal operator given by~\eqref{eq:L1}, as in Appx.~\ref{appx:laplace}.
Assumption~\ref{asm:LW_growth_loc} now says that there exists a positive $C \in L^1_\loc(\RR)$ such that for all $\abs{t-s} \leq 1$
\begin{equation}\label{eq:LW-growth-loc-appx}
  \norm[\big]{L(t)^{-\frac12} \bigl( L(t) - L(s) \bigr) L(t)^{-\frac12}} + 2 \norm[\big]{\bigl( W(t) - W(s) \bigr) L(t)^{-\frac12}} \leq \abs*{\int_s^t C(r)\, \dif r},
\end{equation}
for some $C \in L^1_\loc(\RR)$.

We also introduce the family of norms
\begin{equation*}
  \norm{X}_t = \biggl(\int_\Sigma \tilde{g}^{ij}(t)\, \conj{X}_i X_j \biggr)^\frac12
\end{equation*}
for half-densitized $1$-forms $X$ on $\Sigma$.

\begin{proposition}\label{prop:concrete-asm-1}
  Suppose that there are positive $C_Y, C_g, C_W \in L^1_\loc(\RR)$, $C_A, C_\gamma \in L^2_\loc(\RR)$ such that for all $\abs{t-s} \leq 1$
  \begin{align*}
    \norm[\big]{L(t)^{-\frac12} \partial_s \tilde{Y}(s) L(t)^{-\frac12}} &\leq C_Y(s), \\
    \norm[\big]{\partial_s W(s) L(t)^{-\frac12}} &\leq C_W(s), \\
    \norm[\big]{\partial_s A(s) L(t)^{-\frac12}}_t &\leq C_A(s), \\
    \norm[\big]{\partial_s \gamma(s)^{-1} \dif\gamma(s) L(t)^{-\frac12}}_t &\leq C_\gamma(s), \\
    \abs[\big]{\partial_s \tilde{g}^{ij}(s) X_i X_j} &\leq C_g(s) \tilde{g}^{ij}(t) X_i X_j, \mathrlap{\quad X \in C(T^*\!\Sigma).}
  \end{align*}
  Then \eqref{eq:LW-growth-loc-appx} holds and thus Assumption~\ref{asm:LW_growth_loc} is true.
\end{proposition}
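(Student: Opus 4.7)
The approach is to apply the fundamental theorem of calculus in an auxiliary parameter $r$ and to use the quadratic-form representation of $L(r)$.  The bound on $W(t) - W(s)$ is immediate: writing
\[
  \bigl(W(t) - W(s)\bigr) L(t)^{-\frac12} = \int_s^t \bigl(\partial_r W(r)\bigr) L(t)^{-\frac12}\, \dif r
\]
and using the norm bound on $\partial_r W(r) L(t)^{-\frac12}$, one gets the desired estimate on the $W$-piece with $\int_s^t C_W(r)\, \dif r$ on the right.

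For the $L$-piece I would work at the level of the quadratic form on $\Dom L(t)^{\frac12}$.  The key algebraic observation is that $D^{A,\gamma}(r) = D - A(r) + \tfrac{\im}{2}\gamma(r)^{-1}\dif\gamma(r)$, so $\partial_r D^{A,\gamma}(r)$ is the multiplication operator $-\partial_r A(r) + \tfrac{\im}{2}\partial_r(\gamma(r)^{-1}\dif\gamma(r))$.  Differentiating $(v\,|\,L(r)\,v)$ in~$r$ therefore produces exactly three contributions, from $\partial_r \tilde g^{ij}$, a symmetric cross-term in $\partial_r D^{A,\gamma}$, and $\partial_r \tilde Y$.  The metric and potential contributions are estimated respectively by $C_g(r) \norm{D^{A,\gamma}(r) v}_t^2$ and $C_Y(r) \norm{L(t)^{\frac12} v}^2$ using the fifth and first hypotheses; the cross term is handled by a Peter--Paul Cauchy--Schwarz yielding summands of the form $\varepsilon \norm{D^{A,\gamma}(r) v}_r^2 + \varepsilon^{-1}\bigl(C_A(r)^2 + C_\gamma(r)^2\bigr) \norm{L(t)^{\frac12} v}^2$.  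The $L^2_\loc$ assumption on $C_A, C_\gamma$ enters exactly here: their squares need to be in $L^1_\loc$ for the $r$-integration to produce an $L^1_\loc$ majorant.

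The main obstacle is to close the estimate self-consistently by replacing both $\norm{D^{A,\gamma}(r) v}_r^2$ and $\norm{D^{A,\gamma}(r) v}_t^2$ by a fixed multiple of $\norm{L(t)^{\frac12} v}^2$ uniformly for $\abs{r-t} \leq 1$.  I would do this by a Grönwall-type bootstrap: the metric hypothesis integrates to an equivalence of $\norm{\cdot}_r$ and $\norm{\cdot}_t$ with constant $\exp(\int \abs{C_g})$, while the $\partial_r A, \partial_r(\gamma^{-1}\dif\gamma)$ bounds let one pass between $D^{A,\gamma}(r)$ and $D^{A,\gamma}(t)$ up to multipliers already controlled by $L(t)^{\frac12}$.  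Choosing $\varepsilon$ small enough to absorb the $\norm{D^{A,\gamma}(r) v}_r^2$ term and taking the supremum over $v$ yields the desired operator-norm bound with $C(r)$ a fixed linear combination of $C_Y(r), C_g(r), C_W(r), C_A(r)^2, C_\gamma(r)^2 \in L^1_\loc(\RR)$, giving Assumption~\ref{asm:LW_growth_loc}.
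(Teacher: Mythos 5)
Your proposal is correct and rests on the same two ingredients as the paper's proof --- the fundamental theorem of calculus in the time parameter and a term-by-term Cauchy--Schwarz estimate of the quadratic form of $L$ --- but it organizes the estimate differently. You differentiate $(v\,|\,L(r)\,v)$ in $r$, bound the derivative pointwise in $r$, and integrate; this forces you to control $\norm{D^{A,\gamma}(r)v}$ at intermediate times $r$ relative to $L(t)$, hence your Gr\"onwall/bootstrap closure. The paper instead integrates the hypotheses first, obtaining difference bounds for $\tilde Y(t)-\tilde Y(s)$, $W(t)-W(s)$, $A(t)-A(s)$, $\gamma(t)^{-1}\dif\gamma(t)-\gamma(s)^{-1}\dif\gamma(s)$ and $\tilde g^{ij}(t)-\tilde g^{ij}(s)$, all measured against $L(t)$ and $\norm{\,\cdot\,}_t$, and then expands $\bigl(u\,\big|\,\bigl(L(t)-L(s)\bigr)u\bigr)$ by writing $D(s)=D(t)-\bigl(D(t)-D(s)\bigr)$ and $\tilde g^{ij}(s)=\tilde g^{ij}(t)-\bigl(\tilde g^{ij}(t)-\tilde g^{ij}(s)\bigr)$; every factor is then referenced to the single time $t$, so no self-consistency issue arises and no Gr\"onwall step is needed. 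The price the paper pays is that squares such as $\abs{\int_s^t C_D(r)\,\dif r}^2$ (with $C_D=C_A+C_\gamma/2$) appear in the resulting constant, and these are converted back to the required form $\abs{\int_s^t C_D(r)^2\,\dif r}$ by Cauchy--Schwarz together with $\abs{t-s}\leq 1$ --- this is exactly where the $L^2_\loc$ hypothesis on $C_A,C_\gamma$ enters for the paper, playing the role your Peter--Paul step assigns to it. Both routes are valid; yours is slightly more dynamical, the paper's slightly more algebraic and shorter because it avoids the bootstrap.
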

\begin{proof}
  To avoid notational clutter within this proof, we simply write $D_i$ for $D^{A,\gamma}_i$.
  Clearly, the assumptions of the proposition imply
  \begin{subequations}\label{eq:diff-ineqs}\begin{align}
    \norm[\big]{L(t)^{-\frac12} \bigl( \tilde{Y}(t) - \tilde{Y}(s) \bigr) L(t)^{-\frac12}} &\leq \abs*{\int_s^t C_Y(r)\, \dif r}, \\
    \norm[\big]{\bigl( W(t) - W(s) \bigr) L(t)^{-\frac12}} &\leq \abs*{\int_s^t C_W(r)\, \dif r}, \\
    \norm[\big]{\bigl( A(t) - A(s) \bigr) L(t)^{-\frac12}}_t &\leq \abs*{\int_s^t C_A(r)\, \dif r}, \\
    \norm[\big]{\bigl( \gamma(t)^{-1} \dif\gamma(t) - \gamma(s)^{-1} \dif\gamma(s) \bigr) L(t)^{-\frac12}}_t &\leq \abs*{\int_s^t C_\gamma(r)\, \dif r}, \\
    \abs[\big]{\tilde{g}^{ij}(t) X_i X_j - \tilde{g}^{ij}(s) X_i X_j} &\leq \abs*{\int_s^t C_g(r)\, \dif r}\, \tilde{g}^{ij}(t) X_i X_j.
  \end{align}\end{subequations}

  We compute
  \begin{align*}\MoveEqLeft
    \bigl(u \,\big|\, \bigl( L(t) - L(s) \bigr) u\bigr) \\
    &= \int_\Sigma \tilde{g}^{ij}(t) \Bigl( \bigl( \conj{D_i(t) u} \bigr) \bigl( D_j(t) u - D_j(s) u \bigr) + \bigl( \conj{D_i(t) u - D_i(s) u} \bigr) \bigl( D_j(t) u \bigr) \\&\qquad - \bigl( \conj{D_i(t) u - D_i(s) u} \bigr) \bigl( D_j(t) u - D_j(s) u \bigr) \Bigr) \\&\quad
    + \int_\Sigma \bigl( \tilde{g}^{ij}(t) - \tilde{g}^{ij}(s) \bigr) \Bigl( \bigl( \conj{D_i(t) u} \bigr) \bigl( D_j(t) u \bigr) - \bigl( \conj{D_i(t) u} \bigr) \bigl( D_j(t) u - D_j(s) u \bigr) \\&\qquad - \bigl( \conj{D_i(t) u - D_i(s) u} \bigr) \bigl( D_j(t) u \bigr) + \bigl( \conj{D_i(t) u - D_i(s) u} \bigr) \bigl( D_j(t) u - D_j(s) u \bigr) \Bigr) \\&\quad
    + \int_\Sigma \bigl( \tilde{Y}(t) - \tilde{Y}(s) \bigr) \abs{u}^2,
  \end{align*}
  where
  \begin{equation*}
    D_i(t) - D_i(s) = -A_i(t) + A_i(s) + \frac\im2 \gamma(t)^{-1} \partial_i \gamma(t) - \frac\im2 \gamma(s)^{-1} \partial_i \gamma(s).
  \end{equation*}
  Estimating each term separately using~\eqref{eq:diff-ineqs}, we find
  \begin{equation*}
    \abs*{\bigl(u \,\big|\, \bigl( L(t) - L(s) \bigr) u\bigr)} \leq \tilde{C}(t,s) (u \,|\, L(t) u),
  \end{equation*}
  where
  \begin{equation*}\begin{split}
    \tilde{C}(t,s) &= 2 \abs*{\int_s^t C_D(r)\, \dif r} + \abs*{\int_s^t C_D(r)\, \dif r}^2 \\&\quad + \abs*{\int_s^t C_g(r)\, \dif r} \left( 1 + \abs*{\int_s^t C_D(r)\, \dif r} \right)^2 + \abs*{\int_s^t C_Y(r)\, \dif r}
  \end{split}\end{equation*}
  with $C_D = C_A + C_\gamma / 2$.
  After two applications of
  \begin{equation*}
    \abs*{\int_s^t C_D(r)\, \dif r}^2 \leq \abs{t-s} \abs*{\int_s^t C_D(r)^2\, \dif r} \leq \abs*{\int_s^t C_D(r)^2\, \dif r},
  \end{equation*}
  which is a simple consequence of the Cauchy--Schwarz inequality, we obtain
  \begin{equation*}
    \tilde{C}(t,s) \leq \abs*{\int_s^t \bigl( c(t) (2C_D + C_D^2) + C_\gamma + C_g \bigr)\, \dif r},
  \end{equation*}
  where $c(t): = 1 + \int_{t-1}^{t+1} C_g(r)\, \dif r$.
  Thus Assumption~\ref{asm:LW_growth_loc} is true with $C(t)=\tilde{C}(t)+C_W(t).$
\end{proof}

The inequalities~\eqref{eq:diff-ineqs} in the last proposition were stated with respect to~$L(t)$.
For a more convenient criterion, fix a (time-independent) Riemannian metric~$g_0$ on~$\Sigma$ and set $\gamma_0 \defn \abs{g_0}^\frac12$.
Consider the operator $L_0$ defined by the form
\begin{equation*}
  (u \,|\, L_0 v) \defn \int_\Sigma \bigl( (\conj{D^{\gamma_0}_i u}) g_0^{ij}(t) (D^{\gamma_0}_j v) + \conj{u}\, v \bigr).
\end{equation*}

\begin{proposition}
  Assume that there exists a positive $C_g \in C(\RR)$ such that
  \begin{equation}\label{eq:gammaY-below0}
    \tilde{g}^{ij}(t) X_i X_j \geq C_g(t) g_0^{ij} X_i X_j.
  \end{equation}
  Further, suppose that there exist $\varepsilon_0 \in C(\RR)$, $\varepsilon_0(t) \in \mathopen{]}0,1\mathclose{[}$, and a positive $C_0 \in C(\RR)$
  such that
  \begin{equation}\label{eq:gammaY-below}
    \varepsilon_0(t) \gamma_0^2 \gamma(t)^{-2} \bigl(\partial_i \gamma_0^{-1} \gamma(t)\bigr) \tilde{g}^{ij}(t) \bigl(\partial_j \gamma_0^{-1} \gamma(t) \bigr) + \tilde{Y}(t) \geq C_0(t)
  \end{equation}
  Then there exists a positive $C \in C(\RR)$ such that $L_0$ satisfies the inequality
  \begin{equation}\label{eq:tildeL-ineq}
    \norm[\big]{L(t)^\frac12 u} \geq C(t) \norm[\big]{L_0^\frac12 \abs{u}},
    \quad
    u \in \Dom L(t)^\frac12.
  \end{equation}
\end{proposition}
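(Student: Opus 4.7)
My plan is a chain of three standard reductions followed by a more delicate absorption argument that uses the hypothesis on $\varepsilon_0$. First, I would apply Kato's diamagnetic inequality at the form level of $l_{\mx}$ to obtain, for $f := \gamma(t)^{-1/2} u$,
\begin{equation*}
  \norm{L(t)^{1/2} u}^2 \geq \int_\Sigma \gamma(t)\,\tilde g^{ij}(t)(\partial_i\abs{f})(\partial_j\abs{f}) + \int_\Sigma \tilde Y(t)\abs{u}^2,
\end{equation*}
which eliminates the magnetic potential $A$ and replaces $u$ by the non-negative $\abs{u}$. I would then introduce the conformal factor $\sigma := \gamma_0^{-1}\gamma(t)$ and set $f_0 := \gamma_0^{-1/2}\abs{u}$, so $\abs{f} = \sigma^{-1/2}f_0$ and $\abs{u}^2 = \gamma_0 f_0^2$; a direct product-rule computation gives
\begin{equation*}
  \gamma(t)\,\abs{\partial\abs{f}}^2_{\tilde g} = \gamma_0\,\bigl|\partial f_0 - \tfrac12 f_0\,\partial\ln\sigma\bigr|^2_{\tilde g}.
\end{equation*}
With the target norm $\norm{L_0^{1/2}\abs{u}}^2 = \int_\Sigma\gamma_0(\abs{\partial f_0}^2_{g_0} + f_0^2)$ now in the same variable, the metric comparison $\tilde g^{ij}\geq C_g\, g_0^{ij}$ on covectors will transfer any lower bound from $\tilde g$ to $g_0$ at the cost of $C_g(t)$.

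For the absorption step I would expand the square and integrate the cross term $-\int\gamma_0 f_0\langle\partial f_0,\partial\ln\sigma\rangle_{\tilde g}$ by parts, obtaining $\tfrac12\int f_0^2\partial_i(\gamma_0\tilde g^{ij}\partial_j\ln\sigma)$. Setting $\eta := \tfrac12\ln\sigma$ and $r := 1 - 4\varepsilon_0$, the algebraic identity
\begin{equation*}
  \gamma_0^{-1}\partial_i(\gamma_0\tilde g^{ij}\partial_j\eta) + r(\partial\eta)^2_{\tilde g} = r^{-1}\sigma^{-r/2}\,\gamma_0^{-1}\partial_i\bigl(\gamma_0\tilde g^{ij}\partial_j\sigma^{r/2}\bigr),
\end{equation*}
obtained by differentiating $\sigma^{r/2} = e^{r\eta}$ and regrouping, combined with the hypothesis $4\varepsilon_0(\partial\eta)^2_{\tilde g} + \tilde Y \geq C_0$, should recognise the divergence-plus-potential combination as a conformal-Laplacian contribution that is controllable against $\gamma_0 f_0^2$ and yields, after reversing the integration by parts and a Young inequality, a coercive bound
\begin{equation*}
  \norm{L(t)^{1/2} u}^2 \geq c_1(t)\int_\Sigma\gamma_0\abs{\partial f_0}^2_{\tilde g} + c_2(t)\int_\Sigma\gamma_0 f_0^2
\end{equation*}
for positive $c_1(t), c_2(t)$. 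For $\varepsilon_0(t) \in [\tfrac14, 1)$ (where $r \leq 0$), I would split $\tilde Y = \theta\tilde Y + (1-\theta)\tilde Y$ with $\theta$ small enough that $\theta\varepsilon_0 < \tfrac14$, apply the above to the $\theta$-part, and bound $(1-\theta)\tilde Y$ from below using the essential infimum of $\tilde Y$ from Assumption~\ref{asm:laplace}.

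The hard part will be this absorption argument. A direct Young inequality on the cross term alone cannot close the estimate, since the two conditions required of the Young parameter ($\beta < 2$ for a positive residual gradient, and $\beta \geq 2/(1 - 4\varepsilon_0)$ for a non-negative residual $(\partial\ln\sigma)^2_{\tilde g}$ coefficient after invoking the hypothesis) are incompatible for any $\varepsilon_0 > 0$. The integration-by-parts identity together with the splitting trick is what makes the condition $\varepsilon_0 < 1$ (rather than $\varepsilon_0 < 0$) sufficient to absorb the cross term. A further technical point is that these integrations by parts require approximating $u \in \Dom L(t)^{1/2}$ by compactly supported half-densities to eliminate boundary contributions, which is legitimate since $\Dom L(t)^{1/2} = \dom l_{\mx}$ by Lem.~\ref{lem:selfadj-ext}.
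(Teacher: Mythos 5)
Your first two reductions (the diamagnetic inequality and the passage to $f_0=\gamma_0^{-1/2}\abs{u}$, $\sigma=\gamma_0^{-1}\gamma(t)$) are exactly how the paper's proof begins. The paper then finishes in one line by the weighted Cauchy--Schwarz step that you reject: it bounds $\abs[\big]{\partial f_0-\tfrac12 f_0\partial\ln\sigma}^2_{\tilde g}$ from below by $(1-\varepsilon)\abs{\partial f_0}^2_{\tilde g}+(1-\varepsilon^{-1})\tfrac14 f_0^2\abs{\partial\ln\sigma}^2_{\tilde g}$ with $\varepsilon=(1-4\varepsilon_0)^{-1}$, so that the second summand together with $\tilde{Y}f_0^2$ is handled by~\eqref{eq:gammaY-below}. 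Your arithmetic objection applies verbatim to this choice of parameter: for $\varepsilon_0\in\mathopen{]}0,\tfrac14\mathclose{[}$ one has $\varepsilon>1$, so the surviving gradient coefficient $1-\varepsilon$ --- and with it the constant $C_g(t)\bigl(1-\varepsilon(t)\bigr)$ in the paper's last line --- is negative, while for $\varepsilon_0\geq\tfrac14$ the pointwise inequality itself is false. So you have correctly put your finger on a genuine problem with the direct argument, which is in fact the paper's own argument.

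Unfortunately your proposed repair does not close the gap. The integrated-by-parts cross term $\tfrac12\int f_0^2\,\partial_i(\gamma_0\tilde g^{ij}\partial_j\ln\sigma)$, equivalently your expression $r^{-1}\sigma^{-r/2}\gamma_0^{-1}\partial_i\bigl(\gamma_0\tilde g^{ij}\partial_j\sigma^{r/2}\bigr)$, contains second derivatives of $\sigma$, and nothing in~\eqref{eq:gammaY-below0}, \eqref{eq:gammaY-below} or the standing assumptions of the appendix (which place only $\gamma^{-1}\partial_i\gamma$ in $L^2_\loc$) controls them; the assertion that this contribution is ``controllable against $\gamma_0 f_0^2$'' is precisely the missing estimate, and reversing the integration by parts merely returns you to your starting expression. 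The obstruction is real: on $\Sigma=\RR$ with $\tilde g=g_0=1$, $\gamma_0=1$, $A=0$, $\gamma(t)=\e^{-Mx^2}$ and $\tilde{Y}=\max(-M^2,\,C_0-4\varepsilon_0M^2x^2)$, hypothesis~\eqref{eq:gammaY-below} holds and $\tilde{Y}$ is bounded below, yet $(u\,|\,L(t)u)=\int\bigl(\abs{u'}^2+(M^2x^2-M+\tilde{Y})\abs{u}^2\bigr)$ has infimum of order $C_0-\bigl(1-\sqrt{1-4\varepsilon_0}\bigr)M$ over normalized $u$, which is very negative for large $M$, so no positive multiple of $(\abs{u}\,|\,L_0\abs{u})\geq\norm{u}^2$ can be a lower bound. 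The failure you encountered is therefore not a defect of your method alone: with $\tilde{Y}$ only bounded below, \eqref{eq:tildeL-ineq} does not follow from the stated hypotheses, and some additional control (for instance on $\partial_i(\gamma_0\tilde g^{ij}\partial_j\ln\sigma)$, or a sign change turning the $\varepsilon_0$-term in~\eqref{eq:gammaY-below} into a penalty on $\tilde{Y}$ rather than a credit) appears to be needed.
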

\begin{proof}
  Let $\varepsilon(t) \defn (1-4\varepsilon_0(t))^{-1}$, so that $\varepsilon_0(t) = \frac14(1-\varepsilon(t)^{-1})$.
  Then
  \begin{align*}
    (u \,|\, L(t) u)
    &\geq \int_\Sigma \Bigl( -\bigl(D^\gamma_i(t) \abs{u}\bigr) \tilde{g}^{ij}(t) \bigl(D^\gamma_j(t) \abs{u}\bigr) + \tilde{Y}(t)\, \abs{u}^2 \Bigr) \\
    &\geq \int_\Sigma \bigl(\varepsilon(t)-1\bigr) (D^{\gamma_0}_i \abs{u}) \tilde{g}^{ij}(t) (D^{\gamma_0}_j \abs{u}) \\&\qquad + \int_\Sigma \bigl( \varepsilon_0(t) \gamma_0^2 \gamma(t)^{-2} \bigl(\partial_i \gamma_0^{-1} \gamma(t)\bigr) \tilde{g}^{ij}(t) \bigl(\partial_j \gamma_0^{-1} \gamma(t) \bigr) + \tilde{Y}(t) \bigl) \abs{u}^2 \\
    &\geq \min\bigl(C_g(t) (1-\varepsilon(t)), C_0(t)\bigr)\, \bigl(\abs{u} \,\big|\, L_0 \abs{u}\bigr).
  \end{align*}
  In the first step we used the diamagnetic inequality
  \begin{equation*}
    \abs[\big]{\bigl( \partial_x - \im V(x) \bigr) f(x)} \geq \abs[\big]{\partial_x \abs{f(x)}}
  \end{equation*}
  almost everywhere for real $V$ and $f$ such that $(\partial_x - \im V) f$ exists almost everywhere.
  In the second step we used the Cauchy--Schwarz inequality.
\end{proof}

We can apply the preceding proposition to restate Prop.~\ref{prop:concrete-asm-1} using $L_0$ instead of $L(t)$.
For this purpose we introduce another norm on half-densitized $1$-forms:
\begin{equation*}
  \norm{X} = \biggl(\int_\Sigma g_0^{ij}\, \conj{X}_i X_j \biggr)^\frac12.
\end{equation*}

\begin{proposition}
  In addition to~\eqref{eq:gammaY-below0} and~\eqref{eq:gammaY-below} we suppose that for some $C_g \in C(\RR)$
  \begin{equation*}
    \tilde{g}^{ij}(t) X_i X_j \leq C_g(t) g_0^{ij} X_i X_j, \mathrlap{\quad X \in C(T^*\!\Sigma).}
  \end{equation*}
  Moreover, we assume that there are positive $ C_{Y,0}, C_{g,0}, C_{W,0} \in L^1_\loc(\RR)$, $C_{A,0}, C_{\gamma,0} \in L^2_\loc(\RR)$ such that for all $t\in\RR$
  \begin{align*}
    \norm[\big]{L_0^{-\frac12} \abs{\partial_t \tilde{Y}(t)} L_0^{-\frac12}} &\leq C_{Y,0}(t), \\
    \norm[\big]{\partial_t W(t)L_0^{-\frac12}} &\leq C_{W,0}(t), \\
    \norm[\big]{\partial_t A(t) L_0^{-\frac12}} &\leq C_{A,0}(t), \\
    \norm[\big]{\partial_t \gamma(t)^{-1} \dif\gamma(t) L_0^{-\frac12}} &\leq C_{\gamma,0}(t), \\
    \abs[\big]{\partial_t \tilde{g}^{ij}(t) X_i X_j} &\leq C_{g,0}(t) g_0^{ij} X_i X_j, \mathrlap{\quad X \in C(T^*\!\Sigma).}
  \end{align*}
  Then Assumption~\ref{asm:LW_growth_loc} is true.
\end{proposition}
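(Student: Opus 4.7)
The plan is to reduce to Prop.~\ref{prop:concrete-asm-1} by verifying each of its hypotheses, transferring bounds against~$L_0$ to bounds against~$L(t)$ via the comparison inequality~\eqref{eq:tildeL-ineq} from the preceding proposition, whose assumptions~\eqref{eq:gammaY-below0}--\eqref{eq:gammaY-below} are already in force here. Since Assumption~\ref{asm:LW_growth_loc} only concerns the regime $\abs{t-s} \leq 1$, any continuous $t$-dependent factor that I pick up along the way can be absorbed into an $s$-dependent bound by taking a supremum over a unit interval, so the $L^1_\loc$ (resp.\ $L^2_\loc$) character of the control functions is preserved.

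First I would dispose of the metric term: combining $\abs{\partial_s \tilde{g}^{ij}(s) X_i X_j} \leq C_{g,0}(s)\, g_0^{ij} X_i X_j$ with the lower bound~\eqref{eq:gammaY-below0}, rewritten as $g_0^{ij} X_i X_j \leq C_g(t)^{-1}\, \tilde{g}^{ij}(t) X_i X_j$, directly yields the metric hypothesis of Prop.~\ref{prop:concrete-asm-1}. Next, for the scalar potential, $\partial_s \tilde{Y}(s)$ is a multiplication operator, so $\abs[\big]{(u \,|\, \partial_s \tilde{Y}(s)\, u)} \leq (\abs{u} \,|\, \abs{\partial_s \tilde{Y}(s)}\,\abs{u})$; combining the hypothesis $\norm[\big]{L_0^{-\frac12} \abs{\partial_s \tilde{Y}(s)} L_0^{-\frac12}} \leq C_{Y,0}(s)$ with~\eqref{eq:tildeL-ineq} then produces the required bound on $\norm[\big]{L(t)^{-\frac12} \partial_s \tilde{Y}(s) L(t)^{-\frac12}}$. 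For $\partial_s A(s)$ and $\partial_s \gamma(s)^{-1} \dif\gamma(s)$, which are also multiplications (by $1$-forms), the same diamagnetic trick applies via the pointwise identity $\abs{(\partial_s A(s))\, u}_{g_0} = \abs{(\partial_s A(s))\,\abs{u}}_{g_0}$, together with the upper bound $\tilde{g} \leq C_g g_0$ to convert $\norm{\cdot}_t$ to $\norm{\cdot}$.

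The hard part will be the $\partial_s W(s)$ bound. Unlike the preceding terms, $\partial_s W$ is a genuine first-order differential operator (through the $(\partial_s \beta^i(s))\, D_i$ contribution), so the reduction to $\abs{u}$ is unavailable: the estimate $\norm[\big]{L(t)^{\frac12} u} \geq C(t)\, \norm[\big]{L_0^{\frac12}\abs{u}}$ simply fails to dominate $\norm[\big]{L_0^{\frac12} u}$ for complex-valued $u$. My plan is to exploit instead that $(u \,|\, L(t)\, u)$ dominates $\norm[\big]{D^{A,\gamma(t)}(t)\, u}_t^2$ modulo a constant multiple of $\norm{u}^2$ (using Assumption~\ref{asm:L-selfadj}), which in turn controls $\norm{D u}^2$ in the $g_0$-norm up to multiplications by $A(t)$ and $\gamma(t)^{-1}\dif\gamma(t)$ that are themselves $L(t)^{\frac12}$-bounded by the multiplication-operator argument of the previous step. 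Decomposing $\partial_s W(s)$ into its multiplication part and its $(\partial_s \beta^i(s))\, D_i$ part and estimating each piece separately using $\norm[\big]{\partial_s W(s)\, L_0^{-\frac12}} \leq C_{W,0}(s)$ then yields the required $\norm[\big]{\partial_s W(s)\, L(t)^{-\frac12}} \leq C_W(s)$. With all the hypotheses of Prop.~\ref{prop:concrete-asm-1} verified, applying that proposition delivers Assumption~\ref{asm:LW_growth_loc}.
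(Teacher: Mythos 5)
The paper states this proposition without proof, so there is no official argument to compare against; judged on its own terms, your reduction to Prop.~\ref{prop:concrete-asm-1} via the comparison~\eqref{eq:tildeL-ineq} is the natural route, and your treatment of the $\tilde{Y}$, $A$, $\gamma^{-1}\dif\gamma$ and $\tilde{g}$ terms is correct: all of these act by multiplication, so the quantities to be estimated depend on $u$ only through $\abs{u}$ (and its pointwise products with the coefficients), which is exactly the situation in which $\norm{L(t)^{\frac12}u} \geq C(t)\norm{L_0^{\frac12}\abs{u}}$ applies; the continuous $t$-dependent factors $C(t)^{-1}$, $C_g(t)$ are then legitimately absorbed by a supremum over $\abs{t-s}\leq 1$.

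The $W$-term, however, does not close as you have written it, and you have correctly located the difficulty without resolving it. Two problems. First, the hypothesis $\norm{\partial_s W(s) L_0^{-\frac12}} \leq C_{W,0}(s)$ is a bound on the \emph{sum} of the multiplication part and the first-order part $(\partial_s\beta^i)D_i$ of $\partial_s W(s)$; it gives you no separate control of either piece, so "estimating each piece separately using $\norm{\partial_s W(s)L_0^{-\frac12}}\leq C_{W,0}(s)$" is not a valid step. Second, and more fundamentally, your route for the first-order piece requires that multiplication by $A(t)$ and by $\gamma(t)^{-1}\dif\gamma(t)$ be relatively bounded by $L(t)^{\frac12}$ (or $L_0^{\frac12}$). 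That is not "the multiplication-operator argument of the previous step": the hypotheses of the proposition control only the \emph{time derivatives} $\partial_t A(t)$ and $\partial_t\gamma(t)^{-1}\dif\gamma(t)$ relative to $L_0^{\frac12}$, not $A(t)$ and $\gamma(t)^{-1}\dif\gamma(t)$ themselves, and the standing regularity $A_i(t),\ \gamma(t)^{-1}\partial_i\gamma(t)\in L^2_\loc(\Sigma)$ is far too weak to supply such relative bounds. Without them one cannot pass from $\norm{D^{A,\gamma}(t)u}_t$ to $\norm{Du}$, hence not from $\norm{L_0^{\frac12}u}$ to $\norm{L(t)^{\frac12}u}$ for complex $u$ (the diamagnetic inequality loses exactly the phase gradient that the first-order part of $\partial_sW$ sees). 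To make the $W$-estimate work one needs an additional input — e.g.\ the form bound $L_0 \leq c(t) L(t)$, or a hypothesis on $\partial_t W(t) L(t)^{-\frac12}$ directly as in Prop.~\ref{prop:concrete-asm-1} — and your proposal should either supply such an input or flag that the stated hypotheses appear insufficient for this term.
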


\section{Non-autonomous evolution equations}
\label{appx:evolution}

To make this paper more self-contained, we explain in this appendix relevant aspects of the theory of linear evolution equations.
We are more general than strictly necessary for the purposes of this paper, but in anticipation of our upcoming work this generality could be useful.
The results stated in this appendix can be found in similar form in~\cite{kato:hyperbolic} and in the monographs~\cite{pazy,tanabe}.
We also wish to refer to the appendix of the recent work~\cite{bach-bru} by Bach and Bru, which uses slightly different assumptions that essentially coincide with ours for the Hilbertian case.
Finally, we would like to mention~\cite{schmid-griesemer} which also discusses the theory of non-autonomous evolution equation on uniformly convex Banach spaces.

Let $\mathcal{X}$ be a Banach space.
We recall that a linear operator $A$ on $\mathcal{X}$ is the generator of a strongly continuous (one-parameter) semigroup $[0,\infty[ \ni t \mapsto \e^{tA}$ if and only if $A$ is densely defined, closed and there exist constants $M \geq 1$, $\beta \in \RR$ such that its resolvent satisfies
\begin{equation}\label{eq:group}
  \norm{(A - \lambda)^{-n}} \leq M (\lambda - \beta)^{-n},
  \quad \lambda > \beta,
  \quad n = 1,2,\dotsc.
\end{equation}
Then we have $\norm{\e^{tA}} \leq M \e^{\beta t}$ and say that $\e^{tA}$ is a semigroup of type $(M,\beta)$.
If both $A$ and $-A$ generate strongly continuous semigroups, they generate a strongly continuous (one-parameter) group $\RR \ni t \mapsto \e^{tA}$.

If \begin{equation}\label{eq:group1}
  \norm{(A - \lambda)^{-1}} \leq (\lambda - \beta)^{-1},
  \quad \lambda > \beta,
\end{equation}
then~\eqref{eq:group} is true with $M=1$.
Then $\norm{\e^{tA}} \leq M \e^{\beta t}$, so that $\e^{t(A-\beta)}$ is a semigroup of contractions.

Let $\mathcal{Y}$ be another Banach space, which is densely and continuously embedded in $\mathcal{X}$.
\begin{definition}
  By the \emph{part of $A$ on} $\mathcal{Y}$ we mean the operator $\tilde{A}$, which is the restriction of $A$ to the domain
  \begin{equation*}
    \Dom(\tilde{A}) \defn \{y\in\Dom(A)\cap\mathcal{Y}\mid Ay\in\mathcal{Y}\}.
  \end{equation*}
\end{definition}

\begin{definition}
  $\mathcal{Y}$ is called \emph{$A$-admissible} if the semigroup $\e^{tA}$, $t \in [0,\infty[$, leaves $\mathcal{Y}$ invariant and its restriction to $\mathcal{Y}$ is a strongly continuous semigroup on $\mathcal{Y}$.
\end{definition}

In the following we consider a family $\{A(t)\}_{t \in [0,T]}$ of generators of a strongly continuous semigroup.
We chose the interval $[0,T]$ for convenience and definiteness; the generalization to other intervals is straightforward.

\begin{definition}
  The family $\{A(t)\}_{t \in [0,T]}$ is called \emph{stable} with stability constants $M \geq 1$, $\beta \in \RR$, if
  \begin{equation*}
    \norm[\bigg]{\prod_{j=1}^k \bigl( A(t_j) - \lambda \bigr)^{-1}} \leq M (\lambda - \beta)^{-k},
    \quad \lambda > \beta,
  \end{equation*}
  for all finite sequences $0 \leq t_1 \leq t_2 \leq \dotsb \leq t_k \leq T$, $k=1,2,\dotsc$.
  Here and below such products are time-ordered (\viz, factors with a larger $t_j$ are to the left of factors with a smaller~$t_j$).
\end{definition}

\begin{proposition}\label{prop:stability-e}
  If $\{A(t)\}_{t \in [0,T]}$ is stable with stability constants $M, \beta$, then
  \begin{equation*}
    \norm[\bigg]{\prod_{j=1}^k \e^{\mu_j A(t_j)}} \leq M \e^{\beta (\mu_1 + \dotsb + \mu_k)},
    \quad \mu_j \geq 0.
  \end{equation*}
\end{proposition}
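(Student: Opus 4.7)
The plan is to reduce the claim to the stability hypothesis via the exponential (Hille--Yosida) formula
\[
\e^{\mu A} = \slim_{\nu \to \infty} \bigl( \nu R_\nu(A) \bigr)^n, \qquad R_\nu(A) = (\nu - A)^{-1},
\]
where $n = n(\nu)$ is any sequence of integers with $n/\nu \to \mu$. First I would fix a common large integer parameter $\nu > \beta$ and choose integers $n_j = n_j(\nu)$, for instance $n_j(\nu) = \lfloor \nu \mu_j \rfloor$, so that $n_j(\nu)/\nu \to \mu_j$ as $\nu \to \infty$. Applying the exponential formula to each factor gives
\[
\prod_{j=1}^k \e^{\mu_j A(t_j)} = \slim_{\nu \to \infty} \prod_{j=1}^k \bigl( \nu R_\nu(A(t_j)) \bigr)^{n_j(\nu)},
\]
where strong convergence of the time-ordered product follows from strong convergence of each factor together with the uniform boundedness of each factor (which is a consequence of the very bound we are about to prove, applied in the trivial case of one $t_j$).

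Next I would invoke stability on the $N(\nu) \defn n_1(\nu) + \dotsb + n_k(\nu)$ factor resolvent product. The sequence $t_1, t_1, \dotsc, t_1, t_2, \dotsc, t_k$ (with $t_j$ repeated $n_j(\nu)$ times) is still time-ordered, so the stability hypothesis with common $\lambda = \nu$ yields
\[
\norm[\bigg]{\prod_{j=1}^k (\nu - A(t_j))^{-n_j(\nu)}} \leq M (\nu - \beta)^{-N(\nu)},
\]
whence, multiplying by $\nu^{N(\nu)}$,
\[
\norm[\bigg]{\prod_{j=1}^k \bigl( \nu R_\nu(A(t_j)) \bigr)^{n_j(\nu)}} \leq M \left( \frac{\nu}{\nu - \beta} \right)^{N(\nu)} = M \left( 1 + \frac{\beta}{\nu - \beta} \right)^{N(\nu)}.
\]

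Finally I would pass to the limit $\nu \to \infty$. By the lower semicontinuity of the operator norm under strong convergence, the left-hand side is bounded below in the limit by $\norm{\prod_j \e^{\mu_j A(t_j)}}$. For the right-hand side, since $N(\nu)/\nu \to \mu_1 + \dotsb + \mu_k$, the elementary limit $(1 + \beta/(\nu-\beta))^{N(\nu)} \to \e^{\beta(\mu_1+\dotsb+\mu_k)}$ gives the claim.

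The only real obstacle is the bookkeeping in Step 1: the exponent $n_j(\nu)$ must be an integer and the same $\nu$ must serve all $k$ factors, so some care is needed to ensure that each factor converges strongly to $\e^{\mu_j A(t_j)}$ while the product remains uniformly norm-bounded by $M(\nu/(\nu-\beta))^{N(\nu)}$; both follow from the standard fact that the Yosida convergence $(\nu R_\nu(A))^{n(\nu)} \to \e^{\mu A}$ holds whenever $n(\nu)/\nu \to \mu$, combined with the strong continuity of operator multiplication on norm-bounded sets.
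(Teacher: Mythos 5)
Your argument is correct: the exponential formula $\e^{\mu A}=\lim_{\nu\to\infty}(\nu(\nu-A)^{-1})^{n(\nu)}$ with $n(\nu)/\nu\to\mu$, combined with stability applied to the time-ordered resolvent product with repeated entries, lower semicontinuity of the norm under strong limits, and the elementary limit $(\nu/(\nu-\beta))^{N(\nu)}\to\e^{\beta\sum_j\mu_j}$, gives exactly the claimed bound, and the single-factor bound you use for uniform boundedness comes directly from stability (constant time sequence), so there is no circularity. The paper omits the proof and defers to Prop.~7.3 of Tanabe, whose argument is precisely this standard one, so your route coincides with the intended proof.
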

\begin{proof}
  The proof is straightforward, see \eg\ Prop.~7.3 of~\cite{tanabe}.
\end{proof}

The following simple generalization of Prop.~3.4 in~\cite{kato:hyperbolic} gives a criterion for the stability uses an assumption of the form~\eqref{eq:group1} for a time-dependent norm:

\begin{proposition}\label{prop:equivalent-norms}
  For each $t \in [0,T]$, let $\norm{\,\cdot\,}_t$ be an equivalent norm on $\mathcal{X}$ and $C \in L^1[0,T]$ positive such that
  \begin{equation}\label{eq:equivalent-norm}
    \norm{u}_s \leq \norm{u}_t \exp\,\abs*{\int_s^t C(r)\, \dif r},
    \quad u \in \mathcal{X},
    \quad s,t \in [0,T].
  \end{equation}
  If $\{A(t)\}$ satisfies
  \begin{equation}\label{eq:almost-contraction}
    \norm[\big]{\bigl( A(t) - \lambda \bigr)^{-1}}_t \leq (\lambda - \beta)^{-1},
    \quad \lambda > \beta,
  \end{equation}
  for all $t \in [0,T]$, then for any $s\in[0,T]$
  \begin{equation*}\label{eq:special-stability}
    \norm[\bigg]{\prod_{j=1}^k \bigl( A(t_j) - \lambda \bigr)^{-1}}_s \leq (\lambda - \beta)^{-k} \exp\left(\int_0^T 2C(r)\, \dif r\right),
    \quad t_1 \leq s \leq t_k,
  \end{equation*}
  for every finite sequence $0 \leq t_1 \leq t_2 \leq \dotsb \leq t_k \leq T$.
\end{proposition}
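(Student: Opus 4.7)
The plan is to prove the estimate by chaining the single-step resolvent bound \eqref{eq:almost-contraction} through the $k$ factors, switching between the time-dependent norms using \eqref{eq:equivalent-norm}, and observing that the total switching cost telescopes.

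First, fix $u \in \mathcal{X}$ and set $v_0 \defn u$ and $v_j \defn (A(t_j) - \lambda)^{-1} v_{j-1}$ for $j = 1, \dots, k$, so that $v_k = \prod_{j=1}^k (A(t_j) - \lambda)^{-1} u$ (recall factors are time-ordered, with $(A(t_1) - \lambda)^{-1}$ applied first). For each $j$, the hypothesis \eqref{eq:almost-contraction} in the norm $\norm{\,\cdot\,}_{t_j}$ gives the one-step bound
\begin{equation*}
  \norm{v_j}_{t_j} \leq (\lambda - \beta)^{-1} \norm{v_{j-1}}_{t_j},
\end{equation*}
and the norm-equivalence estimate \eqref{eq:equivalent-norm} (which is symmetric in $s$ and $t$) allows us to switch $\norm{v_{j-1}}_{t_{j-1}}$ into $\norm{v_{j-1}}_{t_j}$ at the cost of a factor $\exp\abs{\int_{t_{j-1}}^{t_j} C}$, and similarly for the transitions involving $s$ at the two endpoints.

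Chaining these two kinds of estimates, I start from $\norm{v_0}_s$, switch to $\norm{v_0}_{t_1}$ (cost $\exp\abs{\int_s^{t_1} C}$), apply the resolvent bound to pass to $\norm{v_1}_{t_1}$, switch to $\norm{v_1}_{t_2}$ (cost $\exp\abs{\int_{t_1}^{t_2} C}$), apply the resolvent bound again, and so on, finally switching from $\norm{v_k}_{t_k}$ back to $\norm{v_k}_s$ (cost $\exp\abs{\int_{t_k}^s C}$). This yields
\begin{equation*}
  \norm{v_k}_s \leq (\lambda - \beta)^{-k} \norm{u}_s \exp\left( \abs[\bigg]{\int_s^{t_1} C} + \sum_{j=1}^{k-1} \int_{t_j}^{t_{j+1}} C + \abs[\bigg]{\int_{t_k}^s C} \right).
\end{equation*}

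The last step is to observe that, because $t_1 \leq s \leq t_k$ and $t_1 \leq t_2 \leq \dotsb \leq t_k$, the sum in the exponent equals
\begin{equation*}
  \int_{t_1}^s C + \int_{t_1}^{t_k} C + \int_s^{t_k} C = 2 \int_{t_1}^{t_k} C \leq 2 \int_0^T C(r)\, \dif r,
\end{equation*}
since $\sum_{j=1}^{k-1} \int_{t_j}^{t_{j+1}} C = \int_{t_1}^{t_k} C$ and $\int_{t_1}^s C + \int_s^{t_k} C = \int_{t_1}^{t_k} C$. Taking the supremum over $u \neq 0$ yields the claimed bound. There is no serious obstacle here — the only thing to be careful about is the bookkeeping of which norm is used at each step, and the observation that the monotonicity of the $t_j$ together with the condition $t_1 \leq s \leq t_k$ is exactly what makes the sum of switching costs telescope to $2\int_0^T C$ independently of $k$.
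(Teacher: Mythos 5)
Your proof is correct and follows essentially the same route as the paper's: repeated alternation of the one-step resolvent bound \eqref{eq:almost-contraction} with the norm-switching estimate \eqref{eq:equivalent-norm}, with the total switching cost telescoping to $2\int_{t_1}^{t_k} C \leq 2\int_0^T C$. The only cosmetic difference is that the paper first chains from $\norm{\,\cdot\,}_{t_k}$ down to $\norm{\,\cdot\,}_{t_1}$ and then applies \eqref{eq:equivalent-norm} twice more at the endpoints $s$, whereas you build those two endpoint switches into the chain.
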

\begin{proof}
  Repeated application of~\eqref{eq:equivalent-norm} and~\eqref{eq:almost-contraction} yields
  \begin{align*}
    \norm[\bigg]{\prod_{j=1}^k \bigl( A(t_j) - \lambda \bigr)^{-1} u}_{t_k}
    &\leq (\lambda - \beta)^{-1} \norm[\bigg]{\prod_{j=1}^{k-1} \bigl( A(t_j) - \lambda \bigr)^{-1} u}_{t_k} \\
    &\leq (\lambda - \beta)^{-1} \exp\left(\int_{t_{k-1}}^{t_k} C(r)\, \dif r\right) \norm[\bigg]{\prod_{j=1}^{k-1} \bigl( A(t_j) - \lambda \bigr)^{-1} u}_{t_{k-1}} \\
    &\leq \dotsb \\
    &\leq (\lambda - \beta)^{-k} \exp\left(\int_{t_1}^{t_k} C(r)\, \dif r\right) \norm{u}_{t_1}.
  \end{align*}
  Applying~\eqref{eq:equivalent-norm} twice more (for $s$ and $t_k$, as well as $s$ and $t_1$), we obtain the desired result.
\end{proof}

Let us start with a rather general theorem on the construction of evolution operators (see also Thm.~4.1 of~\cite{kato:hyperbolic} and Thm.~7.1 of~\cite{tanabe}).
Note that the properties of the evolution operator describe in this theorem are rather modest.
\begin{theorem}\label{thm:evolution1}
  Assume that:
  \begin{enumerate}[label=(\alph*)]
    \item $\{A(t)\}_{t \in [0,T]}$ is stable with constants $M, \beta$.
    \item $\mathcal{Y}$ is $A(t)$-admissible for each $t$, and the part $\tilde{A}(t)$ of $A(t)$ in $\mathcal{Y}$ is stable with constants $\tilde{M}, \tilde\beta$.
    \item $\mathcal{Y} \subset \Dom A(t)$ so that $A(t) \in B(\mathcal{Y},\mathcal{X})$ for each $t$, and $t \mapsto A(t)$ is norm-continuous in the norm of $B(\mathcal{Y},\mathcal{X})$.
  \end{enumerate}
  Then there exists a unique family of bounded operators $\{U(t,s)\}_{0 \leq s \leq t \leq T}$, on $\mathcal{X}$, called the \emph{evolution (operator)} generated by $A(t)$, with the following properties:
  \begin{enumerate}
    \item \label{item:evolution-appx:1}
      For all $0 \leq r \leq s \leq t \leq T$, we have the identities
      \begin{equation*}
        U(t,t) = \one,
        \quad
        U(t,s) U(s,r) = U(t,r).
      \end{equation*}
    \item \label{item:evolution-appx:2}
      $(t,s) \mapsto U(t,s)$ is strongly $\mathcal{X}$-continuous and $\norm{U(t,s)}_{\mathcal{X}} \leq M \e^{\beta (t-s)}$.
    \item \label{item:evolution-appx:3}
      For all $y \in \mathcal{Y}$ and $0 \leq s \leq t \leq T$,
      \begin{subequations}\begin{align}
        \partial_t^+ U(t,s) y \big|_{t=s} &= A(s) y, \label{eq:U_right_deriv_t} \\
        -\partial_s U(t,s) y &= U(t,s) A(s) y, \label{eq:U_deriv_s}
      \end{align}\end{subequations}
      where the right derivative $\partial_t^+$ and the derivative $\partial_s$ (right derivative if $s=0$ and left derivative if $s=t$) are in the strong topology of $\mathcal{X}$.
  \end{enumerate}
\end{theorem}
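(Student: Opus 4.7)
The plan is to follow Kato's classical approach: approximate the family $\{A(t)\}$ by piecewise constant families $\{A_n(t)\}$, build the corresponding evolutions $U_n(t,s)$ by concatenating semigroups generated by the constant pieces, and pass to the strong limit in $\mathcal{X}$.

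For each $n \in \NN$, I would fix a partition $0 = t_0^n < t_1^n < \dotsb < t_{N_n}^n = T$ with mesh tending to zero and put $A_n(\sigma) \defn A(t_{i-1}^n)$ for $\sigma \in [t_{i-1}^n, t_i^n[$\,. On each subinterval $A_n$ is constant, so it generates the semigroup $\sigma \mapsto \e^{(\sigma - t_{i-1}^n) A(t_{i-1}^n)}$, and $U_n(t,s)$ is defined by composing these semigroups in the natural time-ordered way across subintervals. Stability of $\{A(t)\}$ on $\mathcal{X}$ combined with Prop.~\ref{prop:stability-e} then gives $\norm{U_n(t,s)}_{\mathcal{X}} \leq M \e^{\beta(t-s)}$ uniformly in $n$, and admissibility of $\mathcal{Y}$ together with stability of $\{\tilde A(t)\}$ yields $\norm{U_n(t,s)}_{\mathcal{Y}} \leq \tilde M \e^{\tilde\beta(t-s)}$.

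The central step, which I expect to be the main obstacle, is showing that $U_n(t,s) y$ is Cauchy in $\mathcal{X}$ for each $y \in \mathcal{Y}$. On the common refinement of the partitions for $n$ and $m$, both $A_n$ and $A_m$ are constant on each subinterval, and there $\sigma \mapsto U_n(t,\sigma) U_m(\sigma,s) y$ is continuously differentiable with derivative $U_n(t,\sigma) (A_m(\sigma) - A_n(\sigma)) U_m(\sigma,s) y$. Strong continuity of $U_n$ and $U_m$ at the breakpoints lets the telescoping boundary terms cancel pairwise, yielding the key identity
\begin{equation*}
  U_n(t,s) y - U_m(t,s) y = \int_s^t U_n(t,\sigma) \bigl( A_m(\sigma) - A_n(\sigma) \bigr) U_m(\sigma,s) y\, \dif\sigma.
\end{equation*}
Admissibility keeps $U_m(\sigma,s) y \in \mathcal{Y}$ with $\mathcal{Y}$-norm uniformly bounded, and the assumption that $t \mapsto A(t)$ is norm-continuous in $B(\mathcal{Y}, \mathcal{X})$ forces $A_n \to A$ uniformly in that operator norm on $[0,T]$. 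Combining the uniform $\mathcal{X}$-bound on $U_n$ with this convergence shows that the right-hand side tends to zero as $n, m \to \infty$, uniformly in $(s,t)$. I would then define $U(t,s) y$ as the limit and extend to $\mathcal{X}$ by density of $\mathcal{Y}$ and the uniform $\mathcal{X}$-bound.

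The remaining properties follow by standard arguments. The cocycle identity~\ref{item:evolution-appx:1} and strong continuity~\ref{item:evolution-appx:2} pass to the limit from $U_n$. For~\eqref{eq:U_right_deriv_t} and~\eqref{eq:U_deriv_s} I would take $y \in \mathcal{Y}$ and exploit the same telescoping identity to transfer the strong semigroup derivatives to the limiting $U$: concretely, $h^{-1}(U(s+h,s) y - y) - A(s) y$ decomposes into a piece controlled by the semigroup right derivative at $s$ and an error controlled by $\sup_{\sigma \in [s,s+h]} \norm{A(\sigma) - A(s)}_{B(\mathcal{Y},\mathcal{X})}$, which vanishes by norm-continuity. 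Uniqueness follows from the standard observation that if $V$ is another evolution with the listed properties, then for $y \in \mathcal{Y}$ the map $\sigma \mapsto V(t,\sigma) U(\sigma,s) y$ has vanishing strong derivative on $[s,t]$, forcing $V = U$. The main technical subtlety, as already indicated, lies in the telescoping identity at the breakpoints of the common refinement, where the piecewise-constructed semigroups are only strongly (not norm) continuous in time; one must use admissibility to keep the intermediate vector $U_m(\sigma,s) y$ inside $\mathcal{Y} \subset \Dom A_n(\sigma)$ and invoke a dominated-convergence argument using the uniform $\mathcal{Y}$-bound on $U_m(\cdot,s) y$ together with the uniform convergence $\norm{A_n - A}_{B(\mathcal{Y},\mathcal{X})} \to 0$.
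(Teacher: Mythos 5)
Your construction is essentially the paper's proof: the same step-function approximation $A_n(t)$, the same concatenated semigroups $U_n(t,s)$, the same integral identity $U_n-U_m=\int U_n\,(A_m-A_n)\,U_m$ estimated by combining the $\mathcal{X}$-stability bound on the left factor, the $\mathcal{Y}$-stability bound on the right factor (available because admissibility keeps $U_m(\sigma,s)y$ in $\mathcal{Y}$), and the uniform convergence $\norm{A_n-A}_{B(\mathcal{Y},\mathcal{X})}\to 0$, followed by extension to all of $\mathcal{X}$ by density. The derivative statements are likewise obtained in the paper by the same comparison identity specialized to a constant family $A'(\cdot)=A(\tau)$, plus the cocycle identity to move the diagonal derivatives to general $s<t$.

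The one step that would fail as written is uniqueness. You propose to differentiate $\sigma\mapsto V(t,\sigma)U(\sigma,s)y$ and conclude that the derivative vanishes, but this requires two things the theorem does not supply for the limit $U$: first, $\partial_\sigma U(\sigma,s)y=A(\sigma)U(\sigma,s)y$ for $\sigma>s$ --- property (iii) only gives the right derivative in the first argument at the diagonal, and $U(\sigma,s)y$ need not lie in $\Dom A(\sigma)$; second, to differentiate the $V$-factor via (iii) you need $U(\sigma,s)y\in\mathcal{Y}$, whereas the invariance $U(t,s)\mathcal{Y}\subset\mathcal{Y}$ is only established under the additional hypotheses of Thms.~\ref{thm:evolution_reflexive} and~\ref{thm:evolution2} (predual, resp.\ uniform convexity), not here. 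The repair is exactly what the paper does: run the same product-rule/fundamental-theorem argument with the approximant $U_n$ in place of $U$ --- $U_n$ is an explicit product of semigroups, preserves $\mathcal{Y}$, and is differentiable on $\mathcal{Y}$ away from the partition points --- which yields $U_n(t,s)y-V(t,s)y=\int_s^t V(t,\sigma)\bigl(A_n(\sigma)-A(\sigma)\bigr)U_n(\sigma,s)y\,\dif\sigma$, and then let $n\to\infty$. With that substitution your argument is complete and coincides with the paper's.
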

\begin{proof}
  We approximate $A(t)$ by step functions:
  Set
  \begin{equation*}
    A_n(t) = A(T\floor{tn/T}/n),
  \end{equation*}
  where $\floor{\,\cdot\,}$ denotes the floor function, \viz, rounding to the integral part.
  Since $t \mapsto A(t)$ is norm-continuous in the norm of $B(\mathcal{Y},\mathcal{X})$, we have
  \begin{equation}\label{eq:lim_An_to_A}
    \norm{A_n(t) - A(t)}_{B(\mathcal{Y},\mathcal{X})} \to 0
    \quad\text{as}\quad
    n \to \infty
  \end{equation}
  uniformly in $t$.
  It follows immediately that also $A_n(t)$ and $\tilde{A}_n(t)$ are stable with constants $M, \beta$ and $\tilde{M}, \tilde\beta$, respectively.

  Corresponding to $A_n(t)$ we construct approximating evolution operators $U_n(t,s)$ by setting
  \begin{equation*}
    U_n(t,s) = \e^{(t-s) A_n(s)}
  \end{equation*}
  if $s,t$ belong to the closure of an interval where $A_n$ is constant, and by imposing the relation
  \begin{equation*}
    U_n(t,s) = U_n(t,r) U_n(r,s),
  \end{equation*}
  to determine $U_n(t,s)$ for other values of $s,t$.
  Clearly, $U_n(t,t) = \one$ and $(t,s) \mapsto U_n(t,s)$ is strongly $\mathcal{X}$-continuous.
  We also have
  \begin{equation}\label{eq:Un-bounds}
    \norm{U_n(t,s)}_\mathcal{X} \leq M \e^{\beta (t-s)},
    \quad
    \norm{U_n(t,s)}_\mathcal{Y} \leq \tilde{M} \e^{\tilde\beta (t-s)}
  \end{equation}
  by Prop.~\ref{prop:stability-e}, and $U_n(t,s) \mathcal{Y} \subset \mathcal{Y}$ because $\mathcal{Y}$ is $A(t)$-admissible.
  Furthermore, because $\mathcal{Y} \subset \Dom A(t)$ we have for $y \in \mathcal{Y}$
  \begin{align*}
    \partial_t U_n(t,s) y &= A_n(t) U_n(t,s) y, \\
    \partial_s U_n(t,s) y &= -U_n(t,s) A_n(s) y,
  \end{align*}
  for any $t$ resp. $s$ that is not on the boundary of an interval where $A_n$ is constant.

  Next we show that $U_n(t,s)$ converges to $U(t,s)$ strongly in $\mathcal{X}$ uniformly in $s,t$:
  By the fundamental theorem of calculus, we have
  \begin{equation*}
    U_n(t,r) y - U_m(t,r) y
    = \int_r^t U_n(t,s) \bigl( A_n(s) - A_m(s) \bigr) U_m(s,r) y\, \dif s.
  \end{equation*}
  Applying~\eqref{eq:Un-bounds}, we thus obtain
  \begin{equation*}
    \norm{U_n(t,r) y - U_m(t,r) y}_{\mathcal{X}} \leq M \tilde{M} \e^{\gamma (t-r)} \norm{y}_{\mathcal{Y}} \int_r^t \norm{A_n(s) - A_m(s)}_{B(\mathcal{Y},\mathcal{X})}\, \dif s,
  \end{equation*}
  where $\gamma = \max(\beta,\tilde\beta)$.
  Therefore it follows from~\eqref{eq:lim_An_to_A} that $U_n(t,s) y$ converges in the strong topology of $\mathcal{X}$ uniformly in $s,t$.
  Since $\mathcal{Y}$ is dense in $\mathcal{X}$ and $U_n(t,s)$ is uniformly bounded in $n$, $U_n(t,s)$ converges strongly in $\mathcal{X}$ and we set
  \begin{equation*}
    U(t,s) = \slim_{n\to\infty} U_n(t,s).
  \end{equation*}
  It is immediate that the properties~\ref{item:evolution-appx:1} and~\ref{item:evolution-appx:2} follow from the corresponding properties for $U_n(t,s)$.

  Finally, we show uniqueness and~\ref{item:evolution-appx:3}:
  If $\{V(t,s)\}_{0 \leq s \leq t \leq T}$ satisfies \ref{item:evolution-appx:1}--\ref{item:evolution-appx:3} for a stable family of operators $\{A'(t)\}_{t \in [0,T]}$ with the same stability constants, then we apply the fundamental theorem of calculus to find
  \begin{equation*}
    U_n(t,s) y - V(t,s) y
    = \int_s^t U_n(t,r) \bigl( A_n(r) - A'(r) \bigr) V(r,s) y\, \dif r,
  \end{equation*}
  and therefore
  \begin{equation}\label{eq:Un-V_bound}
    \norm{U_n(t,s) y - V(t,s) y}_{\mathcal{X}} \leq M \tilde{M} \e^{\gamma (t-s)} \norm{y}_{\mathcal{Y}} \int_s^t \norm{A_n(r) - A'(r)}_{B(\mathcal{Y},\mathcal{X})}\, \dif r.
  \end{equation}
  If we set $A'(t) = A(t)$ and let $n \to \infty$, we thus find that $U(t,s) y = V(t,s) y$ and by density $U(t,s) = V(t,s)$ on the whole of $\mathcal{X}$.
  We conclude that $U(t,s)$ is unique.

  Now, in~\eqref{eq:Un-V_bound}, we set $A'(t) = A(\tau) = \mathrm{const}$ for $\tau \in [0,T]$, divide by $t-s$ and let $n \to \infty$ to obtain
  \begin{equation*}
    (t-s)^{-1} \norm{U(t,s) y - \e^{(t-s) A(\tau)} y}_\mathcal{X} \leq (t-s)^{-1} M \tilde{M} \e^{\gamma (t-s)} \norm{y}_{\mathcal{Y}} \int_s^t \norm{A_n(r) - A(\tau)}_{B(\mathcal{Y},\mathcal{X})}\, \dif r.
  \end{equation*}
  On the one hand, for $\tau = s$, we find~\eqref{eq:U_right_deriv_t} in the limit $t \to s$.
  On the other hand, setting $\tau = t$ and letting $t \to s$, we find
  \begin{equation}\label{eq:U_left_deriv_s}
    \partial_s^- U(t,s) y \big|_{s=t} = -A(t) y.
  \end{equation}
  To find~\eqref{eq:U_deriv_s}, we check the right and left derivative separately.
  Applying~\eqref{eq:U_right_deriv_t} and~\eqref{eq:U_left_deriv_s}, we obtain
  \begin{subequations}\label{eq:U_deriv_s_proof}\begin{align}
    \begin{split}
      \partial_s^+ U(t,s) y
      &= \slim_{h \searrow 0} h^{-1} \bigl( U(t,s+h) y - U(t,s) y \bigr) \\
      &=  U(t,s+h) \slim_{h \searrow 0} h^{-1} \bigl( y - U(s+h,s) y \bigr)
       = -U(t,s) A(s) y,
    \end{split} \\
    \begin{split}
      \partial_s^- U(t,s) y
      &= \slim_{h \searrow 0} h^{-1} \bigl( U(t,s) y - U(t,s-h) y \bigr) \\
      &=  U(t,s) \slim_{h \searrow 0} h^{-1} \bigl( y - U(s,s-h) y \bigr)
       = -U(t,s) A(s) y.
    \end{split}
  \end{align}\end{subequations}
  Therefore we have completed the proof also for~\ref{item:evolution-appx:3}.
\end{proof}

We say that a Banach space $\mathcal{Y}$ possesses a predual if there exists a Banach space $\mathcal{Y}_*$ such that $\mathcal{Y}$ is the dual of $\mathcal{Y}_*$.
Having fixed a predual $\mathcal{Y}_*$, we can equip $\mathcal{Y}$ with the so-called weak* topology, which is generated by the seminorms $y\mapsto|\xi(y)|$, where $\xi\in\mathcal{Y}_*$.
Note in particular that every reflexive Banach space possesses a unique predual (namely, its dual).
For reflexive Banach spaces the weak* convergence clearly coincides with the weak convergence.

For Banach spaces possessing a predual one can slightly improve the previous theorem (see also Thm.~5.1 of~\cite{kato:hyperbolic}).
\begin{theorem}\label{thm:evolution_reflexive}
  In addition to the assumptions of Thm.~\ref{thm:evolution1}, assume that:
  \begin{enumerate}
    \item[(d)] $\mathcal{Y}$ possesses a predual.
  \end{enumerate}
  Then, in addition to~\ref{item:evolution-appx:1}--\ref{item:evolution-appx:3}, the evolution $\{U(t,s)\}_{0 \leq s \leq t \leq T}$ has the following properties:
  \begin{enumerate}[start=4]
    \item \label{item:evolution-appx:4}
      $U(t,s) \mathcal{Y} \subset \mathcal{Y}$, $(t,s) \mapsto U(t,s)$ is weakly* continuous and
      \begin{equation}\label{eq:group2}
        \norm{U(t,r)}_{\mathcal{Y}} \leq \tilde{M} \e^{\tilde\beta (t-s)},
        \quad 0 \leq r \leq s \leq t \leq T.
      \end{equation}
  \end{enumerate}
\end{theorem}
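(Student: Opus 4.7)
The plan is to build on the construction in the proof of Thm.~\ref{thm:evolution1} and upgrade the strong $\mathcal{X}$-convergence of the approximating evolutions $U_n(t,s)$ to weak* convergence in $\mathcal{Y}$, exploiting the predual structure via Banach--Alaoglu. Recall that the $U_n(t,s)$ constructed from the step-function approximations of $A(t)$ leave $\mathcal{Y}$ invariant and satisfy the uniform bound $\norm{U_n(t,s)}_{\mathcal{Y}} \leq \tilde{M}\e^{\tilde\beta(t-s)}$, and that $U_n(t,s) \to U(t,s)$ strongly in $\mathcal{X}$.

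First I would fix $y \in \mathcal{Y}$. The net $\{U_n(t,s)y\}_n$ then lies in a closed ball of $\mathcal{Y}$, which is weak*-compact by Banach--Alaoglu (here is where the existence of the predual $\mathcal{Y}_*$ enters). Consequently, every subnet admits a further subnet converging weakly* in $\mathcal{Y}$ to some limit $z \in \mathcal{Y}$. To conclude that $z = U(t,s)y$, I would test against functionals on $\mathcal{X}$: the continuous embedding $\mathcal{Y} \hookrightarrow \mathcal{X}$ induces a restriction map $\mathcal{X}^* \to \mathcal{Y}^*$, and the key compatibility is that every $\xi \in \mathcal{X}^*$ restricts to an element of $\mathcal{Y}_*$ (this is automatic in the reflexive/Hilbertian setting of interest in this paper, where $\mathcal{Y}_* = \mathcal{Y}^*$). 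Granting this, for any $\xi \in \mathcal{X}^*$ we have $\xi(U_{n_\alpha}(t,s)y) \to \xi(z)$ by weak* convergence in $\mathcal{Y}$ and $\xi(U_{n_\alpha}(t,s)y) \to \xi(U(t,s)y)$ by strong $\mathcal{X}$-convergence, forcing $z = U(t,s)y$. Since every weak* accumulation point of $\{U_n(t,s)y\}$ agrees with $U(t,s)y$, the whole net converges weakly* to $U(t,s)y$, and in particular $U(t,s)y \in \mathcal{Y}$. The estimate~\eqref{eq:group2} then drops out from weak*-lower-semicontinuity of the norm:
\begin{equation*}
  \norm{U(t,s)y}_{\mathcal{Y}} \leq \liminf_n \norm{U_n(t,s)y}_{\mathcal{Y}} \leq \tilde{M}\e^{\tilde\beta(t-s)}\norm{y}_{\mathcal{Y}}.
\end{equation*}

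For the weak* continuity of $(t,s) \mapsto U(t,s)$ on $\mathcal{Y}$, I would run essentially the same compactness argument once more: given $(t_k,s_k) \to (t,s)$, the bound just obtained makes $\{U(t_k,s_k)y\}$ bounded in $\mathcal{Y}$, so any subnet has a weak*-convergent sub-subnet with some limit $w \in \mathcal{Y}$; the strong $\mathcal{X}$-continuity of~\ref{item:evolution-appx:2} together with the same $\mathcal{X}^*$-versus-$\mathcal{Y}_*$ identification forces $w = U(t,s)y$, and hence $U(t_k,s_k)y \to U(t,s)y$ weakly* in $\mathcal{Y}$.

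The main obstacle is the identification step in the middle paragraph: showing that any weak* accumulation point in $\mathcal{Y}$ must agree with the strong $\mathcal{X}$-limit. This is where the compatibility between the predual $\mathcal{Y}_*$ and the ambient $\mathcal{X}^*$ is needed, namely that restriction sends $\mathcal{X}^*$ into $\mathcal{Y}_*$. Reflexivity of $\mathcal{Y}$ (as in all applications of this paper) supplies this for free; for genuinely non-reflexive duals one would need to include this compatibility as part of the hypotheses on $\mathcal{Y}$. Once the identification is in hand, the rest of the theorem is an essentially formal consequence of Banach--Alaoglu and lower semicontinuity of the weak* norm.
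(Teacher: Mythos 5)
Your proposal is correct and follows essentially the same route as the paper's proof: extract a weak* convergent subsequence/subnet of the uniformly bounded $U_n(t,s)y$ via Banach--Alaoglu, identify its limit with the strong $\mathcal{X}$-limit $U(t,s)y$, deduce the bound from the uniform estimates on $U_n$, and repeat the compactness argument for the weak* continuity. You are in fact more explicit than the paper about the one genuinely delicate point---that identifying the weak* limit in $\mathcal{Y}$ with the strong limit in $\mathcal{X}$ requires the restriction of $\mathcal{X}^*$-functionals to land in the predual $\mathcal{Y}_*$ (automatic in the reflexive/Hilbertian setting actually used here)---which the paper's proof passes over in silence.
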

\begin{proof}
  Note that for fixed $s,t \in [0,T]$ and $y \in \mathcal{Y}$, $U_n(t,s) y$ is a uniformly bounded sequence in~$\mathcal{Y}$, and thus, by the Banach--Alaoglu Theorem, it contains a weakly* convergent subsequence.
  Moreover, by our previous results, $U_n(t,s) y \to U(t,s) y$ in $\mathcal{X}$.
  But $U(t,s) y$ must be equal to the weak* limit, and thus lie in $\mathcal{Y}$, \ie, $U(t,s) \mathcal{Y} \subset \mathcal{Y}$.
  The inequality then follows from~\eqref{eq:Un-bounds}.

  Now, let $(t_j)_j$, $(s_j)_j$ be sequences with $t_j \to t$, $s_j \to s$ and $y \in \mathcal{Y}$.
  Recall that $U(t_j,s_j) y \to U(t,s) y$ in $\mathcal{X}$ because $(t,s) \mapsto U(t,s)$ is $\mathcal{X}$-strongly continuous.
  By the Banach--Alaoglu Theorem, since $U(t_j,s_j) $ is uniformly bounded, $U(t_j,s_j) y$ contains a weakly* convergent subsequence.
  The weak* limit of $U(t_j,s_j) y$ is thus $U(t,s) y$ and must lie in~$\mathcal{Y}$.
  In other words, $U(t,s)$ is weakly* continuous on~$\mathcal{Y}$.
\end{proof}

We recall that a normed space is called \emph{uniformly convex} if for every $\varepsilon > 0$ and unit vectors $\norm{x} = \norm{y} = 1$ there exists $\delta > 0$ such that
\begin{equation*}
  \norm{x-y} \geq \varepsilon
  \quad\Rightarrow\quad
  \norm*{\frac{x+y}{2}} \leq 1 -\delta.
\end{equation*}

We note that all uniformly convex Banach spaces are reflexive.
Besides, on uniformly convex Banach spaces $\norm{x_n} \to \norm{x}$ and the weak convergence $x_n \rightharpoonup x$ implies the strong convergence.
All Hilbert spaces are uniformly convex.

If we assume that the Banach space $\mathcal{Y}$ is uniformly convex, stronger results about the evolution can be derived.
They are described in the following theorem, which is a part of Thm.~5.2 of~\cite{kato:hyperbolic}:
\begin{theorem}\label{thm:evolution2}
  In addition to the assumptions of Thm.~\ref{thm:evolution1}, assume that
  \begin{enumerate}
    \item[(d')]
      $\mathcal{Y}$ is uniformly convex.
    \item[(e)]
      For every $t$ there exist on $\mathcal{Y}$ an equivalent norm $\norm{\,\cdot\,}_{\mathcal{Y},t}$ as well as a positive $C \in L^1[0,T]$ such that
      \begin{equation}\label{eq:exp_norm_growth}
        \norm{y}_{\mathcal{Y},s} \leq \norm{y}_{\mathcal{Y},t} \exp\,\abs*{\int_s^t C(r)\, \dif r},
        \quad s,t \in [0,T].
      \end{equation}
      Besides, there exists $\tilde\beta \in \RR$ such that
      \begin{equation*}
        \norm[\big]{\bigl( \tilde{A}(t) - \lambda \bigr)^{-1}}_{\mathcal{Y},t} \leq (\lambda - \tilde\beta)^{-1},
        \quad \lambda > \tilde\beta,
      \end{equation*}
      for all $t \in [0,T]$.
  \end{enumerate}
  Then, in addition to \ref{item:evolution-appx:1}--\ref{item:evolution-appx:3}, the evolution $\{U(t,s)\}_{0 \leq s \leq t \leq T}$ has the following property, which is an improved version of \ref{item:evolution-appx:4}:
  \begin{enumerate}
    \item[\manlabel{item:evolution-appx:4'}{(iv')}]
      $U(t,s)$ preserves $\mathcal{Y}$, is $\mathcal{Y}$-strongly continuous in $s$ for fixed $t$ and $\mathcal{Y}$-strongly right-continuous in $t$ for fixed~$s$, and
      \begin{equation}\label{eq:U-bound-Y}
        \norm{U(t,r)}_{\mathcal{Y},s} \leq \exp\left(\int_r^t \bigl(\tilde\beta + 2 C(\tau)\bigr)\, \dif\tau\right),
        \quad 0 \leq r \leq s \leq t \leq T.
      \end{equation}
  \end{enumerate}
\end{theorem}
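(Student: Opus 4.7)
The plan is to first establish the $\mathcal{Y}$-invariance of $U(t,s)$ together with the bound~\eqref{eq:U-bound-Y}, and then to bootstrap the $\mathcal{Y}$-weak continuity provided by Thm.~\ref{thm:evolution_reflexive} to full $\mathcal{Y}$-strong continuity using the Radon–Riesz (Kadec–Klee) property of uniformly convex spaces.

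First I would verify that Thm.~\ref{thm:evolution_reflexive} applies in the present setting. Uniform convexity of $\mathcal{Y}$ implies reflexivity, so $\mathcal{Y}$ possesses a predual (namely $\mathcal{Y}^*$) and the weak* topology coincides with the weak topology. Moreover, Prop.~\ref{prop:equivalent-norms} applied to the family $\{\tilde A(t)\}$ on $\mathcal{Y}$ with the time-dependent norms $\norm{\,\cdot\,}_{\mathcal{Y},t}$---using precisely assumption~(e)---yields the stability of $\tilde A(t)$ on $\mathcal{Y}$ with constants $\tilde\beta$ and $\tilde M = \exp\bigl(2\int_0^T C(\tau)\,\dif\tau\bigr)$. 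Hence Thm.~\ref{thm:evolution_reflexive} is applicable and already gives the invariance $U(t,s)\mathcal{Y} \subset \mathcal{Y}$ together with $\mathcal{Y}$-weak continuity of $(t,s) \mapsto U(t,s)$. To sharpen this to the bound~\eqref{eq:U-bound-Y}, I would go back to the approximating evolution $U_n(t,r)$ from the proof of Thm.~\ref{thm:evolution1}. Since $\{\tilde A_n(t)\}$ inherits the stability constants of $\{\tilde A(t)\}$, Prop.~\ref{prop:stability-e} combined with the proof of Prop.~\ref{prop:equivalent-norms} yields
\begin{equation*}
  \norm{U_n(t,r)y}_{\mathcal{Y},s} \leq \exp\left( \int_r^t \bigl(\tilde\beta + 2C(\tau)\bigr)\, \dif\tau \right) \norm{y}_{\mathcal{Y},s}
\end{equation*}
for $0 \leq r \leq s \leq t \leq T$. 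Passing to the limit using the $\mathcal{X}$-strong convergence $U_n(t,r)y \to U(t,r)y$ and weak lower semi-continuity of $\norm{\,\cdot\,}_{\mathcal{Y},s}$ transfers the inequality to $U(t,r)y$.

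The main obstacle is upgrading weak to strong continuity in $\mathcal{Y}$. The key tool is the Radon–Riesz property: in a uniformly convex Banach space, $y_n \rightharpoonup y$ together with $\norm{y_n} \to \norm{y}$ implies $y_n \to y$ in norm. I would apply this along the diagonal: for each fixed $z \in \mathcal{Y}$ and $t \in [0,T]$, I claim $U(t+h,t)z \to z$ in $\mathcal{Y}$ as $h \searrow 0$, and analogously $U(t,t-h)z \to z$. Weak convergence follows from the already-established $\mathcal{Y}$-weak continuity. For the norms, combining~\eqref{eq:U-bound-Y} with~\eqref{eq:exp_norm_growth} gives $\limsup_{h \searrow 0} \norm{U(t+h,t)z}_{\mathcal{Y},t} \leq \norm{z}_{\mathcal{Y},t}$, while the matching $\liminf \geq \norm{z}_{\mathcal{Y},t}$ is weak lower semi-continuity. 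Hence $\norm{U(t+h,t)z}_{\mathcal{Y},t} \to \norm{z}_{\mathcal{Y},t}$, and the Radon–Riesz property concludes strong convergence.

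Finally, the full continuity assertions in~\ref{item:evolution-appx:4'} follow by combining the diagonal strong convergence with the cocycle identity~\ref{item:evolution-appx:1} and the uniform bound~\eqref{eq:U-bound-Y}. For strong right-continuity in~$t$, write $U(t+h,s)y - U(t,s)y = \bigl( U(t+h,t) - \one \bigr)\, U(t,s)y$ and apply the diagonal result to $z = U(t,s)y \in \mathcal{Y}$. For strong continuity in~$s$, the right limit uses $U(t,s+h)y - U(t,s)y = U(t,s+h)\bigl[y - U(s+h,s)y\bigr]$ together with the uniform $\mathcal{Y}$-bound on $U(t,\,\cdot\,)$, while the left limit uses $U(t,s-h)y - U(t,s)y = U(t,s)\bigl[U(s,s-h)y - y\bigr]$; in each case the bracket tends to zero in $\mathcal{Y}$ by the diagonal result, which completes the proof.
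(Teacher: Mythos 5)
Your proposal is correct and follows essentially the same route as the paper: invoke the reflexive-case theorem for $\mathcal{Y}$-invariance and weak continuity, obtain the bound~\eqref{eq:U-bound-Y} from Props.~\ref{prop:stability-e} and~\ref{prop:equivalent-norms} via the approximants, upgrade to strong convergence on the diagonal using the Radon--Riesz property of the uniformly convex space $\mathcal{Y}$, and propagate this to the full continuity statements through the cocycle identity. The only (immaterial) difference is presentational: the paper takes the joint limit $r,t\to s$ in the diagonal step, whereas you treat the two one-sided limits separately.
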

\begin{proof}
  Since $\mathcal{Y}$ is uniformly convex, it is also reflexive, and thus~\ref{item:evolution-appx:4} of Thm.~\ref{thm:evolution_reflexive} holds.
  Then we use Props.~\ref{prop:stability-e} and~\ref{prop:equivalent-norms} to find~\eqref{eq:U-bound-Y}.

  Let us prove the strong continuity.
  By~\ref{item:evolution-appx:4}, for any $y \in \mathcal{Y}$ we have $\wlim_{t,r \to s} U(t,r) y \to y$.
  Using this, and then the bound~\eqref{eq:U-bound-Y}, we obtain
  \begin{align*}
    \norm{y} &
    \leq \liminf_{r,t \to s}\, \norm{U(t,r) y}_{\mathcal{Y},s}
    \leq \limsup_{r,t \to s}\, \norm{U(t,r) y}_{\mathcal{Y},s} \\&
    \leq \limsup_{r,t \to s}\, \exp\left(\int_r^t \bigl(\tilde\beta + 2 C(\tau)\bigr)\, \dif\tau\right) \norm{y}
    = \norm{y}.
  \end{align*}
  Hence, $\lim_{r,t \to s}\, \norm{U(t,r) y}_{\mathcal{Y},s} = \norm{y}$.
  But $\mathcal{Y}$ is uniformly convex, so this implies that
  \begin{equation*}
    \lim_{r,t \to s} U(t,r) y = y.
  \end{equation*}

  Let $0 \leq s \leq s' \leq t \leq T$ and $y \in \mathcal{Y}$.
  Then
  \begin{equation*}
    \norm{U(t,s') y - U(t,s) y}_{\mathcal{Y}} \leq \norm{U(t,s')}_{\mathcal{Y}} \norm{y - U(s',s) y}_{\mathcal{Y}} \to 0
  \end{equation*}
  as $s' \to s$ or $s \to s'$.
  Similarly, for $0 \leq s \leq t \leq t' \leq T$ we find
  \begin{equation*}
    \norm{U(t',s) y - U(t,s) y}_{\mathcal{Y}} \leq \norm{(U(t',t) - \one) U(t,s) y}_{\mathcal{Y}} \to 0
  \end{equation*}
  as $t' \to t$.
\end{proof}

In the previous theorem we still had to distinguish between between the $t$- and $s$-properties of~$U(t,s)$.
If the reversed operator $-A(T-t)$ also satisfies the assumptions of the theorems above, this distinction can be dropped, see also Remark~5.3 in~\cite{kato:hyperbolic}:

\begin{theorem}\label{thm:evolution3}
  Suppose that both $\{A(t)\}_{t \in [0,T]}$ and the reversed family $\{-A(T-t)\}_{t \in [0,T]}$ satisfy the assumptions of Thms.~\ref{thm:evolution1} and~\ref{thm:evolution2}.
  Then the unique family of bounded operators $\{U(t,s)\}_{s,t \in \RR}$ described in the previous theorems satisfies the following improved versions of (i), (iii) and (iv'):
  \begin{enumerate}[label=(\roman*')]
    \item \label{item:evolution-appx:1a}
      For all $r,s,t \in [0,T]$, we have the identities
      \begin{equation*}
        U(t,t) = \one,
        \quad
        U(t,s) U(s,r) = U(t,r).
      \end{equation*}
    \addtocounter{enumi}{1}
    \item \label{item:evolution-appx:3a}
      For all $y \in \mathcal{Y}$ and $s,t \in [0,T]$,
      \begin{subequations}\begin{align}
         \partial_t U(t,s) y &= A(t) U(t,s) y,\\
        -\partial_s U(t,s) y &= U(t,s) A(s) y,
      \end{align}\end{subequations}
      where the derivatives (right/left derivatives at the boundaries of~$[0,T]$) are in the strong topology of~$\mathcal{X}$.
    \item[(iv'')]
      $(t,s) \mapsto U(t,s)$ preserves $\mathcal{Y}$, is $\mathcal{Y}$-strongly continuous and satisfies \eqref{eq:U-bound-Y}.
  \end{enumerate}
\end{theorem}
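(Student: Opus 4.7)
The plan is to use time-reversal symmetry to glue together two evolutions. First I would apply Thms.~\ref{thm:evolution1} and~\ref{thm:evolution2} to the reversed family $\{-A(T-t)\}_{t\in[0,T]}$ to obtain a second forward evolution $V(\tau,\sigma)$, defined for $0\leq\sigma\leq\tau\leq T$, which enjoys the analogues of properties~\ref{item:evolution-appx:1}--\ref{item:evolution-appx:4'} with respect to $-A(T-\cdot)$. I would then extend the original $U$ from $\{s\leq t\}$ to the full square $[0,T]^2$ by setting
\begin{equation*}
  U(t,s) \defn V(T-t, T-s)
  \quad\text{for}\quad
  t \leq s.
\end{equation*}
A routine chain-rule computation converts the $V$-derivative formulas into the same formulas for $U$ on $\{t\leq s\}$,
\begin{equation*}
  \partial_t U(t,s)y = A(t)U(t,s)y, \qquad -\partial_s U(t,s)y = U(t,s) A(s)y, \qquad y \in \mathcal{Y},
\end{equation*}
and the bound~\eqref{eq:U-bound-Y}, the $\mathcal{Y}$-preservation and the $\mathcal{Y}$-strong continuity of~\ref{item:evolution-appx:4'} transfer to $U$ on $\{t\leq s\}$ by construction.

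The heart of the argument is to show that the two branches are mutually inverse: $U(s,t)U(t,s) = U(t,s)U(s,t) = \one$ for every $s\leq t$. Fixing such $s\leq t$ and $y\in\mathcal{Y}$, I would consider
\begin{equation*}
  g(r) \defn U(s,r)U(r,s)y, \quad r \in [s,t].
\end{equation*}
Since $U(r,s)y\in\mathcal{Y}$ by~\ref{item:evolution-appx:4'}, both factors are strongly $\mathcal{X}$-differentiable in $r$; combining the forward formula $\partial_r U(r,s)y = A(r)U(r,s)y$ with the backward formula $\partial_r U(s,r) = -U(s,r)A(r)$ applied to the $\mathcal{Y}$-valued input yields $\partial_r g(r) = 0$. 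Together with $g(s) = y$ this gives $U(s,t)U(t,s)y = y$, and density of $\mathcal{Y}$ in $\mathcal{X}$ extends the identity to all of $\mathcal{X}$. A symmetric argument with $\tilde g(r) \defn U(t,r)U(r,t)y$ on $[s,t]$ handles the other composition.

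The remaining claims follow by bookkeeping. For (i'), the orderings $r\leq s\leq t$ and $r\leq t\leq s$ are the forward law and its backward counterpart, and the other four orderings are reduced to these by inserting $U(\tau,\sigma)U(\sigma,\tau)=\one$; for instance, if $r\leq t\leq s$ then $U(t,s)U(s,r) = U(t,s)U(s,t)U(t,r) = U(t,r)$. For (iii'), the forward formulas of Thm.~\ref{thm:evolution1}\ref{item:evolution-appx:3} and the backward formulas derived above together cover all $s,t\in[0,T]$. For (iv''), joint $\mathcal{Y}$-strong continuity follows by combining the separate $\mathcal{Y}$-strong continuity of each branch with the composition law just established. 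The only real subtlety — and thus the main obstacle — lies in the invertibility step: at each differentiation one must ensure that the argument fed into a derivative formula genuinely lies in $\mathcal{Y}$, which is precisely where the $\mathcal{Y}$-invariance clause in~\ref{item:evolution-appx:4'} of Thm.~\ref{thm:evolution2} is indispensable and explains why the stronger hypotheses on the reversed family are needed for this theorem.
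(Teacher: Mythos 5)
Your proof is correct and follows the paper's overall skeleton (extend $U$ to the full square by $U(s,t)\defn V(T-s,T-t)$ for $s\leq t$, where $V$ is the evolution of the reversed family, then reduce everything to the mutual-inverse identity), but you handle the central step differently. The paper obtains $U(t,s)U(s,t)=\one$ directly from the step-function approximants $U_n$ and $V_n$ constructed in the proof of Thm.~\ref{thm:evolution1}: for the approximants the identity is an algebraic cancellation of exponentials, and it survives the strong limit. You instead differentiate $r\mapsto U(s,r)U(r,s)y$ for $y\in\mathcal{Y}$ and show the derivative vanishes. Both work; the approximant route is shorter and avoids any product-rule subtleties, while your ODE argument is self-contained at the level of the stated conclusions of Thms.~\ref{thm:evolution1} and~\ref{thm:evolution2} and does not require reopening their proofs. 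Two points in your version deserve explicit care. First, the product rule for $U(s,r)U(r,s)y$ needs the difference quotients $h^{-1}\bigl(U(s,r+h)-U(s,r)\bigr)$ to be uniformly bounded in $B(\mathcal{Y},\mathcal{X})$ (available from $\partial_\sigma U(s,\sigma)z=-U(s,\sigma)A(\sigma)z$ together with the uniform bounds on $U$ and on $t\mapsto A(t)$ in $B(\mathcal{Y},\mathcal{X})$), since the right-hand factor $U(r+h,s)y$ moves with $h$. Second, you invoke $\partial_r U(r,s)y=A(r)U(r,s)y$ on the forward branch, which is formally part of conclusion~\ref{item:evolution-appx:3a} and is derived in the paper only \emph{after} (i'); there is no circularity, because for $s\leq r$ this follows already from $U(r,s)\mathcal{Y}\subset\mathcal{Y}$ (property~\ref{item:evolution-appx:4'}), the forward composition law and the diagonal derivative~\eqref{eq:U_right_deriv_t}, but you should say so explicitly rather than treat it as given.
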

\begin{proof}
  Denote the evolution for $\{A(t)\}_{t \in [0,T]}$ by $U(t,s)$ and the evolution for $\{-A(T-t)\}_{t \in [0,T]}$ by $V(t,s)$.
  For $0 \leq s \leq t \leq T$, we define
  \begin{equation*}
    U(s,t) = V(T-s, T-t).
  \end{equation*}
  From the approximations $U_n(t,s)$ and $V_n(t,s)$, it is easy to see that
  \begin{equation*}
    U(t,s) U(s,t) = \one
  \end{equation*}
  for $s,t \in \RR$.
  This proves~\ref{item:evolution-appx:1a}.

  It is clear that
  \begin{align*}
     \partial_t U(t,s) y \big|_{t=s} &= A(s) y,\\
    -\partial_s U(t,s) y &= U(t,s) A(s) y
  \end{align*}
  for $s,t \in [0,T]$.
  Then we can proceed as in~\eqref{eq:U_deriv_s_proof} to find also
  \begin{equation*}
    \partial_t U(t,s) y = A(t) U(t,s) y.
  \end{equation*}

  Finally, the strong continuity of $U(t,s)$ follows from~\ref{item:evolution-appx:4'} applied to both $U(t,s)$ and $V(t,s)$, which implies, in particular, that $U(t,s)$ is strongly right- and left-continuous in~$t$ for fixed~$s$.
\end{proof}

Theorem~\ref{thm:evolution3} implies the following, see also Thm.~3.2 of~\cite{yajima}:
\begin{theorem}\label{thm:evolution_selfadjoint}
  Let~$\mathcal{X}$ and~$\mathcal{Y}$ be Hilbert spaces such that $\mathcal{Y}$ is densely and continuously embedded in~$\mathcal{X}$.
  Let $I \subset \RR$ be a compact interval, and $\{A(t)\}_{t \in I}$ a family of densely defined, closed operators on~$\mathcal{X}$.
  Suppose that the following is satisfied:
  \begin{enumerate}[label=(\alph*)]
    \item $\mathcal{Y} \subset \Dom A(t)$ so that $A(t) \in B(\mathcal{Y}, \mathcal{X})$ and $t \mapsto A(t)$ is norm-continuous in the norm of~$B(\mathcal{Y},\mathcal{X})$. \label{item:evolution-selfadj:a}
    \item \label{item:evolution-selfadj:b}
      For every $t \in I$, there exist on $\mathcal{X}$ and $\mathcal{Y}$ Hilbert structures $(\,\cdot\;|\;\cdot\,)_{\mathcal{X},t}$ and $(\,\cdot\;|\;\cdot\,)_{\mathcal{Y},t}$, which are equivalent to the original ones and for a positive $C \in L^1(I)$ and all $s,t \in I$
      \begin{align*}
        \norm{x}_{\mathcal{X},s} &\leq \norm{x}_{\mathcal{X},t} \exp\,\abs*{\int_s^t C(r)\, \dif r}, \\
        \norm{y}_{\mathcal{Y},s} &\leq \norm{y}_{\mathcal{Y},t} \exp\,\abs*{\int_s^t C(r)\, \dif r}.
      \end{align*}
      Denote the corresponding Hilbert spaces~$\mathcal{X}_t$ and~$\mathcal{Y}_t$.
    \item $A(t)$ is self-adjoint with respect to~$\mathcal{X}_t$ and the part~$\tilde{A}(t)$ of~$A(t)$ in~$\mathcal{Y}_t$ is self-adjoint in~$\mathcal{Y}_t$. \label{item:evolution-selfadj:c}
  \end{enumerate}
  Then there exists a unique family of bounded operators $\{U(t,s)\}_{s,t \in I}$, in~$\mathcal{X}$, called the \emph{evolution (operator)} generated by~$A(t)$, with the following properties:
  \begin{enumerate}[label=(\roman*)]
    \item \label{item:evolution-selfadj:1}
      For all $r,s,t \in I$, we have the identities
      \begin{equation*}
        U(t,t) = \one,
        \quad
        U(t,s) U(s,r) = U(t,r).
      \end{equation*}
    \item \label{item:evolution-selfadj:2}
      $U(t,s)$ is $\mathcal{X}$-strongly continuous and
      \begin{equation*}
        \norm{U(t,s)}_{\mathcal{X},s} \leq \exp\,\abs*{\int_s^t 2 C(r)\, \dif r},
        \quad s,t \in I.
      \end{equation*}
    \item \label{item:evolution-selfadj:3}
      For all $y \in \mathcal{Y}$ and $s,t \in I$,
      \begin{align*}
         \im\partial_t U(t,s) y &= A(t) U(t,s) y,\\
        -\im\partial_s U(t,s) y &= U(t,s) A(s) y,
      \end{align*}
      where the derivatives (right/left derivatives at the boundaries of~$I$) are in the strong topology of~$\mathcal{X}$.
    \item \label{item:evolution-selfadj:4}
      $U(t,s) \mathcal{Y} \subset \mathcal{Y}$, $U(t,s)$ is $\mathcal{Y}$-strongly continuous and
      \begin{equation*}
        \norm{U(t,s)}_{\mathcal{Y},s} \leq \exp\,\abs*{\int_s^t 2 C(r)\, \dif r},
        \quad s,t \in I.
      \end{equation*}
  \end{enumerate}
\end{theorem}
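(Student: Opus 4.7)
The plan is to reduce the statement to Theorem~\ref{thm:evolution3}, applied to the generator $G(t) \defn -\im A(t)$, which rewrites the desired evolution equation as $\partial_t u = G(t) u$. The key observation is that self-adjointness of $A(t)$ on $\mathcal{X}_t$ together with Stone's theorem gives that $G(t)$ generates a strongly continuous unitary group on $\mathcal{X}_t$, and in particular
\begin{equation*}
  \norm[\big]{\bigl( G(t) - \lambda \bigr)^{-1}}_{\mathcal{X},t} \leq \lambda^{-1}, \quad \lambda > 0,
\end{equation*}
so that the hypothesis~\eqref{eq:almost-contraction} of Prop.~\ref{prop:equivalent-norms} holds with $\beta = 0$. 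Combined with the growth bound for $\norm{\,\cdot\,}_{\mathcal{X},s}$ in assumption~\ref{item:evolution-selfadj:b}, Prop.~\ref{prop:equivalent-norms} then supplies the stability of $\{G(t)\}_{t \in I}$ with constants that will eventually give the bound $\exp\abs{\int_s^t 2C}$ claimed in~\ref{item:evolution-selfadj:2}.

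Next, I would repeat the same analysis in $\mathcal{Y}_t$: by~\ref{item:evolution-selfadj:c}, $\tilde A(t)$ is self-adjoint on $\mathcal{Y}_t$, so $\tilde G(t) = -\im \tilde A(t)$ generates a unitary group on $\mathcal{Y}_t$; this group coincides with the restriction of $\e^{s G(t)}$ to $\mathcal{Y}$, because both solve the same ODE in $\mathcal{X}$ with the same data, and this is exactly the $\mathcal{Y}$-admissibility required by Thm.~\ref{thm:evolution1}. A further application of Prop.~\ref{prop:equivalent-norms} with the $\mathcal{Y}$-bound in~\ref{item:evolution-selfadj:b} yields the stability of the part. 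The norm-continuity condition in Thm.~\ref{thm:evolution1} is just assumption~\ref{item:evolution-selfadj:a}, and the uniform convexity needed in Thm.~\ref{thm:evolution2} is automatic since $\mathcal{Y}$ is a Hilbert space.

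At this point Thms.~\ref{thm:evolution1} and~\ref{thm:evolution2} apply and yield an evolution $U(t,s)$ for $s \leq t$ with~\ref{item:evolution-appx:1}--\ref{item:evolution-appx:3} and~\ref{item:evolution-appx:4'}. To obtain the two-sided version, I would observe that the reversed family $\{-G(T-t)\}_{t \in I} = \{\im A(T-t)\}_{t \in I}$ satisfies the same self-adjointness hypotheses (since $-A(t)$ is self-adjoint iff $A(t)$ is, with the same domains) and the same continuity and growth bounds. Thm.~\ref{thm:evolution3} then produces a unique bidirectional evolution with the stronger properties~\ref{item:evolution-appx:1a}, \ref{item:evolution-appx:3a} and (iv''). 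Translating back via $G(t) = -\im A(t)$ introduces the factor of $\im$ in~\ref{item:evolution-selfadj:3}, and the bounds from Thms.~\ref{thm:evolution1}, \ref{thm:evolution2} specialize (because $\beta = \tilde\beta = 0$) to the $\exp\abs{\int_s^t 2C(r)\,\dif r}$ estimates in~\ref{item:evolution-selfadj:2} and~\ref{item:evolution-selfadj:4}.

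The main thing to be careful about is bookkeeping rather than a deep obstacle: one has to verify cleanly that the $\mathcal{Y}$-admissibility of the unitary groups $\e^{s G(t_0)}$ truly follows from self-adjointness of the part $\tilde A(t_0)$, and that the stability constants supplied by Stone's theorem combine with the time-dependent equivalence of norms in~\ref{item:evolution-selfadj:b} via Prop.~\ref{prop:equivalent-norms} to exactly the constants asserted in the theorem. Once this is in place, the passage from Thm.~\ref{thm:evolution3} to the statement at hand is entirely mechanical.
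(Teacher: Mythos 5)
Your proposal is correct and follows exactly the route the paper intends: the paper states this theorem as a consequence of Thm.~\ref{thm:evolution3} without writing out the deduction, and your argument (Stone's theorem giving the resolvent bound~\eqref{eq:almost-contraction} with $\beta=\tilde\beta=0$, Prop.~\ref{prop:equivalent-norms} for stability in both $\mathcal{X}_t$ and $\mathcal{Y}_t$, the uniqueness-of-the-ODE argument for $\mathcal{Y}$-admissibility, and time reversal to invoke Thm.~\ref{thm:evolution3}) is precisely that deduction. The only point worth writing out carefully is the one you already flag, namely that the restriction of $\e^{sG(t)}$ to $\mathcal{Y}$ coincides with the unitary group generated by the part $\tilde{G}(t)$.
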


The following perturbation theorem is essentially Thm.~4.5 of~\cite{kato}.
We leave the proof as an exercise to the reader.
\begin{theorem}\label{thm:perturbed_evolution}
  Suppose that $\{A(t)\}_{t \in [0,T]}$ satisfies the assumptions of Thm.~\ref{thm:evolution1}.
  Let $\{B(t)\}_{t \in [0,T]}$ be a family of bounded operators in~$\mathcal{X}$ such that $t \mapsto B(t)$ is strongly continuous with respect to $\mathcal{X}$ and $K = \sup_t \norm{B(t)}_{\mathcal{X}}$.
  Then there exists a unique evolution $V(t,s)$ for $\{A(t) + B(t)\}_{t \in [0,T]}$ satisfying the properties~\ref{item:evolution-appx:1}--\ref{item:evolution-appx:3}, but with the estimate $\norm{V(t,s)} \leq M \e^{(\beta + K M)(t-s)}$.

  Suppose that $\{A(t)\}_{t \in [0,T]}$ also satisfies the stronger assumptions of Thm.~\ref{thm:evolution_reflexive}, Thm.~\ref{thm:evolution2} or Thm.~\ref{thm:evolution3}, and $\{B(t)\}_{t \in [0,T]}$ preserves $\mathcal{Y}$ and its part $\{\tilde B(t)\}_{t \in [0,T]}$ in $\mathcal{Y}$ is bounded in~$\mathcal{Y}$ with $\tilde K = \sup_t \norm{B(t)}_{\mathcal{Y}}$.
  Then the evolution $V(t,s)$ satisfies the corresponding stronger properties, where the estimate~\eqref{eq:group2} needs to be multiplied by $\e^{\tilde K \tilde M (t-s)}$, and the estimate~\eqref{eq:U-bound-Y} by $\e^{\tilde K (t-s)}$.
\end{theorem}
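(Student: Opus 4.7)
The plan is to build the perturbed evolution $V(t,s)$ from the unperturbed evolution $U(t,s)$ supplied by Thm.~\ref{thm:evolution1} by solving a Duhamel integral equation. Specifically, I would seek $V(t,s)$ as the unique solution of
\[
V(t,s) = U(t,s) + \int_s^t U(t,r)\, B(r)\, V(r,s)\, \dif r,
\qquad 0 \leq s \leq t \leq T,
\]
in the strong operator topology. Since $t \mapsto B(t)$ is strongly $\mathcal{X}$-continuous and $(t,r) \mapsto U(t,r)$ is strongly continuous, the integrand is strongly continuous in $r$ and the integral is a well-defined Bochner integral for every $x \in \mathcal{X}$.

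Existence and the norm bound follow from Picard iteration. I would set $V_0(t,s) \defn U(t,s)$ and
\[
V_{n+1}(t,s)\, x \defn \int_s^t U(t,r)\, B(r)\, V_n(r,s)\, x\, \dif r, \qquad x \in \mathcal{X},
\]
and induct on $n$, using $\norm{U(t,s)}_{\mathcal{X}} \leq M\e^{\beta(t-s)}$ and $\norm{B(r)}_{\mathcal{X}} \leq K$ to derive
\[
\norm{V_n(t,s)\, x}_{\mathcal{X}} \leq M\e^{\beta(t-s)}\, \frac{\bigl(MK(t-s)\bigr)^n}{n!}\, \norm{x}_{\mathcal{X}}.
\]
The Dyson series $V(t,s) \defn \sum_{n\geq0} V_n(t,s)$ then converges absolutely in operator norm and satisfies $\norm{V(t,s)}_{\mathcal{X}} \leq M\e^{(\beta+KM)(t-s)}$. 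The initial condition $V(t,t) = \one$ and the composition law $V(t,r)V(r,s) = V(t,s)$ are read off the Duhamel equation (both sides solve the same linear equation in $t$ with identical data), and strong $\mathcal{X}$-continuity in $(t,s)$ follows termwise from the corresponding property of $U$ together with dominated convergence. For $y \in \mathcal{Y}$, I would use \eqref{eq:U_right_deriv_t}--\eqref{eq:U_deriv_s} and the Leibniz rule for Bochner integrals to differentiate the Duhamel equation, obtaining $\partial_t^+ V(t,s)y\big|_{t=s} = \bigl(A(s)+B(s)\bigr)y$ and $-\partial_s V(t,s)y = V(t,s)\bigl(A(s)+B(s)\bigr)y$.

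For the strengthened assertions under the hypotheses of Thm.~\ref{thm:evolution_reflexive}, Thm.~\ref{thm:evolution2} or Thm.~\ref{thm:evolution3}, the $\mathcal{Y}$-invariance hypothesis on $B(t)$ ensures that each $V_n(t,s)$ preserves $\mathcal{Y}$, and running the same iteration inside $\mathcal{Y}$ with $\norm{U(t,s)}_{\mathcal{Y}} \leq \tilde M\e^{\tilde\beta(t-s)}$ and $\norm{\tilde B(r)}_{\mathcal{Y}} \leq \tilde K$ produces
\[
\norm{V(t,s)}_{\mathcal{Y}} \leq \tilde M\e^{(\tilde\beta + \tilde K \tilde M)(t-s)},
\]
i.e.\ the bound~\eqref{eq:group2} multiplied by $\e^{\tilde K \tilde M(t-s)}$. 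In the time-dependent norm setting~\eqref{eq:U-bound-Y} of Thm.~\ref{thm:evolution2}, the same bookkeeping using $\norm{\,\cdot\,}_{\mathcal{Y},t}$ and the bound $\norm{\tilde B(r)}_{\mathcal{Y},r} \leq \tilde K$ (inserted between consecutive applications of $U$) produces the announced extra factor $\e^{\tilde K(t-s)}$; uniform convexity of $\mathcal{Y}$ yields the improved strong continuity exactly as in the unperturbed case. Uniqueness follows in all cases from a Gronwall argument applied to the difference of two candidate evolutions on $\mathcal{Y}$, followed by $\mathcal{X}$-density.

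The main technical point, and the step I expect to require the most care, is the justification of termwise strong differentiability of the Dyson series in $t$ and $s$, given that $B(t)$ is assumed only strongly $\mathcal{X}$-continuous rather than norm-continuous. Because $A(t)$ is unbounded on $\mathcal{X}$, one cannot differentiate under the integral sign naively; the cleanest route is to first establish the mild (integral) formulation on all of $\mathcal{X}$, and then promote it to a pointwise differential identity only on the smaller space $\mathcal{Y}$, where the hypothesis that $B(t)$ preserves $\mathcal{Y}$ places the integrand $U(t,r)B(r)V(r,s)y$ in a subspace on which the derivative formulas for $U$ are available, and where a uniform bound permits the interchange of $\partial_t$ with the Bochner integral via dominated convergence.
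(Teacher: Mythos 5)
Your proposal is correct and follows exactly the route the paper intends: the paper leaves the proof as an exercise (citing Thm.~4.5 of Kato's book) but displays the Volterra/Dyson series $V = U + U*B*U + U*B*U*B*U + \dotsb$, which is precisely the Duhamel/Picard construction you carry out, including the bound $M\e^{(\beta+KM)(t-s)}$ obtained by summing the factorial estimates. The technical caveats you flag (differentiating only on $\mathcal{Y}$, using the mild formulation on $\mathcal{X}$ first) are the right ones and are handled in the standard way.
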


The evolution $V(t,s)$ in the theorem above is given symbolically by
\begin{equation*}
  V = U + U * B * U + U * B * U * B * U + \dotsb,
\end{equation*}
where $\mathop{*} B \mathop{*}$ denotes a Volterra-type convolution with `density' $B(t)$.
For example,
\begin{equation*}
  (U * B * U)(t,r) = \int_r^t U(t,s) B(s) U(s,r)\, \dif s.
\end{equation*}

\section{Heinz--Kato inequality}
\label{appx:interpolation}

We recall the Heinz--Kato inequality \cite{heinz:heinz-kato,kato:heinz-kato}, which is an elementary but very useful result for the interpolation of operators:
\begin{theorem}\label{thm:heinz-kato}
  Suppose that $A, B$ are positive operators on Hilbert spaces $\mathcal{X}$, $\mathcal{Y}$, respectively.
  If $T$ is a bounded operator from~$\mathcal{X}$ to~$\mathcal{Y}$ such that $T (\Dom A) \subset \Dom B$ and
  \begin{equation*}
    \norm{T x} \leq C_0 \norm{x},
    \quad
    \norm{B T x} \leq C_1 \norm{A x},
  \end{equation*}
  for $x \in \Dom A$, then
  \begin{equation}\label{eq:heinz-kato}
    \norm{B^\lambda T x} \leq C_0^\lambda C_1^{1-\lambda} \norm{A^\lambda x},
    \quad
    \lambda \in [0,1].
  \end{equation}
\end{theorem}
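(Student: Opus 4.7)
The plan is to prove~\eqref{eq:heinz-kato} by complex interpolation, \viz, Hadamard's three-lines theorem applied to a carefully constructed holomorphic scalar function on the closed strip $\overline{S} \defn \{z \in \CC : 0 \leq \Re z \leq 1\}$. The two hypotheses $\norm{Tx} \leq C_0 \norm{x}$ and $\norm{BTx} \leq C_1 \norm{Ax}$ will control the function on the boundary lines $\Re z = 0$ and $\Re z = 1$, respectively; since $A$ and $B$ are positive self-adjoint, the operators $A^{it}$ and $B^{it}$ act as unitaries by the spectral theorem, so translating along the imaginary direction on a boundary line does not affect the relevant norms.

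Concretely, I would fix $x \in \Dom(A)$ and $y \in \mathcal{Y}$, regularize by $A_\varepsilon \defn A + \varepsilon$ so that the full complex power $A_\varepsilon^z$ is well-defined on all of~$\mathcal{X}$ for every $z \in \overline{S}$, and set
\begin{equation*}
  F_\varepsilon(z) \defn e^{\delta(z^2 - \lambda^2)}\, \bigl( B^z\, T\, A_\varepsilon^{\lambda - z} x\,\big|\, y \bigr),
\end{equation*}
where $\delta > 0$ is small enough to make $\abs{F_\varepsilon(z)}$ vanish as $\abs{\Im z} \to \infty$ uniformly in the strip (this lets us invoke the three-lines theorem without a Phragmén--Lindelöf-type growth hypothesis). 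Then $F_\varepsilon(\lambda) = (B^\lambda T x \,|\, y)$. On $\Re z = 0$ the unitarity of $B^{it}$ and $A_\varepsilon^{it}$, combined with $\norm{Tu} \leq C_0 \norm{u}$, gives $\abs{F_\varepsilon(it)} \leq C_0 \norm{A_\varepsilon^\lambda x} \norm{y}$. On $\Re z = 1$ I would factor
\begin{equation*}
  B^{1+it}\, T\, A_\varepsilon^{\lambda - 1 - it} = B^{it} \cdot (BT A_\varepsilon^{-1}) \cdot A_\varepsilon^{-it} A_\varepsilon^{\lambda},
\end{equation*}
and use the hypothesis together with the trivial bound $\norm{A A_\varepsilon^{-1}} \leq 1$ to obtain $\abs{F_\varepsilon(1+it)} \leq C_1 \norm{A_\varepsilon^\lambda x} \norm{y}$. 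The three-lines theorem then delivers $\abs{F_\varepsilon(\lambda)} \leq C_0^{1-\lambda} C_1^\lambda \norm{A_\varepsilon^\lambda x} \norm{y}$, and taking the supremum over $\norm{y} \leq 1$ followed by $\varepsilon \to 0^+$ and $\delta \to 0^+$ yields~\eqref{eq:heinz-kato}; the convergence $A_\varepsilon^\lambda x \to A^\lambda x$ for $x \in \Dom(A)$ is controlled by the functional calculus.

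The main obstacle I anticipate is the rigorous verification of holomorphy of $F_\varepsilon$ on the open strip together with the domain inclusions implicit in the boundary estimates: one must check that $T A_\varepsilon^{\lambda - z} x$ always lies in $\Dom(B^z)$ for $z \in \overline{S}$, which relies on interpolating the assumed inclusion $T(\Dom A) \subset \Dom B$ with the trivial $T\mathcal{X} \subset \mathcal{Y}$. The cleanest way to sidestep these technicalities is to first prove the estimate for $x$ in a dense subspace where all complex powers act boundedly -- \eg, $x$ in the range of a spectral projection of $A$ onto $[\eta,\eta^{-1}]$ for some $\eta > 0$ -- so that $F_\varepsilon$ is manifestly entire, and then extend to arbitrary $x \in \Dom(A^\lambda)$ via density and the already-established bound.
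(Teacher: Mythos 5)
The paper does not prove Thm.~\ref{thm:heinz-kato} at all: it merely recalls the inequality and cites the original papers of Heinz and Kato, so there is no in-paper argument to compare yours against. Your proposal is the classical complex-interpolation proof (Hadamard three-lines theorem applied to $z \mapsto e^{\delta(z^2-\lambda^2)}\bigl(B^z T A_\varepsilon^{\lambda-z}x \,\big|\, y\bigr)$, with $A^{\im t}$, $B^{\im t}$ acting as contractions), and it is sound: the boundary estimates are right, the $e^{\delta z^2}$ damping legitimately replaces a Phragm\'en--Lindel\"of hypothesis, and your proposed reduction to $x \in \Ran \one_{[\eta,\eta^{-1}]}(A)$ does dispose of the domain/holomorphy issues, since then $A_\varepsilon^{\lambda-z}x \in \Dom A$ for all $z$ in the closed strip, whence $T A_\varepsilon^{\lambda-z}x \in \Dom B \subset \Dom B^{\Re z}$ and $z \mapsto B^z w$ is holomorphic on the open strip for $w \in \Dom B$. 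Two small points deserve attention. First, the constants: your three-lines argument correctly yields $C_0^{1-\lambda} C_1^{\lambda}$, which does \emph{not} literally match the display~\eqref{eq:heinz-kato}, where the exponents read $C_0^{\lambda} C_1^{1-\lambda}$; checking the endpoints $\lambda = 0,1$ against the hypotheses shows the paper's display has the exponents transposed (a typo), so you should state explicitly that you prove the corrected form rather than claim your bound ``yields~\eqref{eq:heinz-kato}'' verbatim. Second, in the final density step you should treat $\Ker A$ separately (for $x \in \Ker A$ the hypothesis gives $BTx = 0$, hence $Tx \in \Ker B$ and $B^\lambda T x = 0$, so the inequality holds trivially there), since $\one_{[1/n,n]}(A)x$ converges to the projection of $x$ onto $(\Ker A)^{\perp}$, not to $x$ itself.
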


\section{Finite speed of propagation}
\label{appx:finite-speed}

In this appendix we prove the finite speed of propagation for solutions of the Klein--Gordon equation with coefficients of low regularity.

In this section we prefer to work with the Klein--Gordon equation in the scalar formalism, given by~\eqref{eq:klein-gordon}, which can be locally written as
\begin{equation}\label{eq:klein-gordon-scalar}
  \KG u \defn -g^{\mu\nu} (\nabla_\mu - \im A_\mu) (\nabla_\nu - \im A_\nu) u + Y u
\end{equation}
with pseudo-Riemannian metric $g$ and the corresponding Levi-Civita derivative $\nabla$, vector potential $A$, and scalar potential $Y$.
Our standing assumptions in this appendix are as follows:
\begin{assumption}\label{asm:finite-speed}
  $M = \RR \times \Sigma$ is equipped with a continuous Lorentzian metric $g = -\alpha^2\, \dif t^2 + g_\Sigma$, where $\alpha > 0$ and $g_\Sigma$ are continuous, and $g_\Sigma$ restricts to a family of Riemannian metrics on $\Sigma$.
  (Recall that every globally hyperbolic spacetime can be brought into this form.)
  We assume that $A_\mu(t) \in L^\infty_\loc(\Sigma)$ for all $t$, and $A_\mu, \dot{A}_\mu, Y \in L^\infty_\loc(M)$.
  Moreover, in every compact neighbourhood $U \subset M$ there is $C_g > 0$ such that
  \begin{equation*}
    \abs{\dot{g}^{\mu\nu} X_\mu X_\nu} \leq C_g \abs{g^{\mu\nu} X_\mu X_\nu}
  \end{equation*}
  almost everywhere in~$U$ for all covectors $X$.
\end{assumption}

Under these assumption we will show the following thorem on the finite speed of propagation:
\begin{theorem}\label{thm:finite-speed}
  If $u \in C^1(\RR; L^2_\loc(\Sigma))$ with $\partial_i u \in C(\RR; L^2_\loc(\Sigma))$ and $\KG u \in L^2_\loc(M)$, then
  \begin{equation*}
    \supp u \subset J\Bigl( \supp \KG u \cup \{t\}{\times}\bigl(\supp u(t)\cup \supp \dot{u}(t)\bigr)\Bigr)
  \end{equation*}
  for any $t \in \RR$.
  That is, $u$ is supported in the causal shadow of the union of $\KG u$ and of the support of its Cauchy data on $\{t\}\times\Sigma$.
\end{theorem}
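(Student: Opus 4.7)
The plan is to argue by contradiction using a compact lens-shaped region and a Gronwall-type energy estimate. Assume there is a point $p \in \supp u$ not in $J(S)$, where $S \defn \supp \KG u \cup \{t\}\times(\supp u(t)\cup\supp\dot u(t))$. Without loss of generality $p$ lies to the future of $\{t\}\times\Sigma$ (the past case is symmetric). Using the continuity of $g$ and compactness, I would first construct a compact ``lens'' $\Omega \subset M$ whose past boundary $\Omega_t$ is a compact piece of $\{t\}\times\Sigma$ disjoint from $\supp u(t)\cup\supp\dot u(t)$, whose future/lateral boundary $N$ is a piecewise Lipschitz spacelike hypersurface containing $p$ in its interior, and with $\Omega \cap \supp \KG u = \emptyset$. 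Compactness, continuity of $g$, and the fact that $p \notin J^+(S)$ allow me to build $N$ by tilting small pieces of truncated past null cones from points slightly to the future of $p$ inward by a fixed amount; this gives uniformly spacelike slabs even though $g$ is only continuous. The slices $\Omega_{t'} = \Omega \cap \{t'\}\times\Sigma$ then foliate $\Omega$ by compact sets whose lateral boundaries recede strictly slower than null.

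Next, define the Klein--Gordon energy on the slices,
\begin{equation*}
E(t') \defn \int_{\Omega_{t'}} \Bigl( \alpha^{-1} \abs[\big]{\dot u - \beta^i \partial_i u}^2 + \alpha\, g_\Sigma^{ij}\, \overline{(\partial_i - \im A_i)u}\,(\partial_j - \im A_j)u + \alpha\, |u|^2 \Bigr) |g_\Sigma|^{1/2}\, dx,
\end{equation*}
which up to a bounded prefactor equals the contraction $T_{\mu\nu}[u]\, n^\mu n^\nu$ of the standard stress-energy tensor with the future-directed unit normal to the slice. The main estimate I would derive is
\begin{equation*}
E(t') \le E(t) + \int_t^{t'} \bigl( C_1 E(s) + C_2 \norm{\KG u}_{L^2(\Omega_s)}^2 \bigr) ds,\qquad t \le t' \le t_p,
\end{equation*}
with constants $C_1, C_2$ controlled by the $L^\infty_\loc$-norms of $g, \dot g, A, \dot A, Y$ over $\Omega$. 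This comes from differentiating $E(t')$ in $t'$, using the Klein--Gordon equation in the form~\eqref{eq:klein-gordon-scalar} to trade second time derivatives for spatial ones plus $\KG u$, integrating by parts in $x$, and observing that the boundary flux through the receding lateral boundary of $\Omega_{t'}$ has the right sign because $N$ is spacelike; this is precisely the dominant-energy-type calculation, and the Lipschitz regularity of $N$ is enough since Rademacher's theorem supplies an almost-everywhere unit conormal. The hypothesis $\abs{\dot g^{\mu\nu} X_\mu X_\nu} \le C_g \abs{g^{\mu\nu} X_\mu X_\nu}$ is exactly what is needed to dominate the terms coming from differentiating $g_\Sigma$, $\alpha$ and $\beta$ in time by $E(t')$ itself. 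Once the estimate is in hand, $E(t)=0$ by the choice of $\Omega_t$ and $\KG u=0$ on $\Omega$, so Gronwall forces $E \equiv 0$ on $[t, t_p]$. This yields $u \equiv 0$ on $\Omega$, contradicting $p \in \supp u$.

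The main obstacle is rigorously justifying the energy identity at the assumed low regularity: $g$ and $g_\Sigma$ are only continuous, $A,\dot A, Y \in L^\infty_\loc$, and $u$ is only $C^1$ in $t$ with values in $L^2_\loc(\Sigma)$, so $\partial_t^2 u$ does not exist classically and the spatial integration by parts cannot be performed pointwise. I would handle this by a standard mollification-in-$\vec x$ argument: convolve $u$ with a compactly supported smooth kernel $\chi_\varepsilon$, commute $\chi_\varepsilon *$ with the (constant-coefficient) operators $\partial_t, \partial_i$ exactly and with $g^{\mu\nu}(x), A_\mu(x), Y(x)$ up to Friedrichs-type commutators $[\chi_\varepsilon*, \cdot]$ that tend to zero strongly in $L^2_\loc$ thanks to the $L^\infty_\loc$ bounds. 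The energy identity for the mollified field is a genuine pointwise computation, and passing to the limit $\varepsilon \to 0$ using the assumed regularity of $u$ and of the coefficients yields the estimate above. A minor additional point is that $N$ being only Lipschitz means the lateral flux integral must be interpreted via the almost-everywhere unit normal furnished by Rademacher's theorem, but since we only need its \emph{sign} in the estimate this is harmless.
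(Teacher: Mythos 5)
Your strategy is essentially the one the paper uses: an energy quantity integrated over the slices of a truncated cone, a divergence/Stokes identity in which the flux through the lateral boundary has a definite sign because that boundary is (effectively) spacelike, and Gr\"onwall to propagate the vanishing of the energy. The hypothesis on $\dot g^{\mu\nu}$ is indeed used exactly as you say, and the paper likewise works with a modified (manifestly positive) energy density, so the analytic core of your argument matches.

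The one place where your sketch is materially thinner than the paper's proof is the geometric setup for a metric that is only \emph{continuous}. You build the spacelike ``lens'' by hand, tilting truncated past null cones inward, and you use ``$p \notin J(S)$'' to arrange that the lens avoids $S$. For a merely continuous Lorentzian metric this is delicate: $\partial J^\pm_g$ need not be a Lipschitz topological hypersurface, $J^\pm_g$ of a compact set need not be closed, and exponential-map or normal-coordinate constructions are unavailable. The paper sidesteps all of this by invoking Chru\'sciel--Grant: it chooses a \emph{smooth} auxiliary metric $\hat g \succ g$ with strictly larger lightcones, integrates $\partial_\mu \tilde{\mathcal{P}}^\mu$ over the genuine $\hat g$-backward lightcone (whose boundary \emph{is} Lipschitz, so Stokes applies), observes that the $\hat g$-null mantle is $g$-spacelike so the flux is nonnegative, and finally recovers the sharp statement with $J_g$ from the identity $J^\pm_g(\Omega) = \bigcap_{\hat g \succ g} J^\pm_{\hat g}(\Omega)$. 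If you want your lens construction to be airtight at this regularity, the cleanest route is precisely this smooth-approximation step; otherwise you must justify independently that your tilted-cone boundary is uniformly spacelike and Lipschitz and that it can be chosen to avoid $S$ whenever $p \notin J_g(S)$. Your mollification-in-$\vec x$/Friedrichs-commutator treatment of the low regularity of $u$ and of the coefficients is fine and in fact more explicit than what the paper writes. One last small point: you should place $p$ in the interior of the region on which the energy vanishes (or run the argument for all nearby apexes), since $u \equiv 0$ on $\Omega$ alone does not exclude $p \in \supp u$ when $p$ sits on $\partial\Omega$.
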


\eqref{eq:klein-gordon-scalar} can be obtained via the Euler--Lagrange equations from the \emph{Lagrangian density}
\begin{equation*}
  \mathcal{L}[u] \defn -\abs{g}^\frac12 \Bigl( \bigl((\partial_\mu + \im A_\mu) \conj{u}\bigr) g^{\mu\nu} \bigl((\partial_\nu - \im A_\nu) u\bigr) + Y \abs{u}^2 \Bigr).
\end{equation*}
To the Lagrangian density $\mathcal{L}$ we can associate the \emph{momentum flux density}
\begin{equation*}
  \mathcal{P}^\mu[u] \defn -\delta^\mu_0 \mathcal{L}[u] + \frac{\partial\mathcal{L}[u]}{\partial(\partial_\mu \conj{u})} \partial_t \conj{u} + \frac{\partial\mathcal{L}[u]}{\partial(\partial_\mu u)} \partial_t u.
\end{equation*}
If the action for $\mathcal{L}$ is invariant under infinitesimal time-translations, Noether's theorem says that the momentum flux is conserved.
If the action is not time-translation invariant, $\mathcal{P}$ is in general not conserved, but it is still a useful quantity.

The energy density $\mathcal{E}=\mathcal{P}^0$ obtained from $\mathcal{L}$ is not necessarily positive.
Therefore, for technical reasons it will be convenient to replace $\mathcal{L}$ with the modified Lagrangian density
\begin{equation*}
  \tilde{\mathcal{L}}[u] \defn -\abs{g}^\frac12 \Bigl( \bigl((\partial_\mu + \im A_\mu) \conj{u}\bigr) g^{\mu\nu} \bigl((\partial_\nu - \im A_\nu) u\bigr) - (1 + \alpha^{-2} A_0^2) \abs{u}^2 \Bigr),
\end{equation*}
denoting the corresponding momentum flux density $\tilde{\mathcal{P}}$.
Using the special form of the metric, we find the energy density
\begin{align*}
  \tilde{\mathcal{E}}[u] \defn \tilde{\mathcal{P}}^0[u]
  &= \abs{g}^\frac12 \Bigl( \alpha^{-2} \abs{\dot u}^2 + \bigl((\partial_i + \im A_i) \conj{u}\bigr) g_\Sigma^{ij} \bigl((\partial_j -\im A_j) u\bigr) + \abs{u}^2 \Bigr)
\end{align*}
and the spatial momentum flux density
\begin{align*}
  \tilde{\mathcal{P}}^i[u] = \mathcal{P}^i[u]
  &= -\abs{g}^\frac12 \Bigl( \dot{\conj{u}} g_\Sigma^{ij} \bigl((\partial_j - \im A_j) u\bigr) + \dot{u} g_\Sigma^{ij} \bigl((\partial_j + \im A_j) \conj{u}\bigl) \Bigr)
\end{align*}

Below we will integrate $\partial_\mu \tilde{\mathcal{P}}^\mu$ over a region which is delimited by two constant-time surfaces and the backward lightcone of a point as described in Fig.~\ref{fig:energy-ineq}.
To rewrite this integral as an integral over the boundary of said region via Stokes' theorem, it is useful to assume that $\partial J^\pm_g(\Omega)$ is a Lipschitz topological hypersurface, see Thm.~3.9 of~\cite{beem}.
Here we denoted by $J^\pm_g(\Omega)$ the causal future ($+$) or causal past ($-$) of $\Omega$, \ie, the set of points which can be reached from $\Omega$ by future- resp. past-directed causal curves with respect to the metric $g$.
Moreover, we write $J_g(\Omega) = J^+_g(\Omega) \cup J^-_g(\Omega)$.

If $g$ is not smooth (or at least $C^2$), it is not guaranteed that $\partial J^\pm_g(\Omega)$ is a Lipschitz topological hypersurface.
However, we can approximate $g$ by smooth metrics:

If a Lorentzian metric~$\hat{g}$ has strictly larger lightcones than~$g$, \ie, each non-vanishing $g$-causal vector $X^\mu$ ($g_{\mu\nu} X^\mu X^\nu \leq 0$) is $\hat{g}$-timelike ($\hat{g}_{\mu\nu} X^\mu X^\nu < 0$), then we write
\begin{equation*}
  \hat{g} \succ g.
\end{equation*}
As shown in Prop.~1.2 of~\cite{chrusciel}, there always exists a \emph{smooth} Lorentzian metric~$\hat{g}$ with strictly larger lightcones which approximates~$g$ arbitrarily well.

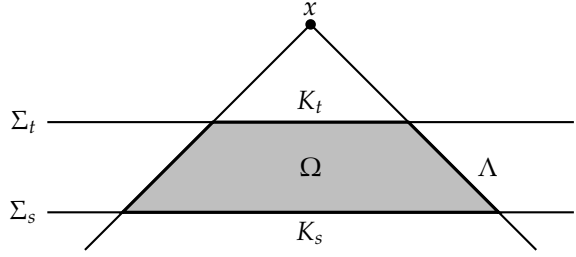
\begin{figure}
  \centering
  \begin{tikzpicture}
    \draw[fill=gray!50,very thick] (-1.3,1.2) -- (1.3,1.2) -- (2.5,0) -- (-2.5,0) -- cycle;
    \node at (0,.6) {$\Omega$};
    \draw[thick] (-3.5,0) -- (3.5,0);
    \node[left] at (-3.5,0) {$\Sigma_s$};
    \node[below] at (0,0) {$K_s$};
    \draw[thick] (-3.5,1.2) -- (3.5,1.2);
    \node[left] at (-3.5,1.2) {$\Sigma_t$};
    \node[above] at (0,1.2) {$K_t$};
    \draw[thick] (-3,-.5) -- (0,2.5) -- (3,-.5);
    \node[right] at (2.1,.6) {$\Lambda$};
    \fill (0,2.5) circle (.07);
    \node[above] at (0,2.5) {$x$};
  \end{tikzpicture}
  \caption{\label{fig:energy-ineq}%
  The truncated cone given by the backward lightcone $J^-_{\hat g}(x)$ of a point, and two constant-time surfaces $\Sigma_t = \{t\} \times \Sigma$ and $\Sigma_s$ (with $t > s$).
  We write $K_t = J^-_{\hat g}(x) \cap (\{t\} \times \Sigma)$ and $K_s$ for the caps, and $\Lambda = \partial J^-_{\hat g}(x) \cap ([s,t] \times \Sigma)$ for the mantle of the truncated cone $\Omega = J^-_{\hat g}(x) \cap ([s,t] \times \Sigma)$.
  }
\end{figure}

\begin{proposition}\label{prop:energy-ineq}
  Let $\hat{g} \succ g$ be smooth and consider the situation depicted in Fig.~\ref{fig:energy-ineq}.
  Then there exists $C > 0$ such that
  \begin{equation}\label{eq:energy_ineq}
    \e^{C(s-t)}\! \int_{K_t} \tilde{\mathcal{E}}[u](t) \leq \int_{K_s} \tilde{\mathcal{E}}[u](s) + \int_\Omega \abs{g}^\frac12 \abs{\KG u}^2.
  \end{equation}
  for all $u \in C^1(\RR; L^2_\loc(\Sigma))$ with $\partial_i u \in C(\RR; L^2_\loc(\Sigma))$ and $\KG u \in L^2_\loc(M)$,
\end{proposition}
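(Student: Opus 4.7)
The plan is to derive a pointwise conservation-type identity for the current $\tilde{\mathcal{P}}^\mu[u]$, integrate it over the truncated cone $\Omega$ of Figure~\ref{fig:energy-ineq} via Stokes' theorem, discard the non-negative flux through the mantle~$\Lambda$, and close the resulting integral inequality with a Gronwall argument.

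First I would establish the Noether-type identity
\[
  \partial_\mu \tilde{\mathcal{P}}^\mu[u] = -2\Re\bigl(\partial_t \bar u \cdot |g|^\frac12 (\KG u + M u)\bigr) + R[u],
\]
where $M$ is the bounded multiplication operator accounting for the replacement of the scalar potential~$Y$ by $-(1+\alpha^{-2}A_0^2)$ in the passage from $\mathcal{L}$ to $\tilde{\mathcal{L}}$, and $R[u]$ is a quadratic form in $(u,Du)$ whose coefficients are the explicit $t$-derivatives of $g^{\mu\nu}$, $A_\mu$, $\dot A_\mu$ and $\alpha$. This identity is obtained by formally contracting the Euler--Lagrange expression for $\tilde{\mathcal{L}}$ with $\partial_t u$ (and conjugate); the ``non-conservation'' terms reflect both the failure of time-translation invariance and the fact that $u$ need not satisfy the equation. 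Assumption~\ref{asm:finite-speed} guarantees that all coefficients appearing in $R$ lie in $L^\infty_\loc(M)$ and, through the spectral bound on $\dot g^{\mu\nu}$, that we have the pointwise estimate $|R[u]| \leq C_0\, \tilde{\mathcal{E}}[u]$ on any compact neighbourhood, and likewise $|M|\,|u|^2 \leq C_0\,\tilde{\mathcal{E}}[u]/|g|^{1/2}$.

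Next I apply the divergence theorem on the Lipschitz region $\Omega$ (Lipschitz because $\hat g$ is smooth, cf.~Thm.~3.9 of~\cite{beem}), obtaining
\[
  \int_{K_t} \tilde{\mathcal{E}}[u](t) - \int_{K_s} \tilde{\mathcal{E}}[u](s) + \int_\Lambda n_\mu \tilde{\mathcal{P}}^\mu[u]
  = \int_\Omega \partial_\mu \tilde{\mathcal{P}}^\mu[u],
\]
where $n$ is the outward conormal on the mantle. Since $\hat g \succ g$ and $\Lambda \subset \partial J^-_{\hat g}(x)$ is $\hat g$-null, $n$ is strictly $g$-timelike and future-directed. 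The key observation is that $\tilde{\mathcal{L}}$ was engineered precisely so that $\tilde{\mathcal{P}}^\mu$ satisfies the dominant-energy-condition inequality $n_\mu \tilde{\mathcal{P}}^\mu[u] \geq 0$ for every such $n$: substituting the explicit formulae and completing the square with respect to a $g$-orthonormal frame adapted to $n$ reveals $n_\mu \tilde{\mathcal{P}}^\mu[u]$ as a sum of squares with non-negative coefficients, all strictly positive thanks to the $+|u|^2$ mass term. Hence the mantle contribution may be dropped, giving
\[
  \int_{K_t}\!\tilde{\mathcal{E}}[u](t) - \int_{K_s}\!\tilde{\mathcal{E}}[u](s) \leq \int_\Omega |g|^\frac12 |\KG u|^2 + C \int_\Omega \tilde{\mathcal{E}}[u],
\]
after using $2|\Re(\partial_t \bar u \cdot \KG u)|\,|g|^\frac12 \leq |g|^\frac12|\KG u|^2 + \alpha^2 \tilde{\mathcal{E}}[u]$ and absorbing $R$, $M$ and the $\alpha^2$-factor into the single constant $C$. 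Setting $E(\tau) \defn \int_{K_\tau}\tilde{\mathcal{E}}[u](\tau)$ and $F \defn \int_\Omega |g|^\frac12|\KG u|^2$, the nested-cone structure $K_\tau \subset K_s$ for $\tau \in [s,t]$ yields $E(t) \leq E(s) + F + C \int_s^t E(\tau)\,\dif\tau$, and Gronwall's lemma gives $E(t) \leq \e^{C(t-s)}(E(s)+F)$, which is exactly~\eqref{eq:energy_ineq}.

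\textbf{Main obstacle.} The chief difficulty is the low regularity of $u$ and of the coefficients: both the pointwise conservation identity and the application of Stokes' theorem nominally require $u \in C^2$ and smooth coefficients. I would handle this by a standard mollification in the spatial variables, $u_\varepsilon \defn \chi_\varepsilon * u$ for a fixed spacelike foliation, apply the smooth-case argument to $u_\varepsilon$ (with $g$ replaced by the smooth approximation $\hat g$ only where needed to define~$\Omega$), and pass to the limit $\varepsilon \to 0$ using $u_\varepsilon \to u$ in $C^1(\RR;L^2_\loc) \cap C(\RR;H^1_\loc)$ together with a Friedrichs commutator estimate to show $\KG u_\varepsilon \to \KG u$ in $L^2_\loc(M)$; this last step is the one that genuinely uses the $L^\infty_\loc$-regularity of the coefficients. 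Theorem~\ref{thm:finite-speed} then follows from the proposition by taking $\Omega$ to be the backward causal shadow of an arbitrary point outside the alleged support of~$u$.
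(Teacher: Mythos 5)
Your proposal is correct and follows essentially the same route as the paper: the Noether-type identity for $\partial_\mu\tilde{\mathcal{P}}^\mu[u]$ with remainder terms controlled by $\tilde{\mathcal{E}}[u]$ via Assumption~\ref{asm:finite-speed}, Stokes' theorem on the truncated cone, non-negativity of the flux through the $\hat g$-null mantle (since the outward conormal is future-directed $g$-causal), and Gr\"onwall. Your explicit mollification step to justify the identity at the stated regularity is a reasonable supplement to an issue the paper's proof passes over silently, but it does not change the argument.
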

\begin{proof}
  We derive
  \begin{align*}
    \partial_\mu \tilde{\mathcal{P}}^\mu[u]
    &= -\partial_t \tilde{\mathcal{L}}[u] + \biggl( \partial_\mu \frac{\partial\tilde{\mathcal{L}}[u]}{\partial(\partial_\mu \conj{u})} \biggr) \dot{\conj{u}} + \frac{\partial\tilde{\mathcal{L}}}{\partial(\partial_\mu \conj{u})} \partial_\mu \partial_t \conj{u} + \biggl( \partial_\mu \frac{\partial\tilde{\mathcal{L}}}{\partial(\partial_\mu u)} \biggr) \dot{u} + \frac{\partial\tilde{\mathcal{L}}[u]}{\partial(\partial_\mu u)} \partial_\mu \partial_t u \\
    &= -\partial_t \tilde{\mathcal{L}}[u] + \biggl( \abs{g}^\frac12 \tilde\KG u + \frac{\partial\tilde{\mathcal{L}}[u]}{\partial\conj{u}} \biggr) \dot{\conj{u}} + \frac{\partial\tilde{\mathcal{L}}[u]}{\partial(\partial_\mu \conj{u})} \partial_t \partial_\mu \conj{u} + \biggl( \abs{g}^\frac12 \conj{\tilde\KG u} + \frac{\partial\tilde{\mathcal{L}}[u]}{\partial u} \biggr) \dot{u} \\&\quad + \frac{\partial\tilde{\mathcal{L}}[u]}{\partial(\partial_\mu u)} \partial_t \partial_\mu u \\
    &= -2 \abs{g}^\frac12 \Re(\dot{\conj{u}} \tilde\KG u) - \frac{\partial\tilde{\mathcal{L}}[u]}{\partial g^{\mu\nu}} \dot{g}^{\mu\nu} - \frac{\partial\tilde{\mathcal{L}}[u]}{\partial A_\mu} \dot{A}_\mu - \frac{\partial\tilde{\mathcal{L}}[u]}{\partial\abs{g}} \partial_t \abs{g} \\
    &= \abs{g}^\frac12 \Bigl(2 \Re(\dot{\conj{u}} \tilde\KG u) + \bigl((\partial_\mu + \im A_\mu) \conj{u})\bigr) \dot{g}^{\mu\nu} \bigl((\partial_\nu - \im A_\nu) u\bigr) - 2 \alpha^{-3} \dot\alpha A_0^2 \abs{u}^2 \\&\qquad - 2 \Im\bigl( \conj{u} \dot{A}_\mu g^{\mu\nu} (\partial_\nu - \im A_\nu) u \bigr) + 2 \alpha^{-2} A_0 \dot{A}_0 \abs{u}^2 - \frac12 \abs{g}^{-1} (\partial_t \abs{g}) \tilde{\mathcal{L}}[u] \Bigr).
  \end{align*}
  where, in the second step, we used the Euler--Lagrange equations with
  \begin{equation*}
    \tilde\KG = \KG - Y + 1 + \alpha^{-2} A_0^2
  \end{equation*}
  being the Klein--Gordon operator associated to $\tilde{\mathcal{L}}$.
  Estimating each term separately using our assumptions and the Cauchy--Schwarz inequality yields
  \begin{equation*}
    \partial_\mu \tilde{\mathcal{P}}^\mu[u] \leq \abs{g}^\frac12 \Bigl( \abs{\KG u}^2 + C_1 \alpha^{-2} \abs{\dot{u}}^2 + C_2 \bigl((\partial_i + \im A_i) \conj{u}\bigr) g_\Sigma^{ij} \bigl((\partial_j + \im A_j) u\bigr) + C_3 \abs{u}^2 \Bigr)
  \end{equation*}
  for $C_1,C_2,C_3 > 0$ which do not depend on~$u$.
  Therefore we find
  \begin{equation}\label{eq:energy_ineq-1}
    \int_\Omega \partial_\mu \tilde{\mathcal{P}}^\mu[u] \leq \int_\Omega \bigl( \abs{g}^\frac12 \abs{\KG u}^2 + C \tilde{\mathcal{E}}[u] \bigr)
  \end{equation}
  for some constant $C > 0$.

  By Stokes' theorem,
  \begin{equation}\label{eq:energy_ineq-2}
    \int_{\Omega} \partial_\mu \tilde{\mathcal{P}}^\mu[u]
    = \int_{\partial\Omega} n_\mu \tilde{\mathcal{P}}^\mu[u]
    = \int_{K_t} \tilde{\mathcal{E}}[u](t) - \int_{K_s} \tilde{\mathcal{E}}[u](s) + \int_\Lambda n_\mu \tilde{\mathcal{P}}^\mu[u],
  \end{equation}
  where $n$ is the outward-directed normal field to~$\partial\Omega$.
  For any future-directed causal covector field~$\xi$ (\ie, $g^{\mu\nu} \xi_\mu \xi_\nu \leq 0$ and $\xi_0 \geq 0$) with $\abs{\vec\xi} = (g_\Sigma^{ij} \xi_i \xi_j)^\frac12$,
  \begin{align*}
    \xi_\mu \tilde{\mathcal{P}}^\mu[u]
    &= \xi_0 \tilde{\mathcal{E}}[u] - 2 \abs{g}^\frac12 \Re\bigl(\xi_i \dot{\conj{u}} g_\Sigma^{ij} (\partial_j - \im A_j) u\bigr) \\
    &\geq \xi_0 \tilde{\mathcal{E}}[u] - \abs{g}^\frac12 \alpha \abs{\vec\xi} \Bigl( \alpha^{-2} \abs{\dot u}^2 + \bigl((\partial_i + \im A_i u)\bigr) g_\Sigma^{ij} \bigl((\partial_j - \im A_j) u\bigr) \Bigr) \\
    &\geq (\xi_0 - \alpha \abs{\vec\xi}) \tilde{\mathcal{E}}[u]
    \geq 0
  \end{align*}
  almost everywhere.
  Consequently, we can estimate the last term in~\eqref{eq:energy_ineq-2} as $\int_\Lambda n_\mu \tilde{\mathcal{P}}^\mu \geq 0$.

  Combining~\eqref{eq:energy_ineq-1} and~\eqref{eq:energy_ineq-2}, we obtain
  \begin{equation*}
    \int_{K_t} \tilde{\mathcal{E}}[u](t) - \int_{K_s} \tilde{\mathcal{E}}[u](s) \leq \int_s^t \left( \int_{K_r} \bigl( \abs{g}^\frac12 \abs{\KG u(r)}^2 + C \tilde{\mathcal{E}}[u](r) \bigr) \right)\, \dif r,
  \end{equation*}
  and thus~\eqref{eq:energy_ineq} by Grönwall's inequality.
\end{proof}

Now, using the proposition above, we can show the finite speed of propagation:
\begin{theorem}
  If $u \in C^1(\RR; L^2_\loc(\Sigma))$ with $\partial_i u \in C(\RR; L^2_\loc(\Sigma))$ and $\KG u \in L^2_\loc(M)$, then
  \begin{align}
    \supp u \cap M_\pm &\subset J^\pm_g\Bigl( (\supp \KG u \cap M_\pm) \cup \{t\}{\times}\bigl(\supp u(t) \cup \supp \dot{u}(t) \bigr)\Bigr), \label{eq:supp} \\
    \supp u &\subset J_g\Bigl( \supp \KG u \cup \{t\}{\times}\bigl(\supp u(t) \cup \supp \dot{u}(t)\bigr)\Bigr), \notag
  \end{align}
  for any $t \in \RR$, where $M_+ = [t,+\infty\mathclose{[} \times \Sigma$, $M_- = \mathopen{]}-\infty,t] \times \Sigma$.
\end{theorem}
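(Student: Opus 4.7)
The plan is to prove the sharper statement for $M_+$ by contrapositive: I fix $x=(\tau,\vec x)\in M_+\setminus J^+_g(S_+)$, where
\[
S_+ \defn (\supp \KG u \cap M_+) \cup \{t\}{\times}\bigl(\supp u(t) \cup \supp \dot{u}(t)\bigr),
\]
and use the energy estimate of Proposition~\ref{prop:energy-ineq} over a backward cone with apex at $x$ to deduce $u\equiv 0$ in a neighbourhood of $x$. The $M_-$ inclusion follows by applying the $M_+$ argument to the time-reversed Klein--Gordon operator, and the global statement \eqref{eq:supp} is obtained by taking the union, noting $J_g=J^+_g\cup J^-_g$.

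First I would locate a smooth approximation $\hat g \succ g$, close enough to $g$, such that the backward $\hat g$-cone $J^-_{\hat g}(x)$ (restricted to $[t,\tau]\times\Sigma$) still does not meet $S_+$. The existence of such $\hat g$ rests on Prop.~1.2 of~\cite{chrusciel} combined with the fact that $S_+$ is closed and $x\notin J^+_g(S_+)$: since $J^-_g(x)\cap S_+=\emptyset$ and $S_+$ is closed, compactness of $[t,\tau_0]\times K$ (for a bounded portion of $\hat g$-cones with $\tau_0<\tau$ slightly) together with continuity of the causal relation under $C^0$-perturbations of $g$ allow one to choose $\hat g$ with strictly larger cones that still misses $S_+$ in this finite time slab. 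Once $\hat g$ is fixed, the topological hypersurface hypothesis needed to apply Stokes' theorem in Proposition~\ref{prop:energy-ineq} holds automatically.

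Next, for any $\tau'\in(t,\tau)$, I would apply Proposition~\ref{prop:energy-ineq} with $s=t$ and $t\leftrightarrow\tau'$ on the truncated cone $\Omega=J^-_{\hat g}(x)\cap([t,\tau']\times\Sigma)$, with caps $K_t,K_{\tau'}$. Both terms on the right-hand side of \eqref{eq:energy_ineq} vanish: the base integral $\int_{K_t}\tilde{\mathcal{E}}[u](t)=0$ because $K_t\cap\supp u(t)=K_t\cap\supp\dot{u}(t)=\emptyset$ (so $u(t),\dot u(t)$ vanish in an open neighbourhood of $K_t$, and by the $C(\RR;L^2_\loc)$-regularity the spatial derivatives $\partial_i u(t)$ vanish there as well in the $L^2$-sense); and $\int_\Omega|g|^{1/2}|\KG u|^2=0$ because $\Omega\cap\supp\KG u=\emptyset$. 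Since $\tilde{\mathcal{E}}[u]\geq 0$, the inequality forces $\tilde{\mathcal{E}}[u](\tau')\equiv 0$ on $K_{\tau'}$, hence $u(\tau')$ vanishes a.e.\ on $K_{\tau'}$. Letting $\tau'\nearrow\tau$ (and slightly enlarging the apex to fill a full spacetime neighbourhood of $x$, using that the $K_{\tau'}$ sweep out an open set containing $x$), we obtain $x\notin\supp u$.

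The principal obstacle is the metric-approximation step: ensuring that the smooth widening $\hat g\succ g$ can be chosen so that the enlarged backward cone from $x$ still avoids the closed set $S_+$. This is intuitively clear from the $C^0$-density of smooth widenings, but requires care because $\hat g\succ g$ enlarges cones in the wrong direction; one must exploit that the strict inequality $x\notin J^+_g(S_+)$ gives a positive "causal distance" which survives a sufficiently small widening on the compact slab $[t,\tau]\times K$ for suitable compact $K\subset\Sigma$. After this, the computation is a direct application of Proposition~\ref{prop:energy-ineq} and the non-negativity of $\tilde{\mathcal{E}}$.
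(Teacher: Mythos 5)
Your proposal follows the same strategy as the paper: the energy inequality of Prop.~\ref{prop:energy-ineq} applied on truncated backward cones of smooth approximating metrics $\hat g \succ g$, with vanishing contributions from the base cap and the source term forcing $\tilde{\mathcal{E}}[u]$ to vanish on the upper caps, then time reversal for $M_-$ and a union for the unsigned statement. The one place where you diverge is the passage from $\hat g$-cones back to $g$-cones, and it is exactly the step you flag as the principal obstacle. The paper does not select, for each point $x\notin J^+_g(S_+)$, a single metric $\hat g$ whose backward cone from $x$ still misses $S_+$; instead it proves the inclusion $\supp u \cap M_+ \subset J^+_{\hat g}(S_+)$ for \emph{every} smooth $\hat g \succ g$ (for such $\hat g$ the hypothesis $J^-_{\hat g}(x)\cap S_+=\emptyset$ is simply the definition of $x\notin J^+_{\hat g}(S_+)$, so no approximation argument is needed at that stage), and only at the very end intersects over all $\hat g$, invoking the identity $J^\pm_g(\Omega)=\bigcap_{\hat g\succ g}J^\pm_{\hat g}(\Omega)$, which is supplied by Prop.~1.5 of~\cite{chrusciel}. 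Your per-point selection of $\hat g$ is logically equivalent to the nontrivial inclusion in that identity, so your sketch via ``compactness plus continuity of the causal relation under $C^0$-perturbations'' is in effect an attempt to reprove that result; for merely continuous metrics this is genuinely delicate (causal futures need not be closed, and the causal relation is not obviously stable under widening the cones), so you should either cite the Chru\'sciel--Grant statement directly or, better, reorganize the argument as above so that the approximation issue is isolated in a single quotable set-theoretic identity rather than interleaved with the energy estimate.
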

\begin{proof}
  Note that, as a subset of $\Sigma$, we have $\supp \tilde{\mathcal{E}}[u](t)= \supp u(t)\cup \supp \dot{u}(t)$.
  We show that $u(x) = 0$ for any
  \begin{equation*}
    x \in M \setminus J^+_{\hat g}\bigl( (\supp \KG u \cap M_+) \cup \{t\}{\times}\supp \tilde{\mathcal{E}}[u](t) \bigr)
  \end{equation*}
  by an application of Prop.~\ref{prop:energy-ineq} for all smooth $\hat{g} \succ g$.
  For any such $x$, $J^-_{\hat g}(x)$ does not intersect $(\supp \KG u \cap M_+) \cup \{t\}{\times}\supp \tilde{\mathcal{E}}[u](t)$.
  Prop.~\ref{prop:energy-ineq} now shows that $u$ vanishes in $J^-_{\hat g}(x) \cap M_+$ and thus also at~$x$.

  We have thus shown that
  \begin{equation*}
    \supp u \cap M_\pm \subset J^\pm_{\hat g}\Bigl( (\supp \KG u \cap M_\pm) \cup \{t\}{\times}\bigl(\supp u(t) \cup \supp \dot{u}(t)\bigr)\Bigr)
  \end{equation*}
  for all smooth $\hat{g} \succ g$.
  It follows that~\eqref{eq:supp} holds, because a vector is $g$-causal if and only if it is $\hat{g}$-timelike for all smooth $\hat{g} \succ g$ by Prop.~1.5 of~\cite{chrusciel} and therefore
  \begin{equation*}
    J^\pm_g(\Omega) = \bigcap_{\hat{g} \mkern2mu\succ\mkern1mu g} J^\pm_{\hat g}(\Omega),
    \quad
    \Omega \subset M.
  \end{equation*}

  The embedding for~$J^-$ follows by time reversal and remaining embedding by the union of the embeddings for~$J^+$ and~$J^-$.
\end{proof}

\small

\end{document}